\newcommand\bsize{\beta}
\DeclareMathOperator*{\expectation}{\mathbb{E}}
\newcommand\unvrs{\mathcal{N}}
\newcommand\opt{\textsc{Opt}\xspace}
\newcommand\cmin{c_{\min}}
\newcommand\cmax{c_{\max}}
\newcommand\subcov{\textsc{SubmodularCover}\xspace}
\theoremstyle{plain}
\newtheorem{theorem}{Theorem}[section]
\newtheorem{definition}[theorem]{Definition}
\newtheorem{lemma}[theorem]{Lemma}
\newtheorem{claim}[theorem]{Claim}
\newtheorem{corollary}[theorem]{Corollary}
\newlength{\continueindent}
\newcommand*{\ALG@customparshape}{\parshape 2 \leftmargin \linewidth \dimexpr\ALG@tlm+\continueindent\relax \dimexpr\linewidth+\leftmargin-\ALG@tlm-\continueindent\relax}
\apptocmd{\ALG@beginblock}{\ALG@customparshape}{}{\errmessage{failed to patch}}
\def\thm@space@setup{%
	\thm@preskip=\parskip \thm@postskip=0pt
}
\newcommand{\ALGtikzmarkcolor}{black}
\newcommand{\ALGtikzmarkextraindent}{4pt}
\newcommand{\ALGtikzmarkverticaloffsetstart}{-.5ex}
\newcommand{\ALGtikzmarkverticaloffsetend}{-.5ex}
\newcounter{ALG@tikzmark@tempcnta}
\newcommand\ALG@tikzmark@start{%
	\global\let\ALG@tikzmark@last\ALG@tikzmark@starttext%
	\expandafter\edef\csname ALG@tikzmark@\theALG@nested\endcsname{\theALG@tikzmark@tempcnta}%
	\tikzmark{ALG@tikzmark@start@\csname ALG@tikzmark@\theALG@nested\endcsname}%
	\addtocounter{ALG@tikzmark@tempcnta}{1}%
}
\def\ALG@tikzmark@starttext{start}
\newcommand\ALG@tikzmark@end{%
	\ifx\ALG@tikzmark@last\ALG@tikzmark@starttext
	\else
	\tikzmark{ALG@tikzmark@end@\csname ALG@tikzmark@\theALG@nested\endcsname}%
	\tikz[overlay,remember picture] \draw[\ALGtikzmarkcolor] let \p{S}=($(pic cs:ALG@tikzmark@start@\csname ALG@tikzmark@\theALG@nested\endcsname)+(\ALGtikzmarkextraindent,\ALGtikzmarkverticaloffsetstart)$), \p{E}=($(pic cs:ALG@tikzmark@end@\csname ALG@tikzmark@\theALG@nested\endcsname)+(\ALGtikzmarkextraindent,\ALGtikzmarkverticaloffsetend)$) in (\x{S},\y{S})--(\x{S},\y{E});%
	\fi
	\gdef\ALG@tikzmark@last{end}%
}
\apptocmd{\ALG@beginblock}{\ALG@tikzmark@start}{}{\errmessage{failed to patch}}
\pretocmd{\ALG@endblock}{\ALG@tikzmark@end}{}{\errmessage{failed to patch}}
\newcommand{\expect}{\expectation\expectarg}
\DeclarePairedDelimiterX{\expectarg}[1]{[}{]}{%
	\ifnum\currentgrouptype=16 \else\begingroup\fi
	\activatebar#1
	\ifnum\currentgrouptype=16 \else\endgroup\fi
}
\newcommand{\expectover}[1]{\expectation_{#1}\expectarg}
\newcommand{\innermid}{\nonscript\;\delimsize\vert\nonscript\;}
\newcommand{\activatebar}{%
	\begingroup\lccode`\~=`\|
	\lowercase{\endgroup\let~}\innermid 
	\mathcode`|=\string"8000
}
\title{Competitive Algorithms for Block-Aware Caching}
\author{Christian Coester\thanks{University of Sheffield, Sheffield, UK. Email:
		\texttt{christian.coester@gmail.com}. Part of this research was done while the author was at Tel Aviv University, supported by the Israel Academy of Sciences and Humanities \& Council for Higher Education Excellence Fellowship Program for International Postdoctoral Researchers.}
	\and
	Roie Levin\thanks{Computer Science Department, Carnegie Mellon
				University, Pittsburgh, PA 15213. Email:
				\texttt{roiel@cs.cmu.edu}. Part of this research was done while the author was at the Technion. Supported in part by US-Israel BSF grant 2018352 and by ISF grant
				 2233/19 (2027511).}
	\and
	Joseph (Seffi) Naor\thanks{Computer Science Department,
		Technion, Haifa, Israel. Emails: \texttt{\{naor,ohad\}@cs.technion.ac.il}. Supported in part by US-Israel BSF grant 2018352 and by ISF grant
		 2233/19 (2027511).}
	\and
	Ohad Talmon\footnotemark[3]{} 
}
\date{}
\begin{document}
	
	\maketitle

\begin{abstract}
	Motivated by the design of real system storage hierarchies, we study the \emph{block-aware caching problem}, a generalization of classic caching in which fetching (or evicting) pages from the same block incurs the same cost as fetching (or evicting) just one page from the block. Given a cache of size $k$, and a sequence of requests from $n$ pages partitioned into given blocks of size $\bsize\leq k$, the goal is to minimize the total cost of fetching to (or evicting from) cache. This problem captures generalized caching as a special case, which is already NP-hard offline. We show the following suite of results:
	
	\begin{itemize}
		\item For the eviction cost model, we show an $O(\log k)$-approximate offline algorithm, a $k$-competitive deterministic online algorithm, and an $O(\log^2 k)$-competitive randomized online algorithm. 
		\item For the fetching cost model, we show an integrality gap of $\Omega(\bsize)$ for the natural LP relaxation of the problem, and an $\Omega(\bsize + \log k)$ lower bound for randomized online algorithms. The strategy of ignoring the block-structure and running a classical paging algorithm trivially achieves an $O(\bsize)$ approximation and an $O(\bsize \log k)$ competitive ratio respectively for the offline and online-randomized setting. 
		\item For both fetching and eviction models, we show improved bounds for the $(h,k)$-bicriteria version of the problem. In particular, when $k=2h$, we match the performance of classical caching algorithms up to constant factors.
	\end{itemize}
	Our results establish a strong separation between the tractability of the fetching and eviction cost models, which is interesting since fetching/eviction costs are the same up to an additive term for the classic caching problem. Previous work of Beckmann et al. (SPAA 21) only studied online deterministic algorithms for the fetching cost model when $k > h$. 
	
	Our insight is to relax the block-aware caching problem to a submodular covering linear program. The main technical challenge is to maintain a competitive fractional solution to this LP, and to round it with bounded loss, as the constraints of this LP are revealed online. We hope that this framework is useful going forward for other problems that can be captured as submodular cover.
\end{abstract}

	\section{Introduction}

Caching (also known as paging) has been extensively studied since the early days of
online computation and competitive analysis, establishing itself as a cornerstone problem in this field, see e.g.,
~\cite{ST85,F+,Y91,Y98,GS,BBN07,BBN08,ACN,AAK99,BBK,BFT96,I98,I97,I02}. Recent years have witnessed increased activity on non-standard caching models, e.g., elastic caches~\cite{GuptaK0P19}, caching with time windows~\cite{GKP20}, caching
with dynamic weights \cite{EvenMR18}, caching with machine learning predictions~\cite{LykourisV18}, and writeback-aware caching \cite{beckmann2020writeback, bansal2021efficient}. Many of the recent developments in competitive analysis, e.g., the
online primal-dual method, projections, and mirror descent~\cite{BNsurvey, BCN14, BCLLM18} are all rooted in online paging. We study here {\em block-aware caching}, a non-standard caching model studied recently, as well as its generalizations.

In the (classic) weighted paging problem there is a universe of $n$ pages, a cache that can hold up to $k$ pages, and each page is associated with a weight (fetch cost). At each time step a page is requested; if the requested page
is already in cache then no cost is incurred, otherwise
the page must be fetched into the cache, incurring a cost equal to its weight.
The goal is to minimize the total cost incurred. This problem is well studied and understood, and we briefly mention the main results known for it. 

Sleator and Tarjan \cite{ST85}, in their seminal paper on competitive analysis,
showed that any deterministic algorithm is at least $k$-competitive, and that LRU (Least Recently Used) is precisely $k$-competitive
for unweighted paging (i.e., all weights are equal). 
The $k$-competitive bound was later generalized to weighted paging as well \cite{CL91,Y94}.
When randomization is allowed, Fiat et al ~\cite{F+} gave the  
elegant Randomized Marking algorithm for unweighted paging, which is $\Theta(\log k)$-competitive against an oblivious adversary. 
For weighted paging, Bansal et al.~\cite{BBN07} gave an $O(\log k)$-competitive randomized algorithm 
using the online primal-dual framework~\cite{BNsurvey, AAABN03}. It uses a two-step approach. First, a deterministic
competitive algorithm is designed for a fractional version of the problem.
Then, a randomized online algorithm is obtained
by {\em rounding} the deterministic fractional solution online.

\paragraph{Block-aware caching.}
Real storage systems operate by constructing a hierarchy of memory levels, starting from a very fast and small memory (e.g., an SRAM cache) to a very large and slow memory (e.g., flash or disk).
The data items in each level are typically organized in {\em blocks}, and fetching (or evicting) data items from the same block incurs the same cost as fetching (or evicting) just a single item from the block. Using fetching costs models scenarios in which data is read-only; using eviction costs models scenarios in which data must be written to slow memory upon eviction, and the writing cost dominates the reading cost (see e.g. \cite{BGHM20,bansal2021efficient}). 

Thus, a natural question is how can one optimize cache performance by taking advantage of granularity changes across different storage hierarchy levels. 
This question was recently raised by Beckmann et al. \cite{beckmann2021brief}, who defined the {\em block-aware caching problem}, generalizing the classic paging problem, as follows. 
Given a cache of size $k$, and a sequence of requests from $n$ pages that are partitioned into given blocks of size $\bsize\leq k$, minimize the total cost of fetching (or evicting) from the cache so as to serve the requests\footnote{We note that Beckmann et al. \cite{beckmann2021brief}  considered block-aware caching only in the fetching cost model.}.

Block-aware caching also arises in web and cloud settings, where data items can be aggregated into chunks (i.e., blocks) of data, such that accessing a whole chunk incurs the same cost as accessing just a single item. Consider a distributed cluster of servers, where a common cache of data items is maintained. One such example is the ZFS distributed file system that
aggregates different devices into a single storage pool acting as an arbitrary data store. 
When accessing a server for a specific data item, the main cost paid (e.g., latency) is for accessing the server. The notion of a block of data in this setting corresponds to the largest chunk of data items that can be fetched from (or evicted to) a server, while maintaining that the cost of this operation is dominated by the cost of accessing the server. 
Web caching is another example of block-aware caching. Consider a content delivery network (CDN) that maintains a cache of data items and suppose the CDN connects to a website so as to access a data item (see, e.g., \cite{HasanGDS14,SongBLL20}). Typically, TCP/IP  provides a time window for connecting to the website and accessing the data item. Hence, it might be beneficial to fetch (or evict) many data items that belong to the website, and not just the particular data item that is currently accessed. Thus, the notion of a block of data in this setting corresponds to the maximum number of such data items that can be sent without increasing the travel time. 

In the generalized caching problem \cite{BBN08,ACER19}, pages are associated with both a size and a cost. At any point of time, the sum of the sizes of the pages in the cache cannot exceed the cache size. In the offline setting, generalized caching is known to be NP-hard, and in the online setting the known competitive factors for generalized caching \cite{BBN08,ACER19} match those of weighted caching.
It is not hard to see that block-aware caching captures generalized caching as a special case. Replace a page $p$ of size $s$ by a block $B$ of size $s$ containing page $p$ partitioned into unit size ``slices". The cost of accessing each slice is equal to the cost of $p$. Now, a request to page $p$ is replaced by many requests to the slices in $B$. Thus, an optimal solution to the block-aware caching problem generated has to fetch the full block $B$ into the cache.

\paragraph{Eviction and fetching costs.}
In classic paging, costs can be associated with either evicting or fetching pages. Clearly, for a given request sequence, optimal eviction and fetching costs of serving the requests can differ by at most an additive constant that only depends on the initial contents of the cache. However, this is not the case for block-aware caching, as optimal eviction and fetching costs can differ significantly, separating the two cost models. (We provide an example in Section \ref{prelim}.) As discussed above, the two cost models are practically motivated for block-aware caching, and we thus study both of them in this paper. We note that in the eviction cost model we are able to circumvent known lower bounds \cite{beckmann2021brief} that hold in the fetching cost model.

\subsection{Results and Techniques}

Observe that if an algorithm is $r$-competitive for classical paging, then it is at most $\bsize \cdot r$-competitive for block-caching in both fetching/eviction cost models; the reason is simply that \opt can be simulated by a classical paging algorithm that performs any single batched fetch/eviction in at most $\bsize$ rounds. With this in mind, our goal in this work is to beat this trivial linear dependence on $\bsize$.

Indeed, for the eviction cost model, we give the first set of algorithms avoiding a trivial multiplicative $\beta$ overhead over their classical paging counterparts. We also give $(h,k)$-bicriteria\footnote{In $(h,k)$ paging, an online algorithm with cache size $k$ competes against an offline cache of size $h$, where $k>h$.} algorithms for both fetching and eviction cost models, which we in turn use to adapt the lower bound of \cite{beckmann2021brief} for the fetching cost model to randomized algorithms.

\paragraph{Eviction cost.} We start in \cref{sec:eviction} with our main contributions: competitive algorithms for the eviction cost model. 
We show the following theorem.
\begin{theorem}
	For the block-aware caching problem with eviction cost, there exist:
	\begin{itemize}
		\item a $k$-competitive deterministic online algorithm.
		\item an $O(\log^2 k)$-competitive randomized (integral) online algorithm.
		\item an $O(\log k)$-approximate randomized offline algorithm.
	\end{itemize}
\end{theorem}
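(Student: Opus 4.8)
The unifying idea is to relax block-aware caching with eviction cost to a \emph{submodular covering linear program} and to derive all three guarantees around this relaxation. Index time by request steps and, for every page $p$ and step $t$, introduce a variable $e_{p,t}\ge 0$ measuring the amount of eviction activity on $p$ between the requests at times $t$ and $t+1$; let $x_{p,t}:=\min\{1,\sum_{s\in(t_0(p,t),\,t]}e_{p,s}\}$ be the extent to which $p$ is outside the cache at time $t$, where $t_0(p,t)$ is the last request to $p$ at or before $t$ (the $\min$ is linearized by the box constraint $x_{p,t}\le 1$). With $A(t)$ the set of pages requested so far, the cache-size constraint is the covering constraint $\sum_{p\in A(t)}x_{p,t}\ge |A(t)|-k$, issued online at every step $t$ and never revised. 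The cost of an integral schedule is $\sum_{B}\sum_t w(B)\,\mathbbm{1}\bigl[\exists\,p\in B:\,e_{p,t}>0\bigr]$, i.e.\ each block is charged $w(B)$ per step in which some page of it is evicted; relaxing each indicator to its convex closure $\max_{p\in B}e_{p,t}$ gives the convex (piecewise-linear) objective $\sum_{B}\sum_t w(B)\max_{p\in B}e_{p,t}$, which is exactly the Lov\'asz extension of the monotone submodular batched-eviction cost. Since this extension agrees with the true cost on integral points and is minimized over a polytope containing all integral schedules, the LP optimum lower-bounds $\opt$. Given (i) an online algorithm that maintains a feasible fractional $e$ of cost $O(\log k)$ times the LP optimum, and (ii) a procedure that rounds any feasible fractional $e$ online to a random integral cache of expected cost $O(\log k)$ times that of $e$, the randomized online bound follows by composition (whence $O(\log k)\cdot O(\log k)=O(\log^2 k)$), while the randomized offline bound follows because offline the relaxation is a polynomial-size linear program (the $\max$ linearizes via auxiliary per-block-per-step variables) that can be solved to optimality, after which (ii) yields an $O(\log k)$-approximation.

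For (i) I would use a continuous primal--dual update in the spirit of the fractional weighted caching algorithm of~\cite{BBN07}. While the covering constraint at the current step $t$ is violated, grow each ``live'' variable $e_{p,s}$ (those with $p\in A(t)$ and $x_{p,t}<1$) at a rate inversely proportional to the marginal cost of increasing it under the objective --- that is, to the appropriate coordinate of a subgradient of the corresponding $\max_{p'\in B}e_{p',\cdot}$ term --- together with the standard multiplicative ``$+1/k$'' additive boost so variables can leave $0$; simultaneously raise the dual variable of the violated constraint. The analysis bounds the primal cost by $O(\log k)$ times the dual objective. The new difficulty relative to weighted caching is that the objective is neither linear nor differentiable, so I would argue through the Fenchel conjugate of the positively homogeneous cost, fixing a consistent subgradient at each instant and charging each block's cost increase to the single page currently attaining its intra-block maximum; saturated box constraints are frozen exactly as in weighted caching. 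I expect this to be one of the two principal obstacles: obtaining $O(\log k)$ rather than a larger factor requires the additive boost to be calibrated against \emph{per-block}, not per-page, marginal costs.

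Step (ii) is the other principal obstacle and is where the extra logarithmic factor enters. Given the fractional eviction schedule I would maintain a randomized integral cache whose per-page ``out'' probabilities remain, up to constants, controlled by the fractional $x_{p,t}$, along the lines of online rounding for covering LPs, but with two modifications dictated by blocks. First, to be charged $w(B)$ only once when several pages of $B$ leave in the same step, the within-block evictions at a fixed step must be positively correlated; I would therefore sample in two stages, first deciding whether block $B$ is touched at step $t$ and then which of its pages to drop. Second, because the integral cache matches the fractional mass only approximately, the sampling probabilities must be scaled up by a $\Theta(\log k)$ factor and truncated, so that a union bound over the at most $k$ cached pages certifies that the cache never overflows while the expected cost inflates by only $O(\log k)$. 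Care is needed because re-fetches are free: a page may oscillate in and out of cache, and each eviction incurs a fresh block charge, so block charges must be amortized against the fractional objective over the whole horizon rather than step by step.

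Finally, the $k$-competitive deterministic algorithm I would obtain directly, by a potential-function argument in the tradition of $k$-competitive weighted caching rather than through the LP. The algorithm keeps a credit on each cached page; when space is needed it lowers credits and evicts the page(s) whose credit hits $0$, breaking ties so as to evict, whenever possible, pages of a block it is already evicting from in the same step, making those extra evictions free. The potential is the classical weighted-caching potential augmented by a term that credits the algorithm, for each block, for all but one of the pages that $\opt$ has evicted from that block but the algorithm still holds --- this term captures precisely the amount by which batching can make $\opt$ cheaper than its page-weight cost. Bounding the change of this potential on a request and on an eviction step yields the ratio $k$; I expect this part to be comparatively routine, the only real care being the bookkeeping for partially-evicted blocks.
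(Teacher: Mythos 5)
The relaxation you write down --- variables $e_{p,t}$ and $x_{p,t}$, covering constraints $\sum_{p\in A(t)}x_{p,t}\ge|A(t)|-k$, and objective $\sum_{B,t}w(B)\max_{p\in B}e_{p,t}$ --- is, after linearizing the max with a per-block-per-step variable, exactly the ``na\"{\i}ve'' LP \eqref{eq:naive_lp} with $\sigma=1$. Section~\ref{sec:basic_LP_formulation} of the paper exhibits an instance (two blocks of size $\bsize$, $n=2\bsize$, $k=2\bsize-1$, alternating block requests) on which the fractional optimum of this LP is a $\Theta(\bsize)$ factor below the integral optimum, \emph{in the eviction-cost model too}. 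Your step (ii) therefore cannot succeed: no rounding scheme can convert a fractional solution of cost close to LP-opt into an integral cache policy of cost $O(\log k)\cdot$LP-opt when the integrality gap is $\Omega(\bsize)$ and $\bsize$ can be as large as $k$. Making the \emph{objective} the Lov\'asz extension of the batched-eviction cost does not help; that extension is precisely what produces the na\"ive LP. The submodularity the paper needs lives in the \emph{constraints}, not the objective: it introduces a submodular function $f_\tau(S)$ (the number of missing pages at time $\tau$ under a set of ``flushes'' $S$, \emph{truncated at $n-k$}) and replaces the single covering constraint per step with Wolsey's family of inequalities $\sum_{(B,t)}f_\tau\bigl((B,t)\mid S\bigr)\cdot\phi_B^t\ge n-k-f_\tau(S)$ for all sets $S$ of flushes and all $\tau$. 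The truncation at $n-k$ is exactly what kills the bad example, because it prevents the LP from double-counting space saved by overlapping evictions. Without this strengthening, the remainder of your plan (the primal--dual fractional algorithm at $O(\log k)$ and the two-stage correlated rounding at another $O(\log k)$) cannot be correct as stated even if each piece were individually flawless.

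\textbf{A second, smaller divergence.} For the deterministic $k$-competitive bullet, you propose a direct potential-function argument that sidesteps the LP entirely, augmenting the classical weighted-caching potential with a per-block credit for pages $\opt$ has evicted but your algorithm still holds. The paper instead runs a primal--dual argument against the strengthened submodular-cover LP (Section~\ref{sec:deterministic}), charging each flush to the dual variables that became tight and using the cap $n-k$ and the cache-size $k$ to bound the charge. Your potential may or may not close; the tricky case is when $\opt$ batches several pages of a block across different steps while your algorithm evicts them all at once (or vice versa), and you would need to show the credit term does not go negative or overcount. But since this bullet is independent of the LP choice, it is at least not ruled out by the integrality gap; it is simply unverified and genuinely different from the paper's route.
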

In fact, we study a more general version than the one introduced in \cite{beckmann2021brief} in which every block $B$ may have a separate cost $c_B$. For this more general weighted setting we get competitive ratios of $k$, $O(\log k \log (k \Delta))$ and $O(\log (k\Delta))$  for the deterministic online, randomized online, and randomized offline settings respectively (where $\Delta$ is the aspect ratio, i.e., the maximum cost ratio between any two blocks).

A first main technical ingredient is a linear programming relaxation for block-caching. It is tempting to use a formulation with a variable $x_p^t$ for each page $p$ and time $t$ indicating whether $p$ is present in cache in step $t$. However, in this case the eviction cost becomes a complicated non-linear function of the $x_p^t$. Instead, we define variable $\phi_B^t$ for each block $B$ and time step $t$ indicating whether we evict $B$ at $t$. This is reminiscent of the linear program for classical paging of \cite{BBN07} in which every variable represents whether a page is present in cache between two subsequent requests to a page, only that it may now be necessary to evict pages at any point between subsequent requests. 

A na\"{\i}ve linear programming formulation has an integrality gap of $\bsize$ (see \cref{sec:basic_LP_formulation}): this is unsurprising since the na\"{\i}ve LP exhibits this gap even for the special case of generalized paging. To get around this, we express feasibility as the constraint that a particular sequence of monotone, submodular functions is maximized. We then make use of good (albeit exponential size) LP relaxations for these submodular set function constraints, which were discovered by Wolsey \cite{Wolsey1982}. Our formulation may be viewed as a generalization of the strengthened LP relaxation due to \cite{BBN08} for generalized caching, which used the so-called knapsack cover (KC) inequalities. We first use the relaxation to give a $k$-competitive deterministic online algorithmic in \cref{sec:deterministic}. 

Next, we develop an $O(\log k)$-competitive fractional algorithm in \cref{sec:fractional}, followed by an $O(\log \Delta k)$-competitive online randomized rounding procedure in \cref{sec:rounding}. Combined, these results imply our $O(\log k \log \Delta k)$-competitive randomized online algorithm. It is natural to try to adapt the continuous online primal-dual framework of \cite{BBN07,BBN08}; however, owing to the increased complexity of our LP, there are several technical roadblocks. For one, our formulation now has primal variables corresponding to the eviction of every block at every point in time, and a na\"{\i}ve  adaptation of the continuous dynamics of \cite{BBN08} incurs loss that depends on the length of the request sequence. Nevertheless, we show how to carefully set the rate of increase of primal variables (with respect to the dual rate of increase) to construct a feasible solution with our claimed guarantee. For convenience, we do not present the fractional algorithm as online in the strict sense as we allow it to change decisions made in the past. However, it has the crucial property that it only increases LP variables, which suffices for the online rounding procedure.

For the rounding step, we forego maintaining an explicit distribution over cache states as in previous work \cite{BBN07,BBN08,ACER19}, since it is unclear how to control the cost of the rebalancing stage when updating the distribution in each time step. Instead, we use the method of random rounding with alterations in a similar spirit to \cite{bansal2021efficient}. A key difference in our work is the added difficulty of working with the submodular cover formulation: this introduces additional challenges to the analysis of the rounding (as well as to the maintenance of the fractional solution). We make use of recent work on online submodular cover \cite{gupta2020online}; interestingly we are able to charge our alteration cost to the fractional \emph{fetching} cost, even though this may be a factor $\bsize$ larger than the eviction cost.

\paragraph{Fetching cost.} We turn in \cref{sec:loading} to the fetching cost model,
where we show strong lower bounds, implying that the integrality gap of $\Omega(\bsize)$ of the natural LP formulation cannot be circumvented. We prove the following theorem for $(h,k)$ block-aware caching\footnote{Beckmann et al. \cite{beckmann2021brief} showed several deterministic lower bounds on the competitive factor achievable for $(h,k)$ block-aware caching, when $k\geq h+\beta$-1.}. 
\begin{theorem}
	When $k=O(h)$, no randomized online algorithm has competitive ratio better than $\Omega(\bsize + \log k)$ for block aware caching with fetching costs.
\end{theorem}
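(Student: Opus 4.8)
The plan is to apply Yao's minimax principle: to lower-bound the competitive ratio of every randomized online algorithm against an oblivious adversary, it suffices to produce a distribution over request sequences on which every \emph{deterministic} online algorithm with a cache of size $k$ has expected cost at least $\Omega(\bsize+\log k)$ times the expected optimal cost with a cache of size $h$. Since $\max\{a,b\}\ge\tfrac12(a+b)$, it is enough to build two families of hard instances separately: one forcing ratio $\Omega(\bsize)$ and one forcing ratio $\Omega(\log k)$; the adversary then uses whichever is larger for the given parameters, and a one-line case split ($\bsize\ge\log k$ versus $\bsize<\log k$) shows the result is always $\Omega(\bsize+\log k)$. I would work in the meaningful regime $h+\bsize-1\le k$ with $k-h+1=\Theta(\bsize)$, which (given $\bsize\le h$) lies within $k=O(h)$ and is precisely the range where the $(h,k)$-bicriteria algorithm of \cref{sec:loading} does not already guarantee a small ratio.

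For the $\Omega(\log k)$ family, since classical $(h,k)$-paging embeds into block-aware caching (request only one page from each block), I would directly invoke the classical randomized paging lower bound (the Randomized-Marking argument of Fiat et al., cf.\ \cite{F+,Y91,BBN07}), which forces ratio $\Omega\!\left(\log\tfrac{k}{k-h+1}\right)=\Omega\!\left(\log(k/\bsize)\right)$ in our regime. When $\bsize<\log k$ this is $\Omega(\log k)$ since $k/\bsize>k/\log k$; when $\bsize\ge\log k$ the $\Omega(\bsize)$ family below already yields $\Omega(\bsize+\log k)$. The only care needed is that the Marking argument be run against an oblivious adversary (standard), together with the observation, obtained through the bicriteria relationship, that beating $\Omega(\log k)$ in the regime $k=O(h)$ would contradict the $(h,k)$-paging lower bound.

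The crux, and the step I expect to be hardest, is the $\Omega(\bsize)$ family, which must hold against a \emph{fixed} distribution rather than an adaptive adversary, so the deterministic construction of Beckmann et al.\ \cite{beckmann2021brief} has to be genuinely randomized. The idea is to fix one block $B$ of size $\bsize$ together with an auxiliary pool of pages, and to run many phases; in each phase the adversary requests the pages of $B$ at well-separated times---consecutive $B$-requests separated by enough auxiliary requests---in a uniformly random order (or a random subset). Knowing the phase ahead of time, the offline optimum fetches all needed pages of $B$ in a single batched operation and keeps them for the phase, paying $O(1)$; a deterministic online algorithm, unable to predict the next page of $B$, faces a dilemma: either it refetches the pages of $B$ one at a time, paying $\Omega(\bsize)$ per phase, or it speculatively pre-fetches the whole block, tying up $\bsize$ cache slots, which the auxiliary-request pattern is designed to punish by an equal amount. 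The delicate points are (i) designing the auxiliary pattern and exploiting the gap $h<k$ so that the offline algorithm can serve the auxiliary requests and retain $B$ simultaneously with only $h$ slots while the online algorithm cannot escape the dilemma even with $k$ slots, and (ii) pushing the bound through the adaptive-to-oblivious transition, where (as in \cref{sec:loading}) the $(h,k)$-bicriteria algorithm enters: a randomized online algorithm beating $\Omega(\bsize)$ could be composed with it to yield a deterministic algorithm contradicting the Beckmann et al.\ bound. Running both families and taking the larger of the two ratios completes the proof.
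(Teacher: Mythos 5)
Your proposal presents two routes, and the one you treat as primary is not the paper's. The paper does \emph{not} invoke Yao's principle or construct any new hard distribution: it never needs an $\Omega(\bsize)$ lower bound against oblivious adversaries built from scratch. Instead, the entire $\Omega(\bsize)$ component rests on the derandomization idea you mention only at the very end and in passing. Concretely, the paper observes that any randomized online algorithm $\mathcal{R}$ induces \emph{online} a feasible fractional solution $x$ to the naive LP \eqref{eq:naive_lp}---set $x_p^t$ to be the probability that $\mathcal{R}$ holds page $p$ at time $t$, a quantity computable from the prefix of requests seen so far---and the LP cost of $x$ equals the expected cost $c_\mathcal{R}$. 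Feeding $x$ into the deterministic online bicriteria rounding of \cref{thm:hk_rounding} yields a \emph{deterministic} online algorithm with cache $2k$ and cost at most $2c_\mathcal{R}$. Applying the deterministic lower bound of \cite{beckmann2021brief} (\cref{thm:charlie_lb}) with cache $2k$ against offline $h$ and solving for $c_\mathcal{R}$ gives $\Omega(\bsize)$ when $k=O(h)$. Combined with the classical $\Omega(\log k)$ randomized paging lower bound (your $\Omega(\log k)$ branch, which matches the paper), this yields $\Omega(\bsize+\log k)$. The gap in your writeup is that your main line of attack---genuinely randomizing the Beckmann et al.\ construction---is left entirely as a sketch with ``delicate points'' unresolved, while the step that actually closes the proof (marginals of $\mathcal{R}$ form a fractional LP solution, round it deterministically online) is stated only as an aside without the crucial observation that the randomized algorithm \emph{is} a fractional solution to the naive LP. You should promote the derandomization route to the main argument and drop the Yao machinery, which buys you nothing here.
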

Our main idea here is an online deterministic rounding procedure for fractional algorithms that incurs constant blowup in both cache usage and cost. This implies an online derandomization procedure for any randomized algorithm, which in turn strengthens the lower bounds for deterministic algorithms of \cite{beckmann2021brief} to apply to randomized algorithms as well. Our lower bound implies that beating the trivial linear dependence on $\bsize$ is \emph{not} possible for the fetching cost model.

Our deterministic rounding procedure immediately implies improved bounds for the offline $(h,k)$ block-aware caching problem. In particular, when $k=2h$, we match the performance of classical caching algorithms up to constant factors.

\section{Model and Preliminaries}
\label{prelim}

\subsection{Problem Definition}
In the block-aware caching problem, there is a cache of size $k$ and $n$ pages which are partitioned into blocks. Let $\mathcal{B}$ be the partition of the pages into blocks. Each block contains at most $\bsize$ pages, for some $\bsize\in[k]$. For a block $B \in \mathcal{B}$, we denote by $c_B>0$ its cost. At each time-step $t$, a page $p_t$ is requested. To serve the request, the page $p_t$ must be fetched into the cache if it is missing from the cache. The goal is to obtain a feasible cache policy while minimizing the total cost. We consider two different cost functions.

\paragraph{Eviction cost model.}
In this model, fetching into the cache is free, while evictions have a cost that can be aggregated: Evicting any subset $A$ of a block $B$ at a time-step has a cost of $c_B$. The goal is to minimize the total eviction cost.

\paragraph{Fetching cost model.}
In this model, evicting pages from the cache is free, while fetching of pages has a cost that can be aggregated. Fetching of any subset $A$ of a block $B$ at a time-step has a cost of $c_B$. The goal is to minimize the total fetching cost.

Unlike classic paging and its variants, the fetching cost and eviction cost models are not equivalent in block-aware caching.  We show that the optimal fetching and eviction costs for the same request sequence may be off by a factor of $\bsize$ (in either direction!), and this bound is tight. 

\begin{restatable}{claim}{betaoff}
	\label{claim:beta_off}
	There exist instances of block-aware caching for which the optimal fetching cost is $\bsize$ larger than the optimal eviction cost, and there exist instances where the optimal eviction cost is $\bsize$ larger than the optimal fetching cost.
\end{restatable}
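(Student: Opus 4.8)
The plan is to exhibit two small explicit instances, one witnessing each direction of the separation, and compute the optimal fetching and eviction costs by hand (or by a short argument). Both instances will use a single block $B$ of size $\bsize$, say with pages $p_1, \dots, p_\bsize$ and cost $c_B = 1$, plus enough unit-cost singleton blocks to fill the cache. The key asymmetry to exploit: a single batched eviction of all of $B$ costs $1$ regardless of how many pages leave, and symmetrically a single batched fetch of all of $B$ costs $1$; but if the request sequence forces the pages of $B$ to be brought in or pushed out \emph{one at a time}, interleaved with requests that disturb the rest of the cache, the batching advantage is lost and the cost scales with $\bsize$.

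\textbf{Direction 1 (fetching $\gg$ eviction).} First I would set $k = \bsize$ so the cache holds exactly one block's worth of pages, and introduce $\bsize$ singleton pages $q_1, \dots, q_\bsize$ in their own blocks. Consider the request sequence that first loads all of $B$ (requests $p_1, \dots, p_\bsize$), then cycles $q_1, p_1, q_2, p_2, \dots$ — i.e., repeatedly alternates a fresh $q_i$ with a page of $B$, forcing a single $q_i$ into the cache and then immediately re-requesting a $p_j$. In the eviction model, the good strategy is: when $q_i$ arrives, evict all of $B$ at once (cost $1$ total, amortized over the whole phase since we only do this a constant number of times per full pass), keep $q_i$, then when the $p_j$'s come back fetch them for free. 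One full pass costs $O(1)$ in evictions. In the fetching model, however, every time a $p_j$ is re-requested after having been evicted to make room for some $q_i$, we must pay to fetch it, and because the $q_i$'s are interleaved one at a time we cannot batch these fetches — so one pass costs $\Omega(\bsize)$. Repeating the pass $m$ times gives optimal eviction cost $O(m)$ versus optimal fetching cost $\Omega(\bsize m)$, and letting $m \to \infty$ washes out the additive initial-cache term.

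\textbf{Direction 2 (eviction $\gg$ fetching).} This is the mirror image: I would design a sequence where $B$ must be \emph{emptied} one page at a time. Take $k = 2\bsize - 1$ (or some value slightly larger than $\bsize$) with $B$ plus $\bsize$ singletons $q_1, \dots, q_\bsize$. Start with $B$ fully in cache; then request $q_1$, forcing one eviction from somewhere; if the only slack is within $B$'s footprint, the adversary can force exactly one page of $B$ out, pay one eviction of cost $1$ (a batched eviction of a \emph{single} page still costs the full $c_B = 1$), and then request $p_1$ back — but now it can be re-fetched essentially for free if the fetch of $p_1$ can be batched with... actually here the roles flip: in the fetching model a whole pass can fetch all of $B$ in one batched operation of cost $1$, while in the eviction model each of the $\bsize$ individual evictions forced during the pass costs $1$ separately, for $\Omega(\bsize)$ per pass. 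The precise gadget needs the singletons arranged so that eviction from $B$ happens $\bsize$ times per pass but the compensating re-fetches all fall in one contiguous batch; I would use the structure symmetric to Direction 1, swapping the roles of "fetch" and "evict." Repeating $m$ times and taking $m \to \infty$ gives the claimed ratio.

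\textbf{The main obstacle} is getting the gadgets exactly right so that the \emph{optimal} cost in the cheap model is genuinely $O(m)$ and not secretly $\Omega(\bsize m)$ — i.e., ruling out clever alternative strategies for the offline optimum in the cheap direction — and simultaneously arguing a matching $\Omega(\bsize m)$ lower bound in the expensive model. The lower bound in the expensive model is the more delicate half: I would argue it via an exchange/potential argument showing that any feasible policy must, over each pass, perform $\Omega(\bsize)$ distinct batched operations because the disturbing singleton requests are spaced so that no two of the required $B$-page operations can ever be coalesced into one time-step. The tightness claim (that $\bsize$ is the worst possible gap, in both directions) follows immediately from the observation already noted in the excerpt: \opt in either model can be simulated by the other model with at most a factor-$\bsize$ blowup, since any single batched operation on a block touches at most $\bsize$ pages and hence can be replaced by at most $\bsize$ unit operations.
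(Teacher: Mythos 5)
Your high-level intuition --- exploit the fact that in one model the forced operations can be coalesced into a single batched block operation while in the other they cannot --- is exactly the right one, and it is the same idea the paper uses. However, your concrete gadget for Direction~1 does not actually produce a separation, and since Direction~2 is explicitly left as ``symmetric to Direction~1,'' the flaw propagates.

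Concretely, with one block $B = \{p_1,\dots,p_\bsize\}$, $\bsize$ singletons $q_1,\dots,q_\bsize$, and $k=\bsize$, \emph{both} cost models can exploit batching to the same degree. In any pass, if a strategy evicts (equivalently, lets lapse and later re-fetches) exactly $s$ of the singletons, then at most $s$ slots are ever available for pages of $B$, so the $p_j$'s must cycle through the cache in groups of size at most $s$; this forces about $\bsize/s$ batched operations on $B$ per pass. The per-pass cost in \emph{either} model is therefore about $s + \bsize/s$, which is minimized at $s\approx\sqrt\bsize$, giving $\Theta(\sqrt\bsize)$ per pass in \emph{both} models --- no $\bsize$-factor gap. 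The root cause is that you have only a single multi-page block, so the batching advantage is symmetric: whichever model can batch operations on $B$, the other can batch the complementary operations on $B$ as well.

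The paper avoids this by using \emph{many} blocks. It takes $2\bsize$ blocks of size $\bsize$ ($\bsize$ ``$P$'' blocks and $\bsize$ ``$Q$'' blocks) and cache size $k=\bsize^2$, with rounds $i=1,\dots,\bsize$. In round $i$, the first $\bsize-i$ pages of each $P$ block and the first $i$ $Q$ blocks in full are requested repeatedly, so the unique sensible cache content shifts by exactly one page per $P$ block and one full $Q$ block each round. The crucial asymmetry: the $\bsize$ pages leaving the cache each round lie in $\bsize$ \emph{distinct} blocks (so their evictions cannot be batched, cost $\bsize$), while the $\bsize$ pages entering lie in \emph{one} block (so the fetch is a single batched operation, cost $1$). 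Summing over $\bsize$ rounds gives eviction cost $\bsize^2$ versus fetching cost $\bsize$. The opposite direction is obtained by starting from the $Q$ blocks and requesting the complementary pages each round, which flips the roles of the many-block, one-page-each operation and the one-block, many-pages operation. This directional asymmetry across blocks is precisely the ingredient your single-block gadget is missing, and it is what resolves the ``main obstacle'' you correctly identified but left open.
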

See \cref{sec:extra_proofs} for the proof. 

For a page $p$, define $B(p)$ to be the block containing $p$, and let $r(p,t)$ be the time of the last request to $p$ up until (and including) time $t$; if there is no such request, then $r(p,t):=-\infty$. Define the aspect ratio $\Delta :=  \cmax/ \cmin$ where $\cmax := \max_{B \in \mathcal{B}} c(B)$ and $\cmin := \min_{B \in \mathcal{B}} c(B)$. For convenience we will use the notation $[\ell] = \{1, \ldots, \ell\}$ and $[\ell]_0 = \{0, 1, \ldots, \ell\}$.

Our work relies on the theory of submodular functions which we introduce now for completeness.

\paragraph{Submodularity.} 
We consider set functions of the form $f: 2^{\unvrs} \rightarrow \mathbb{R}^+$, where $\unvrs$ is a set. For $A \subseteq B \subseteq \unvrs$, let $f(A \mid B) := f(A \cup B) - f(B)$. For convenience, if $A=\{v\}$ is a singleton we also write $f(v\mid B):=f(\{v\}\mid B)$. We call $f$ \textit{submodular} if for all $v\in\unvrs$, $A\subseteq B\subseteq\unvrs$ we have $f(v\mid A)\ge f(v\mid B)$. A simple result is that if a set function $f$ is submodular, then $f(\,\cdot\mid B)$ is also submodular for any
$B \subseteq \unvrs$. 
If for all $A \subseteq B \subseteq \unvrs$ we have that $f(A) \leq f(B)$, then we say that $f$ is \textit{monotone}.

\paragraph{Submodular Cover.}
Wolsey \cite{Wolsey1982} introduced the following problem known as \emph{submodular cover}. Given a monotone, submodular function $f$ over ground set $\mathcal{N}$, and cost function $c: \mathcal{N} \rightarrow \mathbbm{R}^+$ on the ground set, output a minimum cost subset $S \subseteq \mathcal{N}$ such that $f(S) \geq f(\mathcal{N})$. Wolsey gave the following LP relaxation:
\begin{align}
	\label{eq:wolsey_lp}
	\begin{array}{|rl|}
		\hline & \\
		\min & \displaystyle \sum_{v \in \mathcal{N}} c(v) \cdot x_v \\
		\text{subject to} & \\ & \\
		\forall S\subseteq \mathcal{N}: & \displaystyle \sum_{v \not \in S} f(v \mid S) \cdot x_v \geq f(\mathcal{N}) - f(S) \\ 
		\forall v \in \mathcal{N}: & x_v \geq 0  \\  & \\ \hline
	\end{array}
\end{align}
The constraints of this LP may be viewed as knapsack cover inequalities for a linearized version of the function $f$. Wolsey proved:
\begin{claim}[Proposition 2 of \cite{Wolsey1982}]
	\label{lem:wolsey_integerpts}
	A set $S$ has $f(S) = f(\mathcal{N})$ if and only if $\chi_S$, the
	characteristic vector of $S$, is a feasible integer solution to
	\eqref{eq:wolsey_lp}.
\end{claim}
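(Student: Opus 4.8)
The statement is an "if and only if" characterizing when the characteristic vector $\chi_S$ is feasible for the LP \eqref{eq:wolsey_lp}. The plan is to prove each direction separately, using monotonicity and submodularity of $f$, and crucially the fact that $f(S) = f(\unvrs)$ is the maximum possible value (since $f$ is monotone, $f(S) \le f(\unvrs)$ for all $S \subseteq \unvrs$, so equality means $S$ achieves the max).

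For the forward direction, suppose $f(S) = f(\unvrs)$. I want to show $\chi_S$ satisfies every constraint. Fix an arbitrary $T \subseteq \unvrs$; the constraint indexed by $T$ reads $\sum_{v \notin T} f(v \mid T)\, (\chi_S)_v \ge f(\unvrs) - f(T)$, i.e. $\sum_{v \in S \setminus T} f(v \mid T) \ge f(\unvrs) - f(T) = f(S) - f(T)$. Now I would bound the left side from below using submodularity: enumerate $S \setminus T = \{v_1, \dots, v_m\}$ and write a telescoping inequality $f(S \cup T) - f(T) = \sum_{i=1}^m f(v_i \mid T \cup \{v_1,\dots,v_{i-1}\}) \le \sum_{i=1}^m f(v_i \mid T)$, where each step uses $f(v_i \mid T \cup \{v_1,\dots,v_{i-1}\}) \le f(v_i \mid T)$ (submodularity, since $T \subseteq T \cup \{v_1,\dots,v_{i-1}\}$). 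Hence $\sum_{v \in S\setminus T} f(v\mid T) \ge f(S \cup T) - f(T) \ge f(S) - f(T)$, the last step by monotonicity since $S \subseteq S \cup T$. This is exactly the constraint. Nonnegativity of $\chi_S$ is immediate, so $\chi_S$ is feasible.

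For the reverse direction, suppose $\chi_S$ is feasible; I want $f(S) = f(\unvrs)$. Since $f$ is monotone it suffices to show $f(S) \ge f(\unvrs)$. The natural move is to plug $T = S$ into the constraint indexed by $T = S$: then $\sum_{v \notin S} f(v \mid S)\, (\chi_S)_v \ge f(\unvrs) - f(S)$, but the left-hand side is $0$ because $(\chi_S)_v = 0$ for every $v \notin S$. Therefore $0 \ge f(\unvrs) - f(S)$, i.e. $f(S) \ge f(\unvrs)$, and combined with monotonicity we get $f(S) = f(\unvrs)$.

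The argument is short and the only real content is the telescoping/submodularity bound in the forward direction together with the monotonicity step $f(S\cup T) \ge f(S)$; there is no serious obstacle, though one should be slightly careful that the definition of submodularity used here is the "diminishing marginal returns" form $f(v\mid A) \ge f(v\mid B)$ for $A \subseteq B$, which is exactly what is stated in the excerpt, so the telescoping step applies directly with no need to invoke the equivalent submodularity characterizations.
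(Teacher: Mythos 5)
Your proof is correct. The paper itself gives no proof of this claim---it simply cites Wolsey's Proposition 2---so there is nothing to compare against; your reconstruction is the standard argument: for sufficiency you telescope the marginals of $S\setminus T$ over $T$ and invoke submodularity to flatten the conditioning and monotonicity to pass from $f(S\cup T)$ to $f(S)=f(\unvrs)$, and for necessity you substitute $T=S$ into the corresponding covering constraint so the left-hand side vanishes identically, forcing $f(S)\ge f(\unvrs)$. Both directions are airtight and require exactly the monotonicity and submodularity hypotheses as stated in the paper's preliminaries.
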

Furthermore, Wolsey showed that this LP has an integrality gap of $\log(\max_{v\in \mathcal{N}} f(v))+1$ when $f$ is integer valued. 
		
\section{Eviction Cost}
\label{sec:eviction}

In this section we show our algorithmic results for block-aware caching with respect to eviction costs. Our proof uses the (online) primal-dual method and hence requires an LP for the eviction cost model. It is straightforward to write a simple LP relaxation; unfortunately, the na\"{\i}ve relaxation has an integrality gap of $\Omega(\bsize)$ (see \cref{sec:basic_LP_formulation}), and recall that our goal is to beat the trivial algorithm's linear dependence on $\bsize$.

\subsection{Submodular Cover LP Formulation}

\label{sec:subcov_form}

To circumvent the na\"{\i}ve LP barrier, we strengthen the formulation using ideas from Wolsey's submodular cover LP. We start with some notation.

A \emph{flush} is a tuple $(B,t) \in \mathcal{B} \times [T]_0$. The flush $(B,t)$ corresponds to the event of evicting all cached pages of block $B$ at time $t$. (There is no reason to only evict some of them, since they can be fetched back for free.)  Let $S$ be a set of flushes. We say that a page $p$ is \textit{missing} at time $\tau$ according to $S$ if there exists $r(p,\tau) < t \leq \tau$ such that $(B(p),t) \in S$.\footnote{Adding to $S$ all flushes of the form $(B,0)$ ensures that also never-requested pages are missing by this definition.} Crucially, this definition ensures that the page $p_t$ requested at time $t$ is not missing at time $t$. We say than an algorithm is \emph{induced} by a set of flushes $S$ if the algorithm evicts all pages of block $B$ (except $p_t$) at time $t$ if and only if $(B,t) \in S$, and always loads $p_t$ at time $t$. Let $n_t$ be the number of pages requested up until time $t$.

We use the above to define a set function $f_\tau: 2^{\mathcal{B} \times [T]_0} \rightarrow \mathbb{Z}$ on sets of flushes:
\begin{align*}
	f_\tau(S) &:= \min(n-k, \ \left|\{p : p \text{ is \textit{missing} at time } \tau \text{ according to } S\}\right|)
\end{align*}

In words, $f_\tau(S)$ is the number of pages that are outside of the cache at time $\tau$ for the algorithm induced by $S$, where this number is capped at $n-k$. The algorithm induced by a set of flushes $S$ is feasible at time $\tau$ iff $f_\tau(S) \geq n-k$ for all $\tau$.

We show the following simple fact in \cref{sec:extra_proofs}:
\begin{restatable}{claim}{fsubmod}
	\label{claim:fsubmod}
	For every $\tau$, the function $f_\tau$ is submodular.
\end{restatable}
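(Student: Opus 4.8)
The plan is to exhibit $f_\tau$ as the truncation of a weighted coverage function, and then to invoke two standard facts: coverage functions are monotone submodular, and composing a monotone submodular function with a nondecreasing concave scalar function again yields a submodular function.

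First I would reformulate the ``missing'' predicate set-theoretically. Fix $\tau$, and for each page $p$ let
\[
  N_p := \{(B(p),t) : r(p,\tau) < t \le \tau\} \subseteq \mathcal{B}\times[T]_0
\]
be the set of flushes whose inclusion would make $p$ missing at time $\tau$. By definition, $p$ is missing at $\tau$ according to $S$ exactly when $S \cap N_p \ne \emptyset$ (this is well-defined also for never-requested pages, where $r(p,\tau)=-\infty$ and $N_p$ simply consists of all flushes of $B(p)$ at times $\le\tau$). Hence, setting $g_\tau(S) := |\{p : p \text{ missing at }\tau \text{ per } S\}| = \sum_p \mathbbm{1}[S\cap N_p\ne\emptyset]$, we have $f_\tau(S) = \min(n-k,\, g_\tau(S))$, and $g_\tau$ is precisely the weighted coverage function in which a flush $v$ ``covers'' every page $p$ with $v \in N_p$.

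Next I would check $g_\tau$ is monotone submodular straight from the diminishing-returns definition. Monotonicity is immediate. For submodularity, fix a flush $v$ and $A\subseteq B$; for each page $p$ the marginal contribution $\mathbbm{1}[(A\cup\{v\})\cap N_p\ne\emptyset] - \mathbbm{1}[A\cap N_p\ne\emptyset]$ equals $1$ if $v\in N_p$ and $A\cap N_p=\emptyset$, and $0$ otherwise, while the same quantity with $B$ in place of $A$ is $1$ iff $v\in N_p$ and $B\cap N_p=\emptyset$. Since $A\subseteq B$ forces $B\cap N_p=\emptyset \Rightarrow A\cap N_p=\emptyset$, the per-page marginal for $B$ is at most that for $A$; summing over $p$ gives $g_\tau(v\mid A)\ge g_\tau(v\mid B)$.

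Finally I would deal with the truncation. Let $\psi(x):=\min(n-k,x)$, which is nondecreasing and concave on $\mathbb{R}$, so $f_\tau = \psi\circ g_\tau$. For any flush $v$ and $A\subseteq B$, write $a:=g_\tau(A)\le b:=g_\tau(B)$ and $\delta_A:=g_\tau(v\mid A)\ge\delta_B:=g_\tau(v\mid B)\ge 0$ (using the previous paragraph and monotonicity). Then
\[
  f_\tau(v\mid A) = \psi(a+\delta_A)-\psi(a) \ge \psi(a+\delta_B)-\psi(a) \ge \psi(b+\delta_B)-\psi(b) = f_\tau(v\mid B),
\]
where the first inequality uses that $\psi$ is nondecreasing and $\delta_A\ge\delta_B$, and the second uses that $x\mapsto\psi(x+\delta_B)-\psi(x)$ is nonincreasing for concave $\psi$. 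This proves submodularity (indeed monotone submodularity) of $f_\tau$. The argument is entirely routine; the only places needing a little care are checking that ``missing'' depends on $S$ only through the hitting patterns $S\cap N_p$ — so that $g_\tau$ really is a coverage function — and that truncation by a constant preserves submodularity. I do not expect a genuine obstacle.
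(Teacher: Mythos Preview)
Your proof is correct and follows essentially the same approach as the paper: identify $g_\tau$ as a coverage function (hence monotone submodular), and then observe that $f_\tau=\min(n-k,g_\tau)$ is the composition with a nondecreasing concave function (equivalently, the minimum with a constant), which preserves submodularity. You supply considerably more detail than the paper's two-line argument, but the structure is identical.
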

 
\begin{figure}
	\label{fig:ftau}
	\begin{mdframed}
		\centering
		\scalebox{0.65}{\tikzset{every picture/.style={line width=0.75pt}} 

\begin{tikzpicture}[x=0.75pt,y=0.75pt,yscale=-1,xscale=1]
	
	\draw  [draw opacity=0][fill={rgb, 255:red, 255; green, 0; blue, 0 }  ,fill opacity=0.6 ][line width=0.75]  (90,80) -- (210,80) -- (210,90) -- (90,90) -- cycle ;
	\draw  [draw opacity=0][fill={rgb, 255:red, 255; green, 0; blue, 0 }  ,fill opacity=0.6 ][line width=0.75]  (90,100) -- (310,100) -- (310,110) -- (90,110) -- cycle ;
	\draw  [draw opacity=0][fill={rgb, 255:red, 255; green, 0; blue, 0 }  ,fill opacity=0.6 ][line width=0.75]  (90,120) -- (120,120) -- (120,130) -- (90,130) -- cycle ;
	\draw  [draw opacity=0][fill={rgb, 255:red, 255; green, 0; blue, 0 }  ,fill opacity=0.6 ][line width=0.75]  (90,140) -- (170,140) -- (170,150) -- (90,150) -- cycle ;
	\draw  [draw opacity=0][fill={rgb, 255:red, 0; green, 37; blue, 255 }  ,fill opacity=0.6 ] (90,170) -- (160,170) -- (160,180) -- (90,180) -- cycle ;
	\draw  [draw opacity=0][fill={rgb, 255:red, 0; green, 37; blue, 255 }  ,fill opacity=0.6 ] (90,190) -- (230,190) -- (230,200) -- (90,200) -- cycle ;
	\draw  [draw opacity=0][fill={rgb, 255:red, 0; green, 37; blue, 255 }  ,fill opacity=0.6 ] (90,210) -- (100,210) -- (100,220) -- (90,220) -- cycle ;
	\draw  [draw opacity=0][fill={rgb, 255:red, 0; green, 37; blue, 255 }  ,fill opacity=0.6 ] (90,230) -- (180,230) -- (180,240) -- (90,240) -- cycle ;
	\draw  [draw opacity=0][fill={rgb, 255:red, 255; green, 0; blue, 0 }  ,fill opacity=0.6 ][line width=0.75]  (220,80) -- (430,80) -- (430,90) -- (220,90) -- cycle ;
	\draw  [draw opacity=0][fill={rgb, 255:red, 255; green, 0; blue, 0 }  ,fill opacity=0.6 ][line width=0.75]  (130,120) -- (470,120) -- (470,130) -- (130,130) -- cycle ;
	\draw  [draw opacity=0][fill={rgb, 255:red, 255; green, 0; blue, 0 }  ,fill opacity=0.6 ][line width=0.75]  (320,100) -- (400,100) -- (400,110) -- (320,110) -- cycle ;
	\draw  [draw opacity=0][fill={rgb, 255:red, 0; green, 37; blue, 255 }  ,fill opacity=0.6 ] (170,170) -- (470,170) -- (470,180) -- (170,180) -- cycle ;
	\draw  [draw opacity=0][fill={rgb, 255:red, 0; green, 37; blue, 255 }  ,fill opacity=0.6 ] (240,190) -- (450,190) -- (450,200) -- (240,200) -- cycle ;
	\draw  [draw opacity=0][fill={rgb, 255:red, 0; green, 37; blue, 255 }  ,fill opacity=0.6 ] (110,210) -- (470,210) -- (470,220) -- (110,220) -- cycle ;
	\draw  [draw opacity=0][fill={rgb, 255:red, 0; green, 37; blue, 255 }  ,fill opacity=0.6 ] (360,230) -- (470,230) -- (470,240) -- (360,240) -- cycle ;
	\draw [color={rgb, 255:red, 255; green, 0; blue, 0 }  ,draw opacity=1 ][line width=3]    (208,50) -- (210,270) ;
	\draw [color={rgb, 255:red, 0; green, 37; blue, 255 }  ,draw opacity=1 ][line width=3]    (290,50) -- (290,270) ;
	\draw [color={rgb, 255:red, 0; green, 0; blue, 0 }  ,draw opacity=1 ][line width=2.25]  [dash pattern={on 2.53pt off 3.02pt}]  (410,50) -- (410,270) ;
	\draw  [draw opacity=0][fill={rgb, 255:red, 255; green, 0; blue, 0 }  ,fill opacity=0.6 ][line width=0.75]  (180,140) -- (470,140) -- (470,150) -- (180,150) -- cycle ;
	\draw  [draw opacity=0][fill={rgb, 255:red, 255; green, 0; blue, 0 }  ,fill opacity=0.6 ][line width=0.75]  (440,80) -- (470,80) -- (470,90) -- (440,90) -- cycle ;
	\draw  [draw opacity=0][fill={rgb, 255:red, 255; green, 0; blue, 0 }  ,fill opacity=0.6 ][line width=0.75]  (410,100) -- (470,100) -- (470,110) -- (410,110) -- cycle ;
	\draw  [draw opacity=0][fill={rgb, 255:red, 0; green, 37; blue, 255 }  ,fill opacity=0.6 ] (460,190) -- (470,190) -- (470,200) -- (460,200) -- cycle ;
	\draw  [draw opacity=0][fill={rgb, 255:red, 0; green, 37; blue, 255 }  ,fill opacity=0.6 ] (190,230) -- (350,230) -- (350,240) -- (190,240) -- cycle ;
	
	\draw (49,106) node [anchor=north west][inner sep=0.75pt]  [font=\LARGE]  {$\textcolor[rgb]{1,0,0}{B}\textcolor[rgb]{1,0,0}{_{1}}$};
	\draw (49,195) node [anchor=north west][inner sep=0.75pt]  [font=\LARGE]  {$\textcolor[rgb]{0,0.15,1}{B}\textcolor[rgb]{0,0.15,1}{_{2}}$};
	\draw (183,15) node [anchor=north west][inner sep=0.75pt]  [font=\LARGE]  {$\textcolor[rgb]{1,0,0}{(}\textcolor[rgb]{1,0,0}{B}\textcolor[rgb]{1,0,0}{_{1}}\textcolor[rgb]{1,0,0}{,t}\textcolor[rgb]{1,0,0}{_{1}}\textcolor[rgb]{1,0,0}{)}$};
	\draw (264,15) node [anchor=north west][inner sep=0.75pt]  [font=\LARGE]  {$\textcolor[rgb]{0,0.15,1}{(}\textcolor[rgb]{0,0.15,1}{B}\textcolor[rgb]{0,0.15,1}{_{2}}\textcolor[rgb]{0,0.15,1}{,t}\textcolor[rgb]{0,0.15,1}{_{2}}\textcolor[rgb]{0,0.15,1}{)}$};
	\draw (405,15) node [anchor=north west][inner sep=0.75pt]  [font=\LARGE]  {$\tau$};

\end{tikzpicture}}
		\caption{Illustration of the function $f_\tau$. Each line represents a page, and each horizontal bar within the line represents an interval in which the page is not requested. Pages are grouped into their corresponding blocks. The solid vertical lines represent flushes. Suppose $n = 8$ and $k= 4$. Then $f_\tau(\{(B_1, t_1)\}) = 2$, $f_\tau(\{(B_2, t_2)\}) = 3$, but $f_\tau(\{(B_1, t_1), (B_2, t_2)\}) = 4$.}
	\end{mdframed}
\end{figure}

With the notation above, we can reformulate the block-aware caching problem with eviction cost as the solution to\footnote{This formulation is reminiscent of online and dynamic submodular cover problems \cite{gupta2020online,dynamicsubmod} in which the goal is also to maintain a feasible submodular cover while the underlying submodular function changes over time. However the cost models in these other works are very different.}
\begin{align}
	\label{eq:block_caching}
	\begin{array}{|rl|}
		\hline  & \\
		\displaystyle \min_{S \subseteq \mathcal{B} \times [T]_0} & \displaystyle \sum_{\substack{(B,t) \in S \\ t\geq 1}} c_B \\
		\text{subject to} & \\ & \\
		\forall \tau \in [T]: & f_\tau(S) \geq n-k. \\  & \\ \hline
	\end{array}
\end{align}
Note that because $f_\tau(S)$ counts the number of pages evicted by $S$ that are \emph{not} $p_\tau$, this single constraint captures both that the algorithm must flush at least $n-k$ pages in order to respect the cache size limit, and that the cache must contain page $p_t$ at time $t$.

Note as well that we allow the algorithm to perform flushes at time $0$, but only charge the cost for flushes performed after time $1$. This conveniently allows the algorithm to clear the cache initially at no extra cost.

Finally, we are ready to write our LP, which is is the intersection of the submodular cover LPs of \eqref{eq:wolsey_lp} for the functions $f_\tau$, across all time steps $\tau \in [T]$.
\begin{align}
	\begin{array}{|rl|}
		\multicolumn{2}{c}{\text{Primal}} \\ \hline  & \\
		\min & \displaystyle \sum_{B,t\geq 1} c_B \cdot \phi_B^t \\
		\text{subject to} & \\ & \\
		\begin{array}{c}
			\forall S\subseteq \mathcal{B} \times [T]_0, \\
			\forall \tau \in [T]
		\end{array}: & \begin{array}{l}
		\displaystyle \sum_{B,t} f_\tau((B,t) \mid S) \cdot \phi_B^t 
		\geq n-k-f_\tau(S)
	\end{array} \\ & \\
		\forall B \in \mathcal{B}, t \in [T]_0: & \phi_B^t \geq 0  \\  & \\ \hline
	\end{array} \label{eq:lp_p} \tag{P}
\end{align}
We will require the dual of this program, which is
\begin{align}
	\begin{array}{|rl|}
		\multicolumn{2}{c}{\text{Dual}} \\ \hline  & \\
		\max & \displaystyle \sum_{S, \tau} \left(n-k - f_\tau(S)\right) \cdot y^\tau_S \\
		\text{subject to} & \\ & \\
		\forall B \in \mathcal{B}, t \in [T]: & \displaystyle  \sum_{S, \tau} f_\tau((B,t) \mid S) \cdot y^\tau_S \leq c_B \\ & \\
		\begin{array}{c}
			\forall S\subseteq \mathcal{B} \times [T], \\
			\forall \tau \in [T]
		\end{array}: & y_S^\tau \geq 0 \\ \hline
	\end{array}  \label{eq:lp_d} \tag{D}
\end{align}

That \hyperref[eq:lp_p]{(P)} is a valid relaxation of \eqref{eq:block_caching} follows from \cref{lem:wolsey_integerpts}.

\begin{claim}[Corollary of \cref{lem:wolsey_integerpts}]
	\label{lem:integerpts}
	A set $S$ has $f_\tau(S) = n-k$ for all $\tau$ if and only if $\chi_S$, the
	characteristic vector of $S$, is a feasible integer solution to
	\hyperref[eq:lp_p]{(P)}.
\end{claim}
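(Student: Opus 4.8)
The plan is to recognize \hyperref[eq:lp_p]{(P)} as nothing more than the intersection, over all $\tau \in [T]$, of Wolsey's submodular cover LP \eqref{eq:wolsey_lp} applied to the submodular function $f_\tau$ on ground set $\mathcal{N} := \mathcal{B} \times [T]_0$, and then to invoke \cref{lem:wolsey_integerpts} for each $\tau$ separately. The submodularity of $f_\tau$ is exactly \cref{claim:fsubmod}, so Wolsey's machinery applies.

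First I would check that $f_\tau(\mathcal{N}) = n-k$ for every $\tau \in [T]$. Indeed, when $S = \mathcal{N}$ every page $p \neq p_\tau$ is missing at time $\tau$: if $p$ was requested at or before $\tau$ then $r(p,\tau) < \tau$ and the flush $(B(p),\tau) \in \mathcal{N}$ witnesses this; if $p$ was never requested up to $\tau$ then $r(p,\tau) = -\infty < 0$ and the flush $(B(p),0) \in \mathcal{N}$ witnesses it. Hence $\mathcal{N}$ leaves exactly $n-1 \ge n-k$ pages missing, so $f_\tau(\mathcal{N}) = \min(n-k,\,n-1) = n-k$. In particular, the condition ``$f_\tau(S) = n-k$ for all $\tau$'' is literally ``$f_\tau(S) = f_\tau(\mathcal{N})$ for all $\tau$'', which is the form \cref{lem:wolsey_integerpts} speaks about.

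Next I would match the constraints. Fix $\tau$. Wolsey's LP \eqref{eq:wolsey_lp} for $f_\tau$ has, for each $S \subseteq \mathcal{N}$, the inequality $\sum_{(B,t)\notin S} f_\tau((B,t)\mid S)\, x_{(B,t)} \ge f_\tau(\mathcal{N}) - f_\tau(S)$. Renaming $x_{(B,t)}$ as $\phi_B^t$ and substituting $f_\tau(\mathcal{N}) = n-k$, this is exactly the $(S,\tau)$ constraint of \hyperref[eq:lp_p]{(P)}, once we note that the terms with $(B,t) \in S$ which \hyperref[eq:lp_p]{(P)} additionally writes out contribute nothing, since $f_\tau((B,t)\mid S) = f_\tau(S \cup \{(B,t)\}) - f_\tau(S) = 0$ whenever $(B,t) \in S$. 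Therefore $\phi = \chi_S$ is feasible for \hyperref[eq:lp_p]{(P)} if and only if, for every $\tau \in [T]$, $\chi_S$ is feasible for Wolsey's LP associated with $f_\tau$; by \cref{lem:wolsey_integerpts} the latter holds for a given $\tau$ if and only if $f_\tau(S) = f_\tau(\mathcal{N}) = n-k$. Intersecting these equivalences over all $\tau \in [T]$ yields the claim.

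The argument is essentially bookkeeping, and I do not anticipate a substantive obstacle; the only points that require a moment's care are the evaluation $f_\tau(\mathcal{N}) = n-k$ (in particular the role of the time-$0$ flushes in making never-requested pages count as missing), and the observation that extending the sum in Wolsey's inequality from $(B,t) \notin S$ to all $(B,t)$ is harmless because the extra marginals vanish.
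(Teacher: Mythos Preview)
Your proposal is correct and matches the paper's approach exactly: the paper states \cref{lem:integerpts} as an immediate corollary of \cref{lem:wolsey_integerpts} without an explicit proof, and your write-up simply fills in the routine verification that \hyperref[eq:lp_p]{(P)} is the intersection over $\tau$ of Wolsey's LP for $f_\tau$ (with $f_\tau(\mathcal{N})=n-k$), so that \cref{lem:wolsey_integerpts} applies term-by-term. The two points you flag as requiring care---computing $f_\tau(\mathcal{N})$ via the time-$0$ flushes and noting that the extra summands with $(B,t)\in S$ vanish---are exactly the details the paper leaves implicit.
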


We note for intuition's sake that even the constraints \[\sum_{B,t} f_\tau((B,t)) \cdot \phi_B^t \geq n-k\] alone already avoid the bad integrality gap example of \cref{sec:basic_LP_formulation}. One reason is that truncating $f_\tau$ at $n-k$ prevents the LP from overestimating how much space will be saved by evictions.

Given an LP solution $\phi$, we also define the fractional value of a page $p$ missing from cache at time $t$ to be
\begin{align}x_p^t := \begin{cases}
	 1 & \text{if } r(p,t) = -\infty. \\
	 \min\left\{1,\sum_{u =r(p,t)+1}^{t} \phi_{B(p)}^u\right\} & \text{otherwise.} 
\end{cases} \label{eq:xpt_def}\end{align}
Intuitively, whenever some fraction $\delta$ of a flush $(B,t)$ is chosen, we imagine increasing the fractional amount by which each page in $B$ is evicted to extent $\delta$. On the other hand when a page $p_t$ is requested at time $t$, we reset its fractional value to $x^t_{p_t} = 0$.

\subsection{A $k$-Competitive Deterministic Online Algorithm}
\label{sec:deterministic}

Our first algorithmic result is a $k$-competitive deterministic online algorithm, which beats the trivial $k\bsize$ competitive ratio obtained by running the deterministic online algorithm for classical paging. 
Our deterministic algorithm is given in \cref{alg:onlinedet}. The algorithm constructs simultaneously a primal solution $x$ and a dual solution $y$ to the LPs~\eqref{eq:lp_p} and \eqref{eq:lp_d}. We will ensure that these solutions satisfy all constraints known up to that time. We use $C(\tau)$ for the set of pages in cache at time $\tau$, and we use $S$ for the set of flushes performed by the algorithm so far. At the start of the algorithm, $S$ is initialized as the set of all flushes of all blocks at time $0$; this amounts to clearing the initial cache.The primal solution $\phi$ is set to the characteristic vector of $S$, and dual solutions $y$ is initialized as the all-0-vector. At time $\tau$, we first add the requested page $p_\tau$ to the cache. If this violates the cache constraint, we continuously increase the dual variable $y_S^\tau$ corresponding to the current set $S$ and the current time $\tau$ until the dual constraint corresponding to some $(B,t)$ with $f_{\tau}((B,t)\mid S) \geq 1$ becomes tight. Once this happens, we evict all pages of block $B$ that are in cache (except $p_\tau$, in case it belongs to this block) and update $x$ and $S$ to reflect that the flush $(B,\tau)$ has been performed.

\begin{algorithm}
	\caption{Deterministic Online Algorithm}
	\label{alg:onlinedet}
	\begin{algorithmic}[1]	
		\State $S \leftarrow \{(B, 0) : B \in \mathcal{B} \}$\label{line:initS}
		\State $\phi \leftarrow \chi_S$, \ $y \leftarrow \vec 0$
		\For{time $\tau = 1, \ldots, T$}
		\State $C(\tau) \leftarrow C(\tau-1) \cup \{p_{\tau} \}$
		\If{$|C(\tau)| > k $}
		\State Increase $y^\tau_S$ until the dual constraint corresponding to some $(B,t)$ for which ${f_{\tau}((B,t)\mid S) \geq 1}$ is tight.
		\State $C(\tau) \leftarrow C(\tau)\setminus(B\setminus\{p_{\tau}\})$.
		\State $\phi_B^{\tau} \leftarrow 1$. \label{line:det_setto1}
		\State $S \leftarrow S \cup \{(B,\tau)\}$.
		\EndIf
		\EndFor
	\end{algorithmic}
\end{algorithm}

We show that both primal and dual solutions are feasible, and that the cost of the primal is at most $k$ times the cost of the dual. By weak duality, this implies:

\begin{theorem}
	\cref{alg:onlinedet} is $k$-competitive.
\end{theorem}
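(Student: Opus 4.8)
The plan is to prove the theorem via online primal-dual, establishing three facts: (1) the primal solution $\phi$ maintained by \cref{alg:onlinedet} stays feasible for \hyperref[eq:lp_p]{(P)} (restricted to constraints revealed so far), (2) the dual solution $y$ stays feasible for \hyperref[eq:lp_d]{(D)}, and (3) at every point in time the primal cost is at most $k$ times the dual cost. Since \cref{lem:integerpts} tells us $\phi$ being a feasible $0$/$1$ point of \hyperref[eq:lp_p]{(P)} is equivalent to $f_\tau(S)=n-k$ for all $\tau$, primal feasibility is the same as the algorithm never violating the cache bound after its eviction step; and weak duality together with (3) then gives that $\mathrm{cost}(\phi)\le k\cdot\mathrm{cost}(y)\le k\cdot\mathsf{OPT}$, which is the claim.

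For primal feasibility, I would argue that after processing time $\tau$ the cache $C(\tau)$ has size at most $k$. The only subtlety is that the inner \textbf{if} performs a \emph{single} flush — I need to check that one flush always suffices to restore $|C(\tau)|\le k$. Since we just added one page to a cache that was of size $\le k$, we have $|C(\tau)|\le k+1$ before the eviction, so removing even one page from cache suffices, provided the block $B$ we pick actually has a cached page other than $p_\tau$ — but that is exactly the condition $f_\tau((B,t)\mid S)\ge 1$, which says at least one new page of $B$ becomes missing, i.e. is currently in cache and is not $p_\tau$. I should also confirm that such a $(B,t)$ with a tight-able dual constraint always exists, i.e. that the dual-increase step terminates: as long as $|C(\tau)|>k$ there is some block with a cached non-$p_\tau$ page, its flush $(B,\tau)$ has positive marginal, and increasing $y_S^\tau$ raises the left side of that block's dual constraint at a positive rate, so some constraint goes tight in finite time. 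I would also note that $S$ is monotonically growing and each flush only ever makes more pages missing, so $f_\tau(S)$ never decreases once a constraint is satisfied — past constraints stay satisfied.

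For dual feasibility, the algorithm increases $y_S^\tau$ only until the first dual constraint goes tight and then stops, so no dual constraint is ever exceeded; I just need to observe that subsequent dual increases (at later times, with later sets $S'$) do not retroactively violate an already-tight constraint. Here I would use submodularity: for a fixed $(B,t)$, as $S$ grows the marginal $f_{\tau'}((B,t)\mid S)$ can only shrink (this is the submodularity of $f_{\tau'}$ from \cref{claim:fsubmod} plus the fact that we only add to $S$), and more importantly, once $(B,\tau)$ has been flushed (so $(B,\tau)\in S$), then for all later $\tau'$ and all $S\supseteq\{(B,\tau)\}$ we have $f_{\tau'}((B,\tau)\mid S)=0$, so that variable's dual constraint never grows again; for flushes not yet performed one argues the increments are controlled. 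The cleanest framing is: the LHS of the dual constraint for $(B,t)$ only accumulates contributions from dual variables $y_S^\tau$ raised while $(B,t)\notin S$ and $f_\tau((B,t)\mid S)\ge 1$, and we explicitly stop raising the relevant $y_S^\tau$ as soon as any such constraint is tight.

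The cost comparison is the heart of the argument and the step I expect to be the main obstacle. Whenever we raise $y_S^\tau$ by $dy$, the dual objective increases by $(n-k-f_\tau(S))\,dy$, and when the inner step completes we set one new $\phi_B^\tau\leftarrow 1$, increasing the primal cost by $c_B$ where $c_B=\sum_{S',\tau'} f_{\tau'}((B,\tau)\mid S')\,y_{S'}^{\tau'}$ by tightness. I want to charge $c_B$ against the dual gain. The plan is to show $n-k-f_\tau(S)\ge 1$ at the moment we raise $y_S^\tau$ (because the cache is over capacity, i.e. fewer than $n-k$ pages are currently missing), and simultaneously bound, for the block $B$ ultimately flushed, the total marginal mass $\sum_{S',\tau'} f_{\tau'}((B,\tau)\mid S')\,y^{\tau'}_{S'}$ — this is where the factor $k$ enters. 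Roughly, the flush $(B,\tau)$ has marginal at most $\bsize$ trivially, but the sharper bound should come from counting how many distinct dual variables can "pay into" $c_B$: each increase of $y_S^\tau$ is made while $p_\tau$ is freshly over the limit, and a careful accounting — tracking that between the flush of $B$ and any earlier time, the relevant marginal $f_{\tau'}((B,\tau)\mid S')$ is bounded by the number of $B$-pages still in cache, which is at most $k$ — yields $c_B \le k\cdot(\text{dual gain associated to this flush})$. I would set up a per-flush ledger: assign to flush $(B,\tau)$ exactly the dual increments $y_{S'}^{\tau'}$ that appear with positive coefficient in its tight constraint, show each dual increment is assigned to at most $\ldots$ well, here the argument must ensure no double-counting blows past $k$; concretely I expect the invariant to be that when $y^{\tau'}_{S'}$ is increased, the number of blocks $B$ for which both $f_{\tau'}((B,\tau')\mid S')\ge 1$ and the eventual flush performed at $\tau'$ has positive coefficient is one (only the block we flush at $\tau'$), while the coefficient $f_{\tau'}((B,\tau')\mid S')$ itself is at most the number of cached pages of $B$, hence at most $\bsize$; combining with $n-k-f_{\tau'}(S')\ge 1$ and being careful about which time steps' duals feed which flush gives the ratio $k$ rather than $k\bsize$. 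Getting the bookkeeping exactly right so the bound is $k$ and not $k\bsize$ is the delicate part, and I would double check it against the intuition in the paragraph following \cref{lem:integerpts} that truncating $f_\tau$ at $n-k$ is what prevents the extra $\bsize$.
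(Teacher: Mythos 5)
Your high-level structure matches the paper exactly: primal feasibility via \cref{lem:integerpts}, dual feasibility by the stopping rule, and a charging argument for the ratio $k$. The primal and dual feasibility arguments you sketch are essentially the paper's, and are fine.

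The gap is in the cost-comparison step, which you yourself flag as the delicate part. Your tentative charging invariant --- ``when $y^{\tau'}_{S'}$ is increased, the number of blocks $B$ for which both $f_{\tau'}((B,\tau')\mid S')\ge 1$ and the eventual flush performed at $\tau'$ has positive coefficient is one (only the block we flush at $\tau'$)'' --- is not correct, and following that route would give you a bound of $\bsize$ per dual variable rather than $k$. In reality a single dual variable $y^{u}_{S'}$ can absorb charges from \emph{many} different blocks: any block $B$ whose later tight constraint $(B,t_B)$ had a positive coefficient $f_u((B,t_B)\mid S')$ at the time $y^u_{S'}$ was raised will charge part of $c_B$ to that variable. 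What saves the argument is not that only one block charges, but that each block charges \emph{at most once} (once $(B,\tau)\in S$ the marginals $f_{\tau'}((B,t')\mid S)$ vanish for all $t'\le\tau\le\tau'$, so $c_B$ never accumulates further contributions from later duals, and by the choice $t_\tau\le u\le\tau$ the relevant duals are exactly those raised in a window where $B$ was not yet flushed), and that the per-block coefficient $f_u((B,t_B)\mid S')$ is bounded by the number of pages of $B$ in cache at the end of step $u-1$. Summing this over all blocks $B$ gives the total number of pages in cache at the end of step $u-1$, which is at most $k$. That sum-over-blocks step is exactly what converts a $\bsize$-type bound into the $k$ bound; your proposal identifies the right per-block bound but then reaches for the wrong aggregation (``only one block'') instead of the correct one (``total cached pages $\le k$'').

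One more minor point: you invoke submodularity of $f_{\tau}$ in the dual-feasibility argument, but the paper does not need it there. The algorithm halts the increase of $y^\tau_S$ the moment \emph{any} constraint with positive coefficient goes tight, and subsequent dual increases at later times (with later $S'$, $\tau'$) only touch constraints via their coefficients $f_{\tau'}((B,t)\mid S')$ at the later time --- the already-tight constraint from the earlier round would only be violated again if its coefficient were positive in a later round, which is precisely the case in which the algorithm would again stop at tightness. Submodularity is not the mechanism here; the stopping rule is.
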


We begin by showing feasibility.
\begin{lemma} \label{det_primal_feasible}
\cref{alg:onlinedet} terminates. Upon termination, $\phi$ is feasible for~\eqref{eq:lp_p} and $y$ is feasible for~\eqref{eq:lp_d}.
\end{lemma}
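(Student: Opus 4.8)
The plan is to establish the three assertions --- that the continuous dual‑increase step is well defined and finite (so the algorithm terminates), that the final $\phi$ is feasible for \hyperref[eq:lp_p]{(P)}, and that $y$ remains feasible for \hyperref[eq:lp_d]{(D)} --- all resting on one invariant that I would carry along by induction on $\tau$: \emph{at the end of iteration $\tau$, a page $p$ lies in $C(\tau)$ if and only if $p$ is not missing at time $\tau$ according to the current flush set $S$}. This invariant is stable, because whether $p$ is missing at time $\tau$ depends only on flushes at times $\le\tau$, all of which are fixed by the end of iteration $\tau$. The base case $\tau=0$ holds since \cref{line:initS} makes every page missing at time $0$ while $C(0)=\emptyset$, and the inductive step is a short case analysis on whether $p=p_\tau$ and whether $p$ belongs to the block $B$ flushed at time $\tau$, using that the flush deletes exactly $B\setminus\{p_\tau\}$ from the cache and (since $r(q,\tau)<\tau$ for every $q\in B\setminus\{p_\tau\}$) renders exactly those pages missing under the new flush set, leaving every other page's status unchanged.

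Granting this, termination follows. By induction $|C(\tau-1)|\le k$ at the start of iteration $\tau$, so $|C(\tau)|\le k+1$ after inserting $p_\tau$, and the algorithm enters the conditional only if $|C(\tau)|=k+1$; in that case the invariant (still valid right after the insertion, before any flush) says exactly $n-k-1$ pages are missing, so $f_\tau$ sits strictly below its truncation level $n-k$. Picking any $p\in C(\tau)\setminus\{p_\tau\}$ --- nonempty since $|C(\tau)|=k+1\ge 2$ --- the page $p$ is not missing and has $r(p,\tau)<\tau$, so adjoining the flush $(B(p),\tau)$ makes $p$ newly missing, whence $f_\tau((B(p),\tau)\mid S)\ge 1$. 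Thus the family of $(B,t)$ with $f_\tau((B,t)\mid S)\ge1$ is nonempty, and for every such $(B,t)$ the left‑hand side of its dual constraint grows in $y_S^\tau$ at rate $f_\tau((B,t)\mid S)\ge1$ while being capped at $c_B<\infty$; hence some dual constraint becomes tight after a finite (possibly zero) increase, so the continuous step is well defined and finite, and with only $T$ iterations the algorithm halts. Finally, if $f_\tau((B,t)\mid S)\ge1$ then the witnessing page $q\in B$ has $r(q,\tau)<t\le\tau$, so $r(q,\tau)<\tau$ and therefore $f_\tau((B,\tau)\mid S)\ge1$ as well; hence the flush $(B,\tau)$ that the algorithm actually performs evicts at least one page (a newly‑missing page was in the cache by the invariant), restoring $|C(\tau)|\le k$.

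The vector $\phi$ is integral --- it starts as $\chi_S$ and is only ever set to $1$ in \cref{line:det_setto1} --- so by \cref{lem:integerpts} primal feasibility reduces to showing $f_\tau(S)=n-k$ for every $\tau\in[T]$; since $f_\tau\le n-k$ always, it is enough that $f_\tau(S)\ge n-k$, and by the invariant the pages missing at time $\tau$ are exactly the complement of $C(\tau)$, of size $n-|C(\tau)|\ge n-k$, so indeed $f_\tau(S)=\min(n-k,\,n-|C(\tau)|)=n-k$. For dual feasibility I would maintain the invariant that every dual constraint holds: it is true at initialization, where $y=\vec 0$ and $c_B>0$; and each iteration increases a single coordinate $y_S^\tau$ continuously while all coefficients $f_\tau((B,t)\mid S)$ are nonnegative, halting the instant a constraint becomes tight, so no slack ever turns negative. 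The only genuinely delicate point is setting up the cache/missing invariant and its stability, compounded by the fact that the algorithm performs the flush at the \emph{current} time $\tau$ although the dual constraint that goes tight may be indexed by an earlier pair $(B,t)$ --- which is precisely what makes the implication $f_\tau((B,t)\mid S)\ge1\Rightarrow f_\tau((B,\tau)\mid S)\ge1$ necessary; everything else is routine bookkeeping.
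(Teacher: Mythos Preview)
Your proof is correct and follows essentially the same approach as the paper's. The only substantive difference is that you make explicit the invariant ``$p\in C(\tau)$ iff $p$ is not missing at time $\tau$ according to $S$,'' which the paper leaves implicit; the paper simply asserts that a feasible cache implies $f_\tau(S)=n-k$ and invokes \cref{lem:integerpts}, whereas you spell out why. Both proofs use the same key observation that $f_\tau((B,t)\mid S)\ge 1$ forces $t\le\tau$ and hence $f_\tau((B,\tau)\mid S)\ge 1$, so the flush actually performed evicts a page. Your dual feasibility argument is slightly terser than the paper's --- when you say ``halting the instant a constraint becomes tight'' you are implicitly using that a constraint with coefficient $f_\tau((B,t)\mid S)=0$ cannot transition from slack to tight during the increase of $y_S^\tau$, which is exactly the paper's point that only constraints with $f_\tau((B,t)\mid S)\ge 1$ are at risk --- but the reasoning is sound.
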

\begin{proof}
	We first show that the algorithm terminates and the primal is feasible. Assume by induction that the algorithm maintains a feasible cache for every time step strictly less than $\tau$ (it is trivially feasible at time $0$). Since exactly one page is requested per time step, if the cache is not feasible at the beginning of time step $\tau$, then $|C(\tau)| = k+1$. If this is the case, then there must exist some $(B,t)$ such that $f_\tau((B,t) \mid S) \geq 1$, in which case the dual constraint corresponding to such a $(B,t)$ will become tight after $y_S^\tau$ is increased sufficiently (in particular, the increase of $y_S^\tau$ terminates). Note that $t\le \tau$ in this case. Then $f_\tau((B,\tau) \mid S) \geq f_\tau((B,t) \mid S) \geq 1$, so at least one page is evicted upon performing the flush $(B,\tau)$, and thus feasibility is restored at time $\tau$. Hence the algorithm maintains a feasible cache state for every time $\tau$, and by \cref{lem:integerpts} it follows that the primal is feasible for \eqref{eq:lp_p}.
	
	We now show feasibility of the dual. Clearly $y_S^\tau\ge 0$. Increasing $y_S^\tau$ could lead to a violation of the dual constraint corresponding to $(B,t)$ only if $f_\tau((B,t)\mid S)\ge 1$, but in this case we stop once the constraint becomes tight. Thus, dual constraints are never violated. Note that dual variables $y_{S}^\tau$ corresponding to future time steps $\tau$ are $0$, so it does not matter that the coefficients $f_\tau((B,t)\mid S)$ of future time steps are not known yet.
\end{proof}

Finally, we relate the primal and dual costs.

\begin{lemma}
	The cost of the primal is at most $k$ times the cost of the dual.
\end{lemma}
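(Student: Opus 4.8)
The plan is to bound the primal cost by amortizing against the dual objective, using the standard primal-dual accounting: whenever we perform a flush $(B,\tau)$ in Line~\ref{line:det_setto1}, we increase the primal by $c_B$, and we want to charge this to the increase in the dual objective that occurred while raising $y_S^\tau$ to make the constraint for $(B,t)$ tight. The dual constraint for $(B,t)$ being tight means $\sum_{S',\tau'} f_{\tau'}((B,t)\mid S')\cdot y_{S'}^{\tau'} = c_B$. So the full primal cost equals $\sum_{\text{flushes }(B,\tau)} c_B = \sum_{\text{flushes }(B,\tau)} \sum_{S',\tau'} f_{\tau'}((B,t)\mid S')\, y_{S'}^{\tau'}$, where $t = t(B,\tau)$ is the time index of the tight constraint that triggered this flush. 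The goal is to show this double sum is at most $k \cdot \sum_{S',\tau'} (n-k-f_{\tau'}(S'))\, y_{S'}^{\tau'}$.

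First I would fix a single dual variable $y_S^\tau$ (nonzero only for $S$ equal to the algorithm's flush-set at the moment time $\tau$ was processed) and bound its total contribution to the primal side. Its contribution is $y_S^\tau$ times $\sum_{(B,\tau') \text{ flushed}} f_\tau((B,t(B,\tau'))\mid S)$ restricted to those flushes whose tight constraint used this particular $(S,\tau)$ pair — but actually it is cleaner to sum over \emph{all} flushes $(B,\tau')$ performed by the algorithm and use $f_\tau((B,\tau')\mid S)$ directly, since $f_\tau((B,t)\mid S)$ appears in the dual constraint with the time index $t$ of the flush. The key structural observation is that at the time step $\tau$ when $y_S^\tau$ was being raised, $S$ was exactly the current flush-set, and $|C(\tau)| = k+1$, i.e. exactly $n-k-1$ pages were missing according to $S$, so $f_\tau(S) = \min(n-k, n-k-1) = n-k-1$, giving $n-k-f_\tau(S) = 1$. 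Thus each unit of $y_S^\tau$ contributes exactly $1$ to the dual objective.

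So the crux is: for a fixed $(S,\tau)$ with $f_\tau(S) = n-k-1$, show $\sum_{(B,\tau')} f_\tau((B,\tau')\mid S) \le k$, where the sum ranges over flushes $(B,\tau')$ performed after $y_S^\tau$ was raised (and $\tau' \le \tau$ — only past/current flushes have nonzero marginal at time $\tau$; future flushes $(B,\tau')$ with $\tau' > \tau$ would have been performed after and might still matter, so I need to think about which flushes' dual constraints could have been made tight using $y_S^\tau$). The right way to see the $\le k$ bound: $f_\tau((B,\tau')\mid S)$ counts pages of block $B$ that become missing at time $\tau$ when we add flush $(B,\tau')$ to $S$; summing the marginals of disjoint-block flushes is at most the total number of distinct pages that can be missing beyond what $S$ already forces, which is $f_\tau(\mathcal{N}) - f_\tau(S) = (n-k) - (n-k-1) = 1$ by submodularity (telescoping of marginals along any chain) — but that gives $1$, not $k$, and only for a \emph{single} chain of flushes added in order. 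The actual accounting must instead observe that $y_S^\tau > 0$ for at most $k$ distinct pairs — no. Let me reconsider: the cleanest route, which I expect is the intended one, is to charge each flush $(B,\tau')$ to the $k$ (or fewer) pages evicted, and charge each such page's eviction to the single unit increment of the dual objective at the time $\tau'$ was triggered, noting that $f_{\tau'}((B,\tau')\mid S) \ge 1$ guarantees the trigger fired, and that each dual-objective unit is "used" by at most $k$ pages because a flush evicts at most $k$ pages (the block has at most $\beta \le k$ pages, and the cache held at most $k$).

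The main obstacle I anticipate is the bookkeeping that makes the factor exactly $k$: one must argue that the marginals $f_\tau((B,t)\mid S)$ appearing across all tight dual constraints, weighted by the $y$'s, sum to at most $k$ times $\sum (n-k-f_\tau(S))y_S^\tau$. I would handle this by the per-flush charging just described — a flush $(B,\tau')$ costs $c_B$, which by tightness equals the dual-objective mass accumulated into $y^{\tau'}_{S}$-type variables against its constraint; since raising $y_S^{\tau}$ by $\epsilon$ raises the dual objective by exactly $\epsilon \cdot (n-k-f_\tau(S)) = \epsilon$ but raises the left side of the $(B,t)$ constraint by $\epsilon \cdot f_\tau((B,t)\mid S)$, and since $f_\tau((B,t)\mid S) \le \beta \le k$ always (at most $\beta$ pages in a block can become missing), we get $c_B \le k \cdot (\text{dual objective contribution charged to this flush})$, provided each unit of dual objective is charged to at most one flush. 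The latter holds because $y_S^\tau$ is raised monotonically and the moment \emph{some} constraint goes tight we stop, set $\phi_B^\tau = 1$, and move $S$ forward — so the interval of $y_S^\tau$-increase is charged to exactly the one flush it triggers. Summing over all flushes and all $(S,\tau)$ gives primal $\le k \cdot$ dual, completing the proof.
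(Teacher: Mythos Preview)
Your setup is right: expand each $c_B$ via the tight dual constraint, and then try to bound, for each nonzero dual variable $y_{S'}^u$, the total charge it receives across all flushes. You also correctly note that $n-k-f_u(S')=1$ whenever $y_{S'}^u>0$.

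The gap is in your final paragraph. You claim that ``each unit of dual objective is charged to at most one flush'' because raising $y_S^\tau$ stops the moment some constraint goes tight. That conflates two different things. It is true that raising $y_S^\tau$ \emph{triggers} exactly one flush. It is \emph{not} true that $y_S^\tau$ appears with nonzero coefficient in the tight constraint of only one flush. For a later flush $(B',\tau')$ of a different block $B'$, the tight constraint $c_{B'}=\sum_{S'',u} f_u((B',t_{\tau'})\mid S'')\,y_{S''}^u$ can perfectly well have $f_\tau((B',t_{\tau'})\mid S)>0$, so $y_S^\tau$ is charged again. Your bound $f_\tau((B,t)\mid S)\le\bsize\le k$ then only gives $c_B\le k\cdot\sum_{(S',u):\,f_u>0} y_{S'}^u$, and summing over flushes overcounts each $y_{S'}^u$ many times.

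The fix is the argument you started and abandoned. Fix $y_{S'}^u$ and bound the \emph{sum} of its coefficients over all flushes that charge it. Two observations are needed. First, $y_{S'}^u$ is charged at most once per block $B$: once a flush $(B,\tau)$ with $\tau\ge u$ is added to $S$, every later $S$ contains $(B,\tau)$, so $f_{\tau'}((B,t')\mid S)=0$ for all $t'\le\tau$ and $\tau'\ge\tau$, and no future increase of a $y$-variable can help tighten another $(B,\cdot)$ constraint. Second, the charge from block $B$'s unique flush is $f_u((B,t_\tau)\mid S')\cdot y_{S'}^u$, and $f_u((B,t_\tau)\mid S')$ is at most the number of pages of $B$ that were in cache at the end of step $u-1$. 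Summing over blocks, the total charge to $y_{S'}^u$ is at most $|C(u-1)|\cdot y_{S'}^u\le k\cdot y_{S'}^u$. This is exactly what the paper does; your per-flush bound of $\bsize$ on a single coefficient is too weak to recover factor $k$ without the ``once per block'' accounting.
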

\begin{proof}
	The algorithm sets $\phi_B^{\tau} = 1$ if and only if $(B,\tau)\in S$, so the primal cost is
	\begin{align*}
		P &= \sum_{B, \tau} c_B \cdot  \phi_B^{\tau} = \sum_{(B,\tau)\in S} c_B.
	\end{align*}
	The algorithm adds $(B,\tau)$ to $S$ only if a constraint corresponding to some $(B,t)=(B,t_\tau)$ with $f_\tau((B,t_\tau)\mid S)\ge 1$ becomes tight at time $\tau$. Thus,
	\begin{align}
		c_B = \sum_{S', u} f_{u}((B,t_\tau) \mid S') \cdot y_{S'}^{u}.\label{eq:tightConstr}
	\end{align}
	To account for our primal cost, for every flush $(B,\tau)\in S$, we charge $f_u((B,t_\tau)\mid S')\cdot y^u_{S'}$ of the cost of the flush to the dual variable $y^u_{S'}$. Since every non-zero dual variable $y_{S'}^u$ in our final solution has its coefficient in the objective $n-k-f_u(S') \geq 1$ (otherwise it would never have been increased), it suffices to argue that each dual variable $y_{S'}^u$ receives a total charge of at most $k \cdot y_{S'}^u$. 
	
	To see this, note that $(B,\tau)$ only charges its cost to variables $y_{S'}^u$ for which $u \in [t_\tau,\tau]$. Indeed, $f_u((B,t_\tau)\mid S')>0$ only if $t_\tau\le u$; moreover, when \eqref{eq:tightConstr} becomes tight at time $\tau$ we have $y_{S'}^u=0$ for $u>\tau$ and all $S'$, and such $y_{S'}^u$ could subsequently increase only if $f_{u}((B,t_\tau) \mid S')=0$ since otherwise the dual constraint would become violated. After $(B,\tau)$ is added to $S$, for all $t' \le \tau$ and all $\tau' \geq \tau$ the multiplier $f_{\tau'}((B,t') \mid S) = 0$, and so $y_{S'}^u$ is charged at most once by every block $B$. Furthermore, if flush $(B,\tau)$ charges dual variable $y_{S'}^u$ then the coefficient $f_u((B,t)\mid S')$ is at most the number of pages from block $B$ that were in cache at the end of time step $u-1$. Since the total number of pages in cache at the end of time step $u-1$ was at most $k$, the total amount charged to $y_{S'}^u$ is at most $k \cdot y_{S'}^u$.
\end{proof}

\subsection{An $O(\log k)$-Competitive Monotone-Incremental Fractional Algorithm}

\label{sec:fractional}

In this section we give a competitive fractional algorithm for block caching with eviction cost. For simplicity of presentation, our algorithm is not online in the strict sense, as at time $\tau$ we allow it to change the value of $\phi_B^t$ for $t < \tau$. However, it has the crucial property that it only increases LP variables $\phi_B^t$. We call an algorithm with this property \emph{monotone-incremental}. This property suffices for our rounding procedure in \cref{sec:rounding} to yield an online algorithm. We prove:
\begin{theorem}
\label{thm:montone_incr}
There is an $O(\log k)$-competitive monotone-incremental fractional algorithm for block-aware caching with eviction cost.
\end{theorem}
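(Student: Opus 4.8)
The plan is to run an online primal-dual algorithm on the pair \hyperref[eq:lp_p]{(P)}/\hyperref[eq:lp_d]{(D)}, processing the constraints of \hyperref[eq:lp_p]{(P)} in order of their time index: when step $\tau$ arrives, all Wolsey constraints carrying index $\tau$ (one per set $S\subseteq\mathcal B\times[T]_0$) become active, and the algorithm restores feasibility of all of them before moving on. The algorithm will only ever \emph{raise} primal variables $\phi_B^t$ (possibly revisiting variables with $t<\tau$) and dual variables $y_S^\tau$, so it is monotone-incremental by construction. The competitive ratio then follows from weak duality once we establish: (i) after step $\tau$ is processed, every constraint of \hyperref[eq:lp_p]{(P)} with index $\le\tau$ holds, so by \cref{lem:integerpts} the final $\phi$ is feasible (this uses \cref{claim:fsubmod}); (ii) $y$ stays feasible for \hyperref[eq:lp_d]{(D)} throughout; and (iii) the total primal cost is $O(\log k)$ times the final dual objective. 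We expect to be able to invoke the online submodular cover machinery of \cite{gupta2020online} for the structure of the argument, specialized to the functions $f_\tau$.

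For the update at step $\tau$: while some index-$\tau$ constraint of \hyperref[eq:lp_p]{(P)} is violated — say for the set $S$ consisting of the currently ``saturated'' flushes (those $(B,u)$ for which the interval sum in \eqref{eq:xpt_def} has already reached $1$) together with all time-$0$ flushes — continuously raise the corresponding dual variable $y_S^\tau$. Simultaneously, for each $(B,t)$ with $f_\tau((B,t)\mid S)\ge 1$, raise $\phi_B^t$ according to a multiplicative (exponential) law of the form
\[
  c_B\,\frac{d\phi_B^t}{d y_S^\tau}
  \;=\; f_\tau\bigl((B,t)\mid S\bigr)\Bigl(\phi_B^t+\tfrac1k\Bigr),
\]
freezing an individual $\phi_B^t$ once it hits $1$. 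Exponential growth forces $\phi_B^t$ to reach $1$ after $y_S^\tau$ has increased by at most $O\!\bigl(c_B\ln(k+1)/f_\tau((B,t)\mid S)\bigr)=O(c_B\log k)$, so every violated constraint is driven to feasibility after a bounded dual increase and the inner loop terminates; we then pass to the next active constraint. Dual feasibility (ii) is argued as in \cref{det_primal_feasible}: $y_S^\tau$ enters the dual constraint of $(B,t)$ with coefficient $f_\tau((B,t)\mid S)$, and we stop pushing it (with respect to that constraint) exactly when $\phi_B^t$ reaches $1$, after which the relevant marginals vanish and cannot re-violate the constraint, since saturated flushes never become un-saturated.

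For (iii), each unit of increase of $y_S^\tau$ raises the primal objective by $\sum_{(B,t)} f_\tau((B,t)\mid S)\bigl(\phi_B^t+\tfrac1k\bigr)$. The ``$\phi$'' part is $<n-k-f_\tau(S)$ since we are working on a (barely) violated constraint, and $n-k-f_\tau(S)$ is precisely the coefficient of $y_S^\tau$ in the dual objective, so this part contributes at most $1$ unit of primal objective per unit of dual objective. The ``$\tfrac1k$'' part contributes $\tfrac1k\sum_{(B,t)} f_\tau((B,t)\mid S)$ per unit of $y_S^\tau$, and the crux is that this sum is $O(k)$: with the correctly chosen set of variables to raise, it counts (with multiplicity) only the pages currently contributing to cache occupancy, which is $\le k$ — the very same counting that underlies the $k$-competitive deterministic algorithm of the previous subsection. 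Integrating the exponential law (so that $c_B(\phi_B^t+\tfrac1k)$ grows geometrically in $\int f_\tau((B,t)\mid S)\,dy_S^\tau$), the total primal cost telescopes to $O(\log k)$ times the total dual objective, giving the claimed ratio.

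The hard part — and the reason a naïve transcription of the continuous primal-dual dynamics of \cite{BBN07,BBN08} fails — is precisely the choice of the additive offset and the control of its cumulative contribution. A single dual increase at time $\tau$ can simultaneously push up $\phi_B^t$ for \emph{every} $t$ lying in the current non-request window of a page of $B$, so spreading the offset as $1/(\text{number of primal variables})\approx 1/(|\mathcal B|T)$ would blow the ratio up to $O(\log(|\mathcal B|T))$, while too large an offset makes the inner loop overshoot the demand $n-k-f_\tau(S)$. The resolution is the structural observation that $\sum_{(B,t)} f_\tau((B,t)\mid S)\phi_B^t$ cannot overshoot $n-k$ by more than a constant, and, crucially, that the number of $(B,t)$ pairs feeding a given dual increase (equivalently, the mass $\sum_{(B,t)} f_\tau((B,t)\mid S)$ over the raised variables) is governed by cache occupancy rather than by $T$, which is exactly what permits the offset $\Theta(1/k)$. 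Turning this into a rigorous invariant — specifying precisely which past variables $\phi_B^t$ are raised and at what rates, and verifying that the saturated set $S$ evolves monotonically so that the charging argument goes through — is the main technical content of the rest of \cref{sec:fractional}; that this monotone-incremental fractional solution suffices for the rounding is deferred to \cref{sec:rounding}.
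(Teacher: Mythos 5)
You have the right primal--dual skeleton: raise a dual variable $y_{S'}^\tau$ for a violated Wolsey constraint, raise the primal variables $\phi_B^t$ multiplicatively with an additive offset, maintain the saturated flush set $S$, and close via weak duality. But the proposal punts on the single decision that makes the argument work, and states explicitly that ``specifying precisely which past variables $\phi_B^t$ are raised'' is still to be done. That is not a loose end: it is the load-bearing idea, and the rule you \emph{do} write down is incorrect. Raising $\phi_B^t$ for \emph{all} $(B,t)$ with $f_\tau((B,t)\mid S)\ge 1$ fails, because if a page $p\in B$ was last requested long in the past, then \emph{every} $t\in(r(p,\tau),\tau]$ has $f_\tau((B,t)\mid S)\ge 1$. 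The mass $\sum_{(B,t)} f_\tau((B,t)\mid S)$ over the raised variables is then not governed by cache occupancy; it scales with $T$, and the $\tfrac1k$ offset cannot absorb it. The paper resolves this with the notion of an \emph{alive} flush: $(B,t)$ is alive at $\tau$ iff $t=r(p,\tau)+1$ for some $p\in B$. Only alive flushes are raised, which caps the number per block at $\beta$, and — since each block can contribute up to $\beta$ alive flush variables — the correct offset is $1/(k\beta)$, not $1/k$. The resulting key inequality is
\[
\sum_{(B,t)\ \text{alive}} \frac{f_\tau\bigl((B,t)\mid S'\bigr)}{k\beta}\;\le\; n-k-f_\tau(S'),
\]
proved by comparing each alive $(B,t)$ to $(B,\tau)$ and then using $(z-1)/k\le z-k$ for $z\ge k+1$. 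Your claim that $\sum_{(B,t)} f_\tau((B,t)\mid S)$ is $O(k)$ absolutely is not the right bound; it scales with the demand $n-k-f_\tau(S')$ with an extra $k\beta$ factor, and the analysis charges against that demand in the dual objective.

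Two further calibration gaps. First, the rate law you write omits a $\ln(k\beta+1)$ multiplicative factor: the paper uses
\[
\frac{d\phi_B^t}{dy_{S'}^\tau}\;=\;\frac{\ln(k\beta+1)}{c_B}\,f_\tau\bigl((B,t)\mid S'\bigr)\Bigl(\phi_B^t+\tfrac{1}{k\beta}\Bigr),
\]
which is exactly what makes $\phi_B^t$ reach $1$ at the moment the dual constraint for $(B,t)$ becomes tight (integrating gives $\phi_B^t=\frac{1}{k\beta}(e^{\ln(k\beta+1)\cdot\sum f y/c_B}-1)$). Without it, if you stop at dual tightness you have $\phi_B^t<1$ and the flush is never committed to $S$; if you stop at $\phi_B^t=1$ the dual is infeasible by a $\ln k$ factor, and you would need to say that you rescale $y$ (which still gives $O(\log k)$, but is a step you did not take). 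Second, your claim that ``we stop pushing it with respect to that constraint exactly when $\phi_B^t$ reaches $1$'' is not how dual feasibility is established in the paper; there, one stops when the dual constraint of some \emph{alive} $(B_0,t_0)$ becomes tight, and a separate argument shows that for non-alive $(B,t)$ the tight alive flush at the same marginal would have triggered first. In short, the high-level plan is right and you have correctly located where the difficulty lies, but the aliveness restriction and the $1/(k\beta)$ offset are precisely what you left unfilled, and they are the content of the theorem.
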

To describe the algorithm, we define a flush $(B,t)$ to be \emph{alive} at time $\tau$ if $t=r(p,\tau)+1$ for some $p\in B$. Intuitively, for an offline algorithm it is most beneficial to flush a block $B$ only at time steps directly after some page from $B$ was requested. Accordingly, our fractional algorithm will increase $\phi_B^t$ only for $(B,t)$ that are alive. The algorithm is given in \cref{alg:onlinerand}. It starts by initializing $S$ as the set of all flushes at time $0$, $\phi$ as the corresponding characteristic vector, and $y$ as the all-$0$-vector. At time $\tau$, when some constraint $(S',\tau)$ is violated, we will show that this will also be the case for some $S'\supseteq S$, so that the condition of the while-loop will be true. We then increase the corresponding dual variable as well as primal variables corresponding to all alive flushes according to \eqref{eq:incrPhi}. While doing so, we occasionally add new flushes to the set $S$. As we will show later, all flushes $(B,t)$ added to the set $S$ will satisfy $\phi_B^t=1$, i.e., they are chosen integrally by the fractional algorithm.
\begin{algorithm}
	\caption{$O(\log k)$-competitive monotone-incremental fractional algorithm}
	\label{alg:onlinerand}
	\begin{algorithmic}[1]	
		\State $S \leftarrow \{(B, 0) : B \in \mathcal{B} \}$.
		\State $\phi \leftarrow \chi_S$, \ $y \leftarrow \vec 0$.
		\For{time $\tau = 1, \ldots, T$}
		\While{$\exists S' \supseteq S$ s.t. primal constraint of $(S',\tau)$ violated} \label{alg:frac_startwhile}
		\State Increase $y^\tau_{S'}$ continuously, and meanwhile for every alive $(B,t)$ increase $\phi_B^{t}$ at rate \begin{align}\frac{d\phi_B^t}{dy_{S'}^\tau} = 
				 \frac{\ln(k \cdot \bsize+1)}{c_B} \cdot   f_\tau ((B,t) \mid S') 
 \cdot  \left(\phi_B^{t} + \frac{1}{k\cdot \bsize}\right)
			\label{eq:incrPhi}
		\end{align} until the dual constraint corresponding to some alive $(B_0,t_0)$ for which the inequality $f_\tau((B_0,t_0)\mid S')\ge 1$ is tight. \label{line:frac_roc}
		\State $S \leftarrow S \cup \{(B_0,t_0)\}$. \label{alg:frac_endwhile}
		\EndWhile
		\EndFor
	\end{algorithmic}
\end{algorithm}
\begin{lemma}
	\cref{alg:onlinerand} terminates.
\end{lemma}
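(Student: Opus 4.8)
The plan is to reduce the claim to two finiteness statements: each pass through the \textbf{while}-loop of lines~\ref{alg:frac_startwhile}--\ref{alg:frac_endwhile} runs for only a finite increase of the dual variable $y^\tau_{S'}$, and for each fixed $\tau$ the \textbf{while}-loop executes only finitely many passes; since the outer \textbf{for}-loop has exactly $T$ iterations, termination then follows.

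The second statement is the easier one. I would argue that every pass strictly enlarges $S$: the flush $(B_0,t_0)$ added in line~\ref{alg:frac_endwhile} satisfies $f_\tau((B_0,t_0)\mid S')\ge1$ by the stopping rule in line~\ref{line:frac_roc}, but if $(B_0,t_0)$ were already in $S'$ we would have $f_\tau((B_0,t_0)\mid S')=f_\tau(S')-f_\tau(S')=0$; hence $(B_0,t_0)\notin S'\supseteq S$. Since $S$ never shrinks and is contained in the finite set $\mathcal B\times[T]_0$, the \textbf{while}-loop can make at most $|\mathcal B|\cdot(T+1)$ passes over the entire run, and in particular it exits, because it only does so when its condition fails.

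For the first statement, note that within a single pass $S'$ is fixed, so every coefficient $f_\tau((B,t)\mid S')$ is a constant and the left-hand side of the dual constraint of each $(B,t)$ grows linearly in $y^\tau_{S'}$ at rate $f_\tau((B,t)\mid S')$. It therefore suffices to exhibit one \emph{alive} flush $(B_0,t_0)$ with time coordinate at most $\tau$ and $f_\tau((B_0,t_0)\mid S')\ge1$: the left-hand side of its dual constraint then grows at rate at least $1$ from a value at most $c_{B_0}$, so it becomes tight after $y^\tau_{S'}$ increases by at most $c_{B_0}<\infty$, and the pass stops no later than that (possibly sooner, if a different alive constraint goes tight first). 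To produce such a flush I would use that the \textbf{while}-condition guarantees $\sum_{B,t}f_\tau((B,t)\mid S')\phi_B^t<n-k-f_\tau(S')$; since the left-hand side is nonnegative and $f_\tau$ is integer valued, $f_\tau(S')\le n-k-1<n-k$, so by definition of $f_\tau$ strictly fewer than $n-k$ pages are missing at time $\tau$ according to $S'$, i.e. at least $k+1$ pages --- hence at least one page $p\ne p_\tau$ --- are in cache at time $\tau$ according to $S'$. Because $S$ is seeded with all time-$0$ flushes and $S'\supseteq S$, a never-requested page would be missing, so $p$ has been requested, giving $1\le r(p,\tau)+1\le\tau$; then $(B(p),r(p,\tau)+1)$ is alive, has time coordinate at most $\tau$, and makes $p$ newly missing, so $f_\tau((B(p),r(p,\tau)+1)\mid S')\ge1$, as needed. (It is also straightforward, though unnecessary for termination, that solving the ODE~\eqref{eq:incrPhi} over a bounded $y^\tau_{S'}$-interval keeps each $\phi_B^t$ finite.)

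I expect the main obstacle to be precisely the construction of a valid stopping flush in the previous paragraph: this is where both the truncation of $f_\tau$ at $n-k$ and the convention of initializing $S$ with all time-$0$ flushes are essential. Everything else is bookkeeping with the finiteness of $\mathcal B\times[T]_0$ and the linear growth of the dual left-hand sides.
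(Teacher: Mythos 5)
Your proposal is correct and follows essentially the same structure as the paper's proof: show each pass of the while-loop ends because some dual constraint for an alive flush with $f_\tau((B_0,t_0)\mid S')\ge 1$ becomes tight, and then bound the number of passes. The only cosmetic difference is that the paper bounds passes by observing $f_\tau(S)$ strictly increases until hitting $n-k$, while you bound them by the growth of $|S|$; you also spell out (correctly, using the truncation of $f_\tau$ and the time-$0$ seeding of $S$) why a suitable alive flush exists, which the paper asserts without elaboration.
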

\begin{proof}
	If the condition of the while-loop is true, then $f_\tau(S')<n-k$, and thus there exists some alive $(B_0,t_0)$ with $f_\tau((B_0,t_0)\mid S')\ge 1$. Increasing $y_{S'}^\tau$ sufficiently will eventually tighten a corresponding constraint, so each iteration of the while-loop terminates. When a new element is added to $S$ at the end of an iteration, $f_\tau(S)$ increases by at least $1$. When $f_\tau(S)$ has reached value $n-k$ (or earlier), the condition of the while-loop cannot be true any more.
\end{proof}

\begin{lemma}
	\label{lem:frac_pd_feasible}
	At the end of \cref{alg:onlinerand}, $\phi$ is feasible for~\eqref{eq:lp_p} and $y$ is feasible for~\eqref{eq:lp_d}.
\end{lemma}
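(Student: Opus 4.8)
The plan is to verify the two feasibility claims separately, both essentially by the same bookkeeping used in the deterministic case but now tracking the continuous increase of $\phi$.

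For dual feasibility: I would argue that no dual constraint is ever violated. The only dual variable being increased at time $\tau$ is $y_{S'}^\tau$ for some $S' \supseteq S$, and the while-loop's stopping rule halts the increase exactly when the dual constraint of some alive $(B_0,t_0)$ with $f_\tau((B_0,t_0)\mid S')\ge 1$ becomes tight. Hence any constraint whose coefficient $f_\tau((B,t)\mid S')$ with respect to the currently-increased $y_{S'}^\tau$ is positive cannot be pushed past tightness. For constraints with zero coefficient, increasing $y_{S'}^\tau$ does not affect them. Exactly as in \cref{det_primal_feasible}, dual variables $y_{S'}^u$ for $u > \tau$ remain $0$ when we are processing time $\tau$, so the fact that future coefficients $f_u(\cdot\mid\cdot)$ are not yet known is harmless. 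So $y$ stays feasible throughout.

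For primal feasibility: I would first confirm the claim implicitly used in the description of \cref{alg:onlinerand} — that whenever some constraint $(S',\tau)$ is violated, there is in fact a violated constraint with $S' \supseteq S$. This follows from submodularity of $f_\tau$ (\cref{claim:fsubmod}): replacing $S'$ by $S'\cup S$ only decreases each marginal $f_\tau((B,t)\mid S')$ and decreases the right-hand side $n-k-f_\tau(S')$, but it does so "consistently" — more precisely, since $S$ is itself chosen integrally with $\phi_B^t=1$ on $S$, the contribution of the $S$-flushes already saturates their part of the constraint, so the residual constraint for $S'\cup S$ is violated whenever the one for $S'$ is. (I expect this monotonization argument to be the main obstacle: one has to be careful that adding $S$ into the conditioning set does not "help" the constraint get satisfied via the $\phi$-terms of flushes inside $S\setminus S'$; the key point is that those flushes have $\phi=1$ and their marginals telescope into exactly $f_\tau(S'\cup S)-f_\tau(S')$, which cancels.) Once this is established, the while-loop at time $\tau$ runs until no constraint $(S',\tau)$ with $S'\supseteq S$ is violated. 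I then need to deduce that in fact \emph{no} constraint $(S',\tau)$ is violated, i.e. that the current $\phi$ satisfies the Wolsey LP for $f_\tau$; this is where the previous step is used in the contrapositive. Having shown all constraints of the form $(\cdot,\tau)$ hold at the end of iteration $\tau$, I must check they continue to hold at later times $\tau' > \tau$: since the algorithm only ever \emph{increases} $\phi$ (it is monotone-incremental) and only increases $y$, and since for a fixed $\tau$ the constraint $(S',\tau)$ is a "$\ge$" inequality whose left side is nondecreasing in $\phi$ (all coefficients $f_\tau((B,t)\mid S')\ge 0$), past constraints are never broken. Hence at termination every constraint is satisfied and $\phi$ is feasible for \eqref{eq:lp_p}.

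A couple of side facts I would record along the way, since they are needed here or cited as "shown later": the dynamics \eqref{eq:incrPhi} keep $\phi_B^t \le 1$ for alive flushes — indeed $\phi_B^t$ starts at $0$ (for $t\ge 1$), grows at a rate proportional to $\phi_B^t + \frac{1}{k\beta}$, so it reaches $1$ only after the multiplicative factor has grown by $(1 + k\beta)$, at which point we are entitled to "freeze" it at $1$ and add $(B,t)$ to $S$ (this is the sense in which flushes added to $S$ satisfy $\phi_B^t = 1$); and $f_\tau(S)$ strictly increases each time a new flush is added to $S$, which together with the cap $n-k$ bounds the number of while-loop iterations, re-deriving termination. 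With dual feasibility, primal feasibility, and these structural observations in hand, the lemma follows.
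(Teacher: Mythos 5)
Your dual-feasibility argument has a gap: you write that "any constraint whose coefficient $f_\tau((B,t)\mid S')$ with respect to the currently-increased $y_{S'}^\tau$ is positive cannot be pushed past tightness," but the stopping rule in line~\ref{line:frac_roc} of \cref{alg:onlinerand} only watches \emph{alive} flushes $(B_0,t_0)$. A non-alive flush $(B,t)$ can easily have $f_\tau((B,t)\mid S')>0$ — for instance if no page of $B$ has $r(p,\tau)+1=t$, yet some page of $B$ last requested strictly before $t-1$ is still missing at $\tau$ — and the algorithm does not monitor its dual constraint at all. So "exactly as in \cref{det_primal_feasible}" does not apply here: \cref{alg:onlinedet} watches \emph{every} $(B,t)$ with positive marginal, whereas \cref{alg:onlinerand} does not. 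The paper bridges this by taking, for any such $(B,t)$, the maximal $t_0\le t$ with $(B,t_0)$ alive at $\tau$ (such a $t_0$ exists precisely when $f_\tau((B,t)\mid S')>0$), noting that no page of $B$ is requested in $[t_0,t)$ and therefore $f_{\tau'}((B,t_0)\mid S'')=f_{\tau'}((B,t)\mid S'')$ whenever $\tau'\ge t$, and concluding that the left-hand side of $(B,t_0)$'s dual constraint is never smaller than that of $(B,t)$'s — so the alive constraint becomes tight no later than the non-alive one, and the stopping rule fires first. Without this reduction, the claim that $y$ stays feasible is unsubstantiated.

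The "side fact" that $(B,t)\in S$ implies $\phi_B^t=1$ is also load-bearing and you have only waved at it. The algorithm adds $(B_0,t_0)$ to $S$ when its \emph{dual constraint} becomes tight, not when $\phi_{B_0}^{t_0}$ reaches $1$; that these two events coincide is exactly the content of integrating the ODE $d\phi/dy=\eta\,(\phi+1/(k\beta))$ with $\eta=\ln(k\beta+1)\,f_\tau((B,t)\mid S')/c_B$ and substituting the tightness condition $\sum_{S',\tau}f_\tau((B,t)\mid S')\,y_{S'}^\tau=c_B$. This computation is what makes the multiplicative-update rate in \eqref{eq:incrPhi} the right one, and the primal-feasibility half does not close without it. The rest of your primal argument — showing a violated constraint yields a violated constraint on a superset of $S$ by telescoping the marginals of the $\phi=1$ flushes, then invoking termination of the while-loop and monotonicity of $\phi$ to preserve past constraints — correctly reconstructs the paper's use of \cref{claim:S_max}.
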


Before proving \cref{lem:frac_pd_feasible}, we need the following claim, whose proof we leave for \cref{sec:extra_proofs}.
\begin{definition}
	Given a fractional solution $\phi$, we say that the constraint $(S, \tau)$ is maximal-integral if for any $(B,t)$ such that $\phi_B^t = 1$ it holds that $(B,t) \in S$.
\end{definition}

\begin{restatable}{claim}{smaxint}
	\label{claim:S_max}
	If a fractional solution $\phi$ to \eqref{eq:lp_p} has no violated maximal-integral constraints, then $\phi$ is feasible.
\end{restatable}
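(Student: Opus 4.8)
If a fractional solution $\phi$ to \eqref{eq:lp_p} has no violated maximal-integral constraints, then $\phi$ is feasible.

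The plan is to dominate an arbitrary constraint of~\eqref{eq:lp_p} by a maximal-integral one obtained by enlarging the ``base set'' $S$. Let $I := \{(B,t) : \phi_B^t = 1\}$, which is finite since $\mathcal{B}\times[T]_0$ is finite. Given any constraint indexed by a pair $(S,\tau)$, set $S' := S\cup I$. By construction every $(B,t)$ with $\phi_B^t = 1$ lies in $S'$, so $(S',\tau)$ is maximal-integral and hence satisfied by hypothesis. It therefore suffices to prove that satisfaction of $(S',\tau)$ implies satisfaction of $(S,\tau)$; since $(S,\tau)$ is arbitrary, this gives feasibility of $\phi$.

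For a set of flushes $R$, write $\Phi_\tau(R) := f_\tau(R) + \sum_{(B,t)\notin R} f_\tau((B,t)\mid R)\,\phi_B^t$, so that the constraint indexed by $(R,\tau)$ is exactly $\Phi_\tau(R)\ge n-k$. The heart of the argument is the inequality $\Phi_\tau(S)\ge\Phi_\tau(S')$ when $S' = S\cup I$, which I would establish by adding the elements of $I\setminus S$ one at a time. Suppose $w=(B_0,t_0)\notin S$ with $\phi_{B_0}^{t_0}=1$. Since $f_\tau(S) + f_\tau(w\mid S) = f_\tau(S\cup\{w\})$ and $\phi_{B_0}^{t_0}=1$, we may rewrite $\Phi_\tau(S) = f_\tau(S\cup\{w\}) + \sum_{(B,t)\notin S\cup\{w\}} f_\tau((B,t)\mid S)\,\phi_B^t$, whereas $\Phi_\tau(S\cup\{w\}) = f_\tau(S\cup\{w\}) + \sum_{(B,t)\notin S\cup\{w\}} f_\tau((B,t)\mid S\cup\{w\})\,\phi_B^t$. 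Submodularity of $f_\tau$ (\cref{claim:fsubmod}) gives $f_\tau((B,t)\mid S)\ge f_\tau((B,t)\mid S\cup\{w\})$ for every term, and since $\phi\ge 0$ we conclude $\Phi_\tau(S)\ge\Phi_\tau(S\cup\{w\})$. Iterating over all elements of $I\setminus S$ yields $\Phi_\tau(S)\ge\Phi_\tau(S')\ge n-k$, as desired.

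The only point requiring a little care is the bookkeeping in this induction: after absorbing $w$ into the base set, the other elements of $I$ still have $\phi$-value exactly $1$ (the solution is unchanged) and still lie outside the current base set (the elements of $I\setminus S$ are distinct), so the step can be repeated. Beyond that there is no real obstacle --- the statement is the fractional analogue of the standard fact that, for Wolsey's knapsack-cover-type inequalities, the constraint for a base set $S$ is implied by the one for $S$ together with the integral support of the solution, and submodularity does all the work.
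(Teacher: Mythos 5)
Your proof is correct and takes essentially the same approach as the paper: add the integrally-$1$ flushes to the base set one at a time and invoke submodularity of $f_\tau$ at each step. The paper phrases it contrapositively (a violated $(S,\tau)$ yields a violated $(S\cup\{(B_0,t_0)\},\tau)$), whereas you phrase it directly via $\Phi_\tau(S)\ge\Phi_\tau(S\cup\{w\})$; these are the same argument.
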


\begin{proof}[Proof of \cref{lem:frac_pd_feasible}]
	We first show feasibility of the dual. Suppose the dual constraint corresponding to some $(B,t)$ gets violated when $y_{S'}^\tau$ is increased. Let $t_0\le t$ be maximal such that $(B,t_0)$ is alive at time $\tau$. (If no such $t_0$ exists, then any pages evicted by the flush $(B,t)$ are requested again in $[t,\tau]$; but then $f_\tau((B,t)\mid S')=0$, so increasing $y_{S'}^\tau$ would not have led to a violation of the constraint corresponding to $(B,t)$.) Since no pages of $B$ are requested at times in $[t_0,t)$, we have $f_\tau((B,t_0)\mid S'')=f_\tau((B,t)\mid S'')$ for any $S''$, meaning that the dual constraint of $(B,t_0)$ would become violated at the same time during the increase of $y_{S'}^\tau$. But then $f_\tau((B,t_0)\mid S')\ge 1$ (otherwise, increasing $y_{S'}^\tau$ would not increase the left-hand side of constraint $(B,t_0)$) and therefore we would have stopped increasing $y_{S'}^\tau$ when the constraint got tight.
	
	To see that the primal is feasible, we will show that for all $(B,t)\in S$ we have $\phi_B^t=1$. It then follows from \cref{claim:S_max} in the appendix that if a primal constraint $(S',\tau)$ is infeasible, then this is also the case for some $S'\supseteq S$; thus, the algorithm would not have terminated.
	
	Consider the differential equation $dz / dy = \eta\cdot(z + \delta)$ for some constants $\eta\ge 0$ and $\delta>0$. When $y$ increases from $a$ to $b$, we have
	\begin{align}
	\ln(z(b) + \delta) - \ln(z(a) + \delta) = \eta\cdot(b-a).\label{eq:frac_ode}
	\end{align}
	For some $(B,t)$ that eventually gets added to $S$, consider the dynamics of $\phi_B^t$. It starts at $0$, and increases with every $y_{S'}^\tau$ according to \eqref{eq:incrPhi}. Applying \eqref{eq:frac_ode} for every such $y_{S'}^\tau$ and summing, we have
	\begin{align*}
		&\ln\left(\phi_B^t + \frac{1}{k\cdot \bsize}\right) - \ln \left(\frac{1}{k\cdot \bsize}\right) = \sum_{S',\tau} \frac{\ln(k \cdot \bsize+1)}{c_B}  \cdot  f_\tau((B,t) \mid S') \cdot  y_{S'}^\tau. \\
		\intertext{Taking exponents and solving, we have that} 
		\phi_B^{t} &= \frac{1}{k\cdot \bsize}\cdot \left( \exp\left(\frac{\ln(k \cdot \bsize+1)}{c_B} \sum_{S',\tau} f_\tau((B,t) \mid S') \cdot y_{S'}^\tau\right) -1 \right).
	\end{align*}
	In particular, when constraint $(B,t)$ becomes tight and is added to $S$, the value of $\phi_B^t$ is $1$.
\end{proof}

\begin{lemma}
	\label{lem:frac_cost}
	The cost of the primal is at most $O(\log k)$ times the cost of the dual.
\end{lemma}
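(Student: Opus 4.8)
The plan is a standard continuous primal–dual accounting. Both the primal cost $P=\sum_{B,t\ge 1}c_B\phi_B^t$ and the dual cost $D=\sum_{S,\tau}(n-k-f_\tau(S))\,y_S^\tau$ start at $0$, and both change only continuously as the dual variables $y_{S'}^\tau$ are raised (the discrete updates $S\leftarrow S\cup\{(B_0,t_0)\}$ change neither $P$ nor $D$). Hence it suffices to show that whenever some $y_{S'}^\tau$ is being increased we have $\frac{dP}{dy_{S'}^\tau}\le c\cdot\ln(k\bsize+1)\cdot\frac{dD}{dy_{S'}^\tau}$ for an absolute constant $c$; integrating over the whole run and using $\bsize\le k$ (so $\ln(k\bsize+1)\le\ln(k^2+1)=O(\log k)$) then gives $P\le O(\log k)\cdot D$.

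First I would record the two rates. Plainly $\frac{dD}{dy_{S'}^\tau}=n-k-f_\tau(S')$. Throughout the increase of $y_{S'}^\tau$ the constraint $(S',\tau)$ is violated: we raise $y_{S'}^\tau$ only as long as $(S',\tau)$ is violated (and a dual constraint has not yet gone tight), and since $\phi$ only grows, the left-hand side of $(S',\tau)$ only grows, so it does not become unsatisfied again — and this does not affect termination, which by \cref{claim:S_max} still happens exactly when no constraint $(S',\tau)$ with $S'\supseteq S$ is violated. Violation gives $\sum_{B,t}f_\tau((B,t)\mid S')\phi_B^t<n-k-f_\tau(S')$; as the left side is nonnegative and the right side is an integer, $\frac{dD}{dy_{S'}^\tau}=n-k-f_\tau(S')\ge 1$, and moreover $f_\tau(S')<n-k$. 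On the primal side, substituting \eqref{eq:incrPhi},
\[
\frac{dP}{dy_{S'}^\tau}=\sum_{(B,t)\text{ alive}}c_B\frac{d\phi_B^t}{dy_{S'}^\tau}=\ln(k\bsize+1)\Big(\sum_{(B,t)\text{ alive}}f_\tau((B,t)\mid S')\,\phi_B^t\;+\;\tfrac{1}{k\bsize}\sum_{(B,t)\text{ alive}}f_\tau((B,t)\mid S')\Big).
\]

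I would bound the two sums separately. For the first: all summands are nonnegative, so restricting from all flushes to alive ones only decreases the sum, and by violation of $(S',\tau)$, $\sum_{(B,t)\text{ alive}}f_\tau((B,t)\mid S')\phi_B^t\le\sum_{B,t}f_\tau((B,t)\mid S')\phi_B^t<n-k-f_\tau(S')$. For the second sum I need the combinatorial bound $\sum_{(B,t)\text{ alive}}f_\tau((B,t)\mid S')\le\bsize\,(n-f_\tau(S'))$. Writing $M(S')$ for the set of pages missing at $\tau$ according to $S'$, one has $f_\tau((B,t)\mid S')\le|\{p\in B\setminus M(S'):r(p,\tau)<t\le\tau\}|$; each block $B$ has at most $\bsize$ alive flushes at $\tau$ (their times $r(p,\tau)+1$, over $p\in B$ requested by time $\tau$, are distinct since one page is requested per step), so summing over the alive flushes of $B$ and then over blocks yields $\bsize\sum_B|B\setminus M(S')|=\bsize(n-|M(S')|)\le\bsize(n-f_\tau(S'))$. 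Hence $\tfrac1{k\bsize}\sum_{(B,t)\text{ alive}}f_\tau((B,t)\mid S')\le\tfrac{n-f_\tau(S')}{k}$, and writing $a:=n-f_\tau(S')$, an integer with $a\ge k+1$ (as $f_\tau(S')<n-k$), one checks $\tfrac ak\le 2(a-k)=2(n-k-f_\tau(S'))$ — this reduces to $a\ge\tfrac{2k^2}{2k-1}$, which holds since $(k+1)(2k-1)\ge 2k^2$ for $k\ge1$. Combining the two bounds, $\frac{dP}{dy_{S'}^\tau}\le 3\ln(k\bsize+1)\,(n-k-f_\tau(S'))=3\ln(k\bsize+1)\,\frac{dD}{dy_{S'}^\tau}$, which finishes the proof.

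The crux, I expect, is the combinatorial bound $\sum_{(B,t)\text{ alive}}f_\tau((B,t)\mid S')\le\bsize(n-f_\tau(S'))$ and its interplay with the regime $f_\tau(S')<n-k$: the naive estimate $f_\tau((B,t)\mid S')\le\bsize$ summed over all (up to $n$) alive flushes only gives $\Theta(n/k)$ for the second term, which is not $O(1)$; it is essential to charge the per-block contribution against the number of \emph{non-missing} pages of that block and then exploit that $n-f_\tau(S')$ is within a factor $2$ of $n-k-f_\tau(S')$ precisely because the constraint is violated. The other point requiring care is the claim that $(S',\tau)$ stays violated throughout the increase of $y_{S'}^\tau$, which both sums rely on.
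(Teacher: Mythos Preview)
Your proof is correct and follows essentially the same primal--dual rate comparison as the paper. The only substantive difference is in how the ``additive'' term $\tfrac{1}{k\bsize}\sum_{(B,t)\text{ alive}}f_\tau((B,t)\mid S')$ is bounded: the paper first replaces each alive $(B,t)$ by $(B,\tau)$ (monotonicity) and uses $\sum_B f_\tau((B,\tau)\mid S')\le n-f_\tau(S')-1$ together with $(z-1)/k\le z-k$, yielding a constant $2$; you instead bound each marginal by $|B\setminus M(S')|$, sum to $\bsize(n-f_\tau(S'))$, and use $a/k\le 2(a-k)$ for $a\ge k+1$, yielding a constant $3$. Both routes are valid and give the same $O(\log k)$ conclusion.
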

\begin{proof}
	Consider the algorithm at a fixed time $\tau$ during a step in which $y_{S'}^\tau$ is increased by an infinitesimal amount $dy_{S'}^\tau$. The dual profit is $dy_{S'}^\tau \cdot (n-k-f_\tau(S'))$, and so it suffices bound the corresponding increase in primal cost. The primal cost increase is:
	\begin{align*}
		&\sum_{B,t} c_B \cdot \frac{1}{c_B} \ln(k \cdot \bsize+1) \cdot f_\tau((B,t) \mid S') \cdot \left(\phi_B^t + \frac{1}{k\cdot \bsize}\right) \cdot dy_{S'}^\tau \\
		&= \sum_{B,t} \ln(k \cdot \bsize+1)\cdot f_\tau((B,t) \mid S') \cdot \left(\phi_B^t + \frac{1}{k\cdot \bsize}\right) \cdot dy_{S'}^\tau.
	\end{align*}
	Since we only increase $y_{S'}^\tau$ if the corresponding constraint in the primal is not satisfied, we have
	\begin{align}
		\sum_{B,t} f_\tau((B,t) \mid S') \cdot \phi_B^t &< n - k - f_\tau(S'). \label{eq:frac_primalunsat} \\
		\intertext{We claim that}
		\sum_{(B,t)\text{ alive}}  \frac{f_\tau((B,t) \mid S')}{k\cdot \bsize} &\leq n - k - f_\tau(S'). \label{eq:frac_1overkl}
	\end{align}
	
	Inequalities \eqref{eq:frac_primalunsat} and \eqref{eq:frac_1overkl} together imply that the increase in the primal cost is at most $2\ln(k \cdot \bsize+1)\cdot(n-k-f_\tau(S))\cdot  dy_S^\tau$, which in turn implies the lemma statement since $\bsize\le k$.
	
	To prove \eqref{eq:frac_1overkl}, we first show that 
	\begin{align}\sum_B \frac{f_\tau ((B,\tau) \mid S')}{k} \leq n  - k - f_{\tau}(S').\label{eq:frac_divminineq}\end{align}
	Note that $\sum_B f_\tau ((B,\tau) \mid S') \leq n - f_{\tau}(S') - 1$. In the case that $n - f_{\tau}(S') \geq k+1$, \eqref{eq:frac_divminineq} holds by the fact that $(z-1)/k \leq z - k$ for every $z\geq k+1$. Otherwise, when $n - f_{\tau}(S') < k+1$, then $f_\tau(S') = n-k$ (since $f_\tau$ is integer valued and truncated at $n-k$). Then \eqref{eq:frac_divminineq} holds with both sides equal to $0$.
	
	We now obtain \eqref{eq:frac_1overkl} via
	\begin{align*}
		\sum_{(B,t)\text{ alive}}  \frac{f_\tau((B,t) \mid S')}{k\cdot \bsize} &\leq \sum_{(B,t)\text{ alive}}  \frac{f_\tau((B,\tau) \mid S')}{k\cdot \bsize} \\
		&\leq \sum_{B}  \frac{f_\tau((B,\tau) \mid S')}{k}\\
		&\leq n - k - f_\tau(S')
	\end{align*}
	where the second inequality uses that there are at most $\bsize$ flushes alive at any time, and the last inequality is \eqref{eq:frac_divminineq}.
\end{proof}

\subsection{An $O(\log \Delta k)$-Competitive Online Randomized Rounding Scheme}

\label{sec:rounding}

Finally we show an online $O(\log \Delta k)$ randomized rounding scheme for our block caching LP \eqref{eq:lp_p}. Recall the definition of the aspect ratio $\Delta$ from \cref{prelim} and note that in the standard unweighted setting, $\Delta = 1$.

At time $t$, the algorithm evicts block $B$ with probability $\gamma \cdot \phi_B^\tau$, where $\gamma = O(\log k \Delta)$. If the cache is still infeasible at time $t$, evict an arbitrary block so long as at least one of its pages has $x_p^t > 0$ (recall from the definition \eqref{eq:xpt_def} that $x_p^t$ is the amount missing from page $p$ at time $t$).  For clarity of exposition, the rounding procedure is written as if the underlying fractional solution $(x,\phi)$ is computed online. However the procedure can be carried out so long as the solution is monotone-incremental; at time $\tau$, if the fractional solution increases any $\phi_B^t$ for $t < \tau$ by some amount $\delta_t$, we can evict $B$ at time $\tau$ with probability $\min(1, \gamma\cdot(\phi_B^\tau + \sum_{t < \tau} \delta_t))$. 

Thus together with \cref{thm:montone_incr}, our rounding scheme implies:

\begin{theorem}
For block-aware caching with eviction cost, there exists an $O(\log k \log (k\Delta))$-competitive algorithm.
\end{theorem}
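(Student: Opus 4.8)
The plan is to obtain the theorem by composing the two ingredients just developed. Running the monotone-incremental fractional algorithm of \cref{thm:montone_incr} produces a solution $(\phi,x)$ that is feasible for \eqref{eq:lp_p} and whose LP objective $\Phi := \sum_{B,t\geq 1} c_B\phi_B^t$ is $O(\log k)\cdot\opt$; feeding $(\phi,x)$ into the rounding scheme of \cref{sec:rounding} then gives a feasible \emph{integral} caching policy, and the core claim to prove is that this policy has expected cost $O(\log(k\Delta))\cdot\Phi$. Multiplying the two bounds gives the advertised $O(\log k\cdot\log(k\Delta))$ competitive ratio. So essentially all of the work is in establishing the guarantee of the rounding scheme, which I would state as a standalone lemma: for any monotone-incremental $(\phi,x)$ feasible for \eqref{eq:lp_p}, the scheme (sample each flush $(B,\tau)$ independently with probability $\min\{1,\gamma\phi_B^\tau\}$ for $\gamma=\Theta(\log(k\Delta))$, then repair infeasibility by greedily flushing blocks that still hold a page $p$ with $x_p^\tau>0$) outputs a feasible integral solution with expected cost $O(\log(k\Delta))\cdot\Phi$. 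Feasibility is immediate from the repair loop, and online implementability holds because $(\phi,x)$ is only increased: a retroactive increment $\delta_t$ to $\phi_B^t$ revealed at time $\tau>t$ is folded into the time-$\tau$ flush coin for $B$, exactly as described before the theorem; this only increases the chance of flushing within any relevant window and hence cannot hurt feasibility.

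The expected cost splits into a sampled part and a repair part. The sampled part is routine: by linearity, the expected cost of the independently sampled flushes is $\sum_{B,t\geq 1} c_B\min\{1,\gamma\phi_B^t\}\leq\gamma\Phi=O(\log(k\Delta))\cdot\Phi$. For the repair part, the basic estimate is that for a page $p$ with finite $r(p,\tau)$, the flush events $(B(p),u)$ for $u\in\{r(p,\tau)+1,\dots,\tau\}$ are independent, so $p$ is still in cache at time $\tau$ with probability at most $\prod_u(1-\gamma\phi_{B(p)}^u)\le\exp(-\gamma\sum_u\phi_{B(p)}^u)\le\exp(-\gamma\,x_p^\tau)$, where the last step uses the definition \eqref{eq:xpt_def} of $x_p^\tau$ as $\min\{1,\sum_u\phi_{B(p)}^u\}$ together with the cap of $1$ on flush probabilities. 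Feasibility of $(\phi,x)$ for \eqref{eq:lp_p} (via the reasoning behind \cref{lem:integerpts}, applied at the maximal-integral set from \cref{claim:S_max}) forces the fractional missing mass $\sum_p x_p^\tau$ to be at least $n-k$ at every time $\tau$; since each $x_p^\tau\le 1$, there are at least $n-k$ pages with $x_p^\tau>0$. Consequently, after sampling, the number of excess cached pages is at most the number of pages $p$ with $x_p^\tau>0$ that the sampling failed to flush, each repair flush removes at least one such ``bad'' page, and charging each repair flush to one of the bad pages it resolves yields $\expect{\text{repair cost at }\tau}\le\sum_{p:\,x_p^\tau>0} c_{B(p)}\exp(-\gamma\,x_p^\tau)$.

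The hard part will be bounding $\sum_\tau\sum_{p:\,x_p^\tau>0} c_{B(p)}\exp(-\gamma\,x_p^\tau)$. The per-page estimate above is weak precisely when the fractional solution spreads its missing mass over many pages with small $x_p^\tau$, for which $\exp(-\gamma x_p^\tau)\approx 1$; this is the regime where the submodular-cover structure must be used rather than treating \eqref{eq:lp_p} as an abstract covering LP. I would invoke the analysis of online submodular cover of \cite{gupta2020online}, in the random-rounding-with-alterations style of \cite{bansal2021efficient}: the $f_\tau$-constraints, read at the right (maximal-integral) sets, let one argue that the missing mass not resolved by sampling is, in expectation and up to the $O(\log(k\Delta))$ scaling, paid for by the fractional \emph{fetching} cost of $(\phi,x)$, i.e.\ by $\sum_\tau c_{B(p_\tau)}\, x_{p_\tau}^{\tau-1}$. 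Two subtleties make this the real technical effort. First, the aspect ratio enters — forcing $\gamma=\Theta(\log(k\Delta))$ rather than $\Theta(\log k)$ — because in the weighted setting resolving a constraint can hinge on flush probabilities as small as roughly $1/(k\Delta)$ (equivalently, on up to $k\Delta$ relevant flushes). Second, although the fractional fetching cost can exceed the fractional eviction cost, hence $\opt$, by a factor $\bsize$, this blow-up must be shown to be absorbed: since $\gamma=\Theta(\log(k\Delta))\geq\Omega(\log\bsize)$, the $\exp(-\gamma(\cdot))$ factors shave a $\bsize$ off each charged term, so the repair cost telescopes back to $O(\log(k\Delta))\cdot\Phi$ and the final ratio stays $O(\log k\cdot\log(k\Delta))$ instead of degrading to $O(\bsize\log k)$. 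Tying the repair cost to the fractional fetch cost through the submodular constraints, and then back to $\opt$ without losing the $\bsize$ factor, is where I expect the bulk of the difficulty to lie.
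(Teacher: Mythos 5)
Your overall plan — run the $O(\log k)$-competitive fractional algorithm of \cref{thm:montone_incr}, feed it to the rounding scheme, and split the rounding cost into a sampled part (bounded by $\gamma\cdot c(\phi)$) and a repair part — matches the paper's structure, and you correctly name the right tools (the submodular rounding lemma of \cite{gupta2020online} and the fetching-to-eviction cost comparison of \cref{claim:loading}). However, there are two genuine gaps.

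The larger gap is that you miss the structural preprocessing of \cref{lem:structured_solution}, which is what makes the charging actually close. That lemma modifies $(\phi,x)$, at the cost of a constant factor, so that every nonzero $\phi_B^t$ is at least $\nicefrac{1}{4k^2}$ and every $x_p^t\in[0,\nicefrac{1}{2}]\cup\{1\}$. With this in hand, the paper argues: (i) applying \cref{lem:gl_rounding} to $f_\tau$ with $\gamma=\log(4k^2\bsize\Delta)$ bounds the expected excess, so the expected repair cost at any time $\tau$ is at most $\cmax/(4k^2\bsize\Delta)=\cmin/(4k^2\bsize)$; (ii) at any time $\tau$ where repairs can occur, the requested page is not fully in fractional cache, and the preprocessing forces $x_{p_\tau}^{\tau}\geq\nicefrac{1}{4k^2}$, so the fractional \emph{fetching} cost at $\tau$ is at least $\cmin/(4k^2)$. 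Thus per-step repair cost is at most a $\nicefrac{1}{\bsize}$ fraction of per-step fetching cost, and \cref{claim:loading} converts that to $O(c(\phi))$. Without the lower bound $\nicefrac{1}{4k^2}$ on the missing mass of $p_\tau$, step (ii) collapses: the adversary can make the requested page missing by an arbitrarily small amount, so the fractional fetching cost at $\tau$ is arbitrarily small while the expected repair cost at $\tau$ remains bounded away from zero, and the charging has no base.

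The second gap is that your per-page union bound $\Pr[p\text{ survives}]\le e^{-\gamma x_p^\tau}$, leading to the sum $\sum_\tau\sum_{p:\,x_p^\tau>0}c_{B(p)}\,e^{-\gamma x_p^\tau}$, is a detour that does not point at the right quantity. You correctly observe it is too weak when mass is spread thinly, but \cref{lem:gl_rounding} does not bound a page-weighted exponential sum; it bounds $\expectation[f_\tau(S)]$ directly. The paper bypasses the per-page bound entirely: it bounds the expected number of excess cached pages by $\expectation[n-k-f_\tau(S)]$ and then uses only that each repair flush evicts at least one page and costs at most $\cmax$. Your instinct that $\gamma$ must include a $\log\bsize$ term in order to absorb the factor-$\bsize$ gap between $c_{\textsc{Fetch}}(\phi)$ and $c(\phi)$ is correct, but the reason $\Delta$ appears is that a repair may cost $\cmax$ while being charged against a fetch that costs only $\cmin$ — not, as you suggest, because flush probabilities dip to $1/(k\Delta)$.
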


Furthermore, using the round-or-separate procedure of \cite{gupta2020online}, one can simultaneously solve and round the \subcov LP \eqref{eq:wolsey_lp} offline in polynomial time. Using the analysis of this section, this implies:
\begin{theorem}
For block-aware caching with eviction cost, there exists an $O(\log (k\Delta))$-approximation algorithm.
\end{theorem}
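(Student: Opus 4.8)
\emph{Proof strategy.} The plan is to combine two ingredients. First, \eqref{eq:lp_p} is a valid relaxation of block-aware caching with eviction cost (\cref{lem:integerpts}), so its optimal value is at most $\opt$. Second, the randomized rounding scheme of \cref{sec:rounding} converts any feasible fractional solution $\phi$ of \eqref{eq:lp_p} into an integral feasible one of cost $O(\log(k\Delta))\cdot\sum_{B,t}c_B\phi_B^t$; in particular, unlike the online setting, offline we do not need the $O(\log k)$-competitive fractional algorithm of \cref{thm:montone_incr}, so no extra $\log k$ factor is incurred. It would therefore suffice to compute an (approximately) optimal fractional solution of \eqref{eq:lp_p} in polynomial time and round it. The catch is that \eqref{eq:lp_p} has exponentially many constraints, one per pair $(S',\tau)$, and the associated knapsack-cover-type separation for the linearized $f_\tau$ is not obviously tractable. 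So instead of solving and then rounding, I would solve and round \emph{simultaneously} using the round-or-separate framework of \cite{gupta2020online}.

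Concretely, fix a guess $\lambda$ for $\opt$ and run the ellipsoid method on the feasibility question ``is there $\phi\ge 0$, feasible for \eqref{eq:lp_p}, with $\sum_{B,t\ge 1}c_B\phi_B^t\le\lambda$?''. The oracle, given an iterate $\phi$: if the objective exceeds $\lambda$, return that inequality; otherwise run the rounding of \cref{sec:rounding} on $\phi$. If the rounding produces an integral feasible set of flushes $S$ of cost $\le\rho\lambda$, where $\rho=O(\log(k\Delta))$, halt and output $S$; if it does not, then—by the greedy-coverage structure of the submodular-cover rounding—$\phi$ must violate some constraint $(S',\tau)$ of \eqref{eq:lp_p}, and we return that violated constraint to the ellipsoid. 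If the ellipsoid concludes infeasibility, the optimum of \eqref{eq:lp_p} exceeds $\lambda$, hence $\opt>\lambda$. A binary search over $\lambda$ then yields an integral solution of cost $O(\log(k\Delta))\cdot\opt$. Polynomial running time follows from the standard bit-complexity bounds together with the fact that $f_\tau$ and all marginals $f_\tau((B,t)\mid S')$ are computable in polynomial time directly from the request sequence.

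It then remains to check that the analysis of \cref{sec:rounding} applies verbatim here. That analysis uses only that $\phi$ is feasible for \eqref{eq:lp_p}—by \cref{claim:S_max} even feasibility of the maximal-integral constraints suffices—and not that $\phi$ was produced online; in particular the charging of the alteration cost to the fractional fetching cost goes through for the feasible iterate handed back by the ellipsoid. Since the rounding is randomized, I would repeat it $O(\log n)$ times and output the cheapest feasible outcome (or derandomize via conditional expectations) so that the $O(\log(k\Delta))$ bound holds deterministically; in the unit-cost case $\Delta=1$ this is $O(\log k)$. Note also that applying generic Wolsey rounding to \eqref{eq:lp_p} would only give $O(\log(\max_v f_\tau(v)))=O(\log n)$, so the improvement to $O(\log(k\Delta))$ genuinely relies on the refined rounding of \cref{sec:rounding}.

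The main obstacle is the interface between the ellipsoid's fractional iterates and the rounding oracle: guaranteeing that a rounding attempt which fails to certify cost $\le\rho\lambda$ \emph{deterministically} pinpoints a violated constraint of \eqref{eq:lp_p}, rather than merely reflecting an unlucky random draw. This is precisely what the round-or-separate machinery of \cite{gupta2020online} for submodular cover is built to provide, and the work is to verify compatibility with the per-time-step structure of our function sequence $(f_\tau)_{\tau\in[T]}$—in particular that a violated intersection constraint always localizes to a single pair $(S',\tau)$—so that the separating hyperplane fed back to the ellipsoid is a genuine constraint of \eqref{eq:lp_p}.
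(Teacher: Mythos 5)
Your proposal matches the paper's argument: the paper likewise invokes the round-or-separate procedure of \cite{gupta2020online} to simultaneously solve and round the submodular-cover LP offline in polynomial time, combined with the rounding analysis of \cref{sec:rounding}. The ellipsoid/binary-search details you spell out, and your concern that a failed rounding must yield a genuine violated constraint of \eqref{eq:lp_p} rather than an unlucky draw, are exactly the machinery the paper's one-line proof implicitly relies on.
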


We perform the rounding assuming a few key properties of our fractional solution which we show we can assume (online) without changing our asymptotic guarantees.

\begin{restatable}{lemma}{structsol} \label{lem:structured_solution}
    Let $(x,\phi)$ be a fractional solution for LP \eqref{eq:lp_p}. For an additional multiplicative constant factor to the competitive ratio, we can assume that $(x,\phi)$ has the following properties:
    \begin{itemize}
        \item For every time $t$, every page $p$ has $x_p^t \in [0,\nicefrac{1}{2}] \cup \{1\}$.
        \item Every nonzero coordinate has $\phi_B^t \geq \nicefrac{1}{4k^2}$.
    \end{itemize}
\end{restatable}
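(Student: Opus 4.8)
The two properties are essentially independent, so I would prove them one at a time, composing the two transformations at the end. For the second property — that every nonzero $\phi_B^t$ is at least $1/(4k^2)$ — the natural idea is a standard ``scale-up and truncate small coordinates'' argument. Given the fractional solution $(x,\phi)$ produced by \cref{alg:onlinerand}, I would multiply every $\phi_B^t$ by a constant $c>1$ (say $c=2$) and then zero out any coordinate whose scaled value falls below $1/(4k^2)$; equivalently, any coordinate with original value below $1/(8k^2)$. The cost only goes up by the factor $c$. The work is to check feasibility is preserved: since each primal constraint $(S,\tau)$ has right-hand side $n-k-f_\tau(S)$ which is a \emph{positive integer} whenever the constraint is active, and the coefficients $f_\tau((B,t)\mid S)$ are nonnegative integers bounded by $k$ (at most $k$ pages in cache), the total contribution of the deleted small coordinates to any one constraint is at most (number of flushes) $\times k \times 1/(8k^2)$. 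Flushes that are ``alive'' at time $\tau$ number at most $\beta\le k$ per time step, but a priori there could be many times; I need to argue that only $O(k)$ flushes have nonzero coefficient in a given constraint, or more carefully use the monotone-incremental structure — the relevant flushes $(B,t)$ are those between consecutive requests, and $f_\tau$ truncation at $n-k$ bounds things. So doubling compensates the loss: each active constraint had LHS $\ge n-k-f_\tau(S)\ge 1$, doubling gives $\ge 2$, and we subtract at most a constant fraction, leaving LHS $\ge 1$. Since we only delete coordinates (and uniformly scale), the result is still monotone-incremental: I would run the truncated version online by simply never letting a $\phi_B^t$ leave $0$ until the scaled value would exceed the threshold, then jumping it up.

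For the first property — that each $x_p^t \in [0,1/2]\cup\{1\}$ — recall from \eqref{eq:xpt_def} that $x_p^t = \min\{1, \sum_{u=r(p,t)+1}^t \phi_{B(p)}^u\}$. The issue is a page whose partial sum lies strictly between $1/2$ and $1$. The fix: whenever such a partial sum would cross $1/2$, ``round it up'' to $1$ by increasing the relevant $\phi$ mass. Concretely, for each block $B$ and each maximal inter-request interval, once $\sum_u \phi_B^u$ reaches $1/2$, we bump it all the way to $1$ (e.g. by increasing $\phi_B^{t}$ at the appropriate alive time). This at most doubles the total $\phi$-mass on that interval, hence at most doubles the cost overall, and it only increases variables, preserving monotone-incrementality and feasibility (increasing $\phi$ can only help primal constraints since all coefficients $f_\tau((B,t)\mid S)$ are nonnegative). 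After this operation every $x_p^t$ is either $\le 1/2$ or $=1$, as desired. The $r(p,t)=-\infty$ case already gives $x_p^t=1$ and needs no change.

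Finally I would compose: first apply the $x_p^t$-rounding transformation, then the $\phi$-truncation (or argue the order doesn't matter up to constants, being a bit careful that truncating tiny coordinates doesn't pull an $x_p^t$ back below $1$ — but a coordinate contributing to a sum that reached $1$ has at least one coordinate $\ge 1/(2k)\gg 1/(8k^2)$ on the interval, or else we can just keep whichever coordinates are needed to certify the value $1$). Each step costs a constant factor, so the composition costs $O(1)$, and both transformations preserve the monotone-incremental property and feasibility. \textbf{The main obstacle} I anticipate is the bookkeeping in the $\phi$-truncation feasibility check: bounding the number of nonzero-coefficient flushes in a single constraint $(S,\tau)$ by $O(k)$ (rather than by $T$) so that deleting each tiny coordinate costs only $O(1/k^2)$ per constraint and the total deleted mass per constraint is $O(1)$ — this is where the truncation of $f_\tau$ at $n-k$ and the ``alive flush'' structure must be used carefully, and it is the one place a naive argument would lose a factor of $T$.
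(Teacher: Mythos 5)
Your treatment of the first property (forcing $x_p^t \in [0,\nicefrac{1}{2}]\cup\{1\}$ by rounding a page up to fully evicted once its inter-request mass reaches $\nicefrac12$, at a factor-$2$ cost) matches the paper's argument. The problem is the second property, where you yourself correctly identify the gap and then do not close it. Your scale-and-delete approach fails for exactly the reason you flag: the number of flushes $(B,t)$ with nonzero coefficient $f_\tau((B,t)\mid S)$ in a single constraint is \emph{not} bounded by $O(k)$ — a single block can have nonzero $\phi_B^t$ at $\Theta(T)$ time steps between two consecutive requests of some of its pages, all with coefficient up to $k$, and each individual increment can be arbitrarily tiny (the fractional dynamics \eqref{eq:incrPhi} add mass continuously from $0$). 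Deleting all coordinates below $1/(8k^2)$ can therefore wipe out essentially all of the mass certifying that a page is evicted, and no constant-factor rescaling recovers it. The ``truncation at $n-k$'' bounds the \emph{sum} of coefficients, not the number of terms, so it does not rescue the count.

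The paper's fix is a different mechanism: \emph{accumulate and re-time} rather than \emph{delete}. For each block $B$, the $\widetilde\phi_B^t$ mass is pooled over time and released as a single coordinate $\varphi_B^t$ only once the pooled amount reaches $1/(4k^2)$ (\cref{alg:structsol}). No mass is ever discarded — it is only delayed, which is harmless since pushing a flush to a later time $t' \ge t$ can only increase $f_\tau((B,t')\mid S)$ relative to $f_\tau((B,t)\mid S)$ for $\tau\ge t'$. The only loss at constraint $(S,\tau)$ is the not-yet-released tail of each block, bounded by $1/(4k^2)$ per block. Crucially, the bound on the number of contributing blocks comes from the \emph{first} property combined with maximal-integrality of $S$: after property 1 holds, any page with $x_p^\tau = 1$ is already covered by the maximal-integral $S$, so a flush contributes only through pages with $x_p^\tau \le \nicefrac12$, i.e.\ pages occupying $\ge\nicefrac12$ cache — of which there are at most $2k$, hence at most $2k$ relevant blocks. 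This gives total loss $\le 2k \cdot k \cdot 1/(4k^2) = \nicefrac12 < 1 \le n-k-f_\tau(S)$, repaired by the final factor of $2$. You are missing both the bucketing idea (which replaces deletion) and the ``$\le 2k$ relevant blocks via property 1 plus \cref{claim:S_max}'' observation that makes the loss bound go through.
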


We defer the proof to \cref{sec:extra_proofs}.

A key technical tool in this section is a lemma from \cite{gupta2020online}, which in turn relies on a relationship between continuous extensions of submodular functions proven by \cite{vondrak2007submodularity}.

\begin{lemma}[Lemma 2.5 of \cite{gupta2020online}]
	\label{lem:gl_rounding}
	Let $x \in [0,1]^n$ be a feasible solution to \eqref{eq:wolsey_lp}. Let $R$ be a set obtained by performing randomized rounding according to $\min(1,\gamma \cdot x)$. Then: 
	\[\expectover{R}{f(R)} \geq f(\mathcal{N}) - e^{-\gamma} f(\mathcal{N}).\]
\end{lemma}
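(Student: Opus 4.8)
I would prove this by passing to a continuous-time version of the rounding and controlling $\expectation[f]$ along it via a differential inequality. Concretely, fix independent thresholds $U_v \sim \mathrm{Unif}[0,1]$ for $v \in \mathcal{N}$, and for $t \ge 0$ let $R_t := \{v \in \mathcal{N} : U_v \le t x_v\}$. Then the sets $R_t$ are nested in $t$, each $v$ lies in $R_t$ with probability $\min(1, t x_v)$ independently, and $R_\gamma$ is distributed exactly as $R$. Set $G(t) := \expectation[f(R_t)]$, which is $F\bigl(\min(\mathbf{1}, t x)\bigr)$ for $F$ the multilinear extension of $f$. Since $G(0) = f(\emptyset) \ge 0$, it suffices to establish $G'(t) \ge f(\mathcal{N}) - G(t)$ for almost every $t$: this integrates (Gr\"onwall) to $f(\mathcal{N}) - G(t) \le e^{-t}\bigl(f(\mathcal{N}) - f(\emptyset)\bigr) \le e^{-t} f(\mathcal{N})$, and plugging in $t = \gamma$ gives the claim.

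For the differential inequality I would use the standard relationship between the multilinear extension and first differences of $f$ (the input from \cite{vondrak2007submodularity}), namely that $\partial F/\partial x_v$ at a point $p$ equals $\expectation_{R \sim p}\bigl[f(v \mid R \setminus v)\bigr]$. In the process $R_t$, element $v$ joins at rate $x_v$ while $t x_v < 1$ and is already present once $t x_v \ge 1$, so by the chain rule $G'(t) = \sum_{v : t x_v < 1} x_v \cdot \expectation\bigl[f(v \mid R_t \setminus v)\bigr]$ for almost every $t$. By submodularity, $f(v \mid R_t \setminus v) \ge f(v \mid R_t)$ in every realization, and the right-hand side is $0$ whenever $v \in R_t$; since moreover every $v \notin R_t$ automatically has $t x_v < 1$ (elements with $t x_v \ge 1$ lie in $R_t$ surely), we obtain $G'(t) \ge \expectation\bigl[\sum_{v \notin R_t} x_v \, f(v \mid R_t)\bigr]$.

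Finally I would invoke feasibility of $x$ for Wolsey's LP \eqref{eq:wolsey_lp}, applied to the realized set $S = R_t$: $\sum_{v \notin R_t} f(v \mid R_t)\, x_v \ge f(\mathcal{N}) - f(R_t)$. Taking expectations over $R_t$ gives exactly $G'(t) \ge f(\mathcal{N}) - G(t)$, completing the argument.

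The genuinely delicate point — and the crux of the proof — is arranging the derivative of $G$ into a form to which the Wolsey constraint applies: that constraint is a whole family of inequalities, one per subset of the ground set, and the computation works precisely because, after differentiating, the sum collapses onto the constraint indexed by the \emph{random realized} set $R_t$ (all other terms vanish, either because $v \in R_t$ makes $f(v \mid R_t) = 0$ or because $t x_v \ge 1$ forces $v \in R_t$). The remaining obstacles are purely technical: $t \mapsto \min(1, t x_v)$ has a kink at $t = 1/x_v$, so $G$ is only piecewise polynomial and one must argue the bound on each smooth piece (equivalently, note $G$ is absolutely continuous, so an almost-everywhere bound integrates as claimed), and the rate-of-entry computation for $G'$ should be justified through the explicit threshold coupling introduced above.
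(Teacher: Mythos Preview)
The paper does not prove this lemma; it is quoted as a black-box tool from \cite{gupta2020online} (with the remark that it ``relies on a relationship between continuous extensions of submodular functions proven by \cite{vondrak2007submodularity}''). So there is no in-paper proof to compare against.

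Your argument is correct and is essentially the standard route one expects here, and in particular it matches the hint the paper gives: you pass to the multilinear extension $F$, analyze the continuous-time process $t\mapsto R_t$, use that $\partial F/\partial x_v$ equals $\expect{f(v\mid R_t\setminus v)}$, apply submodularity to replace $R_t\setminus v$ by $R_t$, observe that the surviving sum is exactly the left-hand side of the Wolsey constraint indexed by the random set $R_t$, and integrate the resulting differential inequality $G'(t)\ge f(\mathcal N)-G(t)$ via Gr\"onwall. The technical caveats you flag (piecewise smoothness of $G$, the almost-everywhere bound sufficing because $G$ is absolutely continuous) are the right ones and are handled correctly. Nothing is missing.
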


We can now present our rounding scheme. 

\begin{algorithm}
	\caption{$O(\log k)$-Approximate Rounding}
	\label{alg:onlineround}
	\begin{algorithmic}[1]
		\For {time $\tau \in [T]$}
		\State For every block $B$ evict the set $\{p\in B \mid x_p^t > 0\}$ with probability $\min(1, \gamma \cdot \phi_B^t)$. \label{line:r-round-step}
		\State Fetch $p_\tau$ if it is missing from the cache.
		\While{the cache is infeasible} \label{line:evict_loop}
		\State Let $B$ be an arbitrary block in the cache that has a page $p$ with $x_p^t > 0$, evict the set of pages $\{p\in B \mid x_p^t > 0\}$. \label{line:evict_loop2}
		\EndWhile 
		\EndFor
	\end{algorithmic}
\end{algorithm}

Note that we assume that the fractional solution on which \cref{alg:onlineround} executes is one that has the properties given by \cref{lem:structured_solution}.

We now prove our main rounding lemma.
\begin{lemma}
	\label{lem:rounding2}
	For $\gamma = \log (4k^2 \bsize\Delta)$, given a feasible fractional solution $(x,\phi)$ with cost $c(\phi)$, \cref{alg:onlineround} produces a feasible integral cache policy of cost $O(\log k\Delta) \cdot c(\phi)$.
\end{lemma}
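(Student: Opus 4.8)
The plan is to split the cost of the integral policy produced by \cref{alg:onlineround} into the evictions of the randomized‑rounding step (\cref{line:r-round-step}) and those of the alteration loop (\cref{line:evict_loop2}), and to bound the two separately, assuming throughout (via \cref{lem:structured_solution}) that $(x,\phi)$ is structured, so that every nonzero $\phi_B^t$ is at least $\nicefrac1{4k^2}$, every $x_p^t$ lies in $[0,\nicefrac12]\cup\{1\}$, and $\phi_B^t\le 1$. The randomized‑rounding part is immediate: by linearity of expectation its total expected cost is $\sum_{B,t}c_B\min(1,\gamma\phi_B^t)\le\gamma\sum_{B,t}c_B\phi_B^t=\gamma\, c(\phi)=O(\log k\Delta)\, c(\phi)$.

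Next I would check feasibility and termination of the alteration loop. If the cache holds more than $k$ pages at time $\tau$, then at most $n-k-1$ pages are outside it, whereas feasibility of $(x,\phi)$ for \eqref{eq:lp_p} (applied to the maximal‑integral constraint at $\tau$, cf.\ \cref{claim:S_max}) forces at least $n-k$ pages to have $x_p^\tau>0$; hence some such page is in the cache, so \cref{line:evict_loop2} can proceed. By structuredness every evicted page $p$ has $x_p^\tau\ge\nicefrac1{4k^2}>0$, so $p\neq p_\tau$ and each iteration removes at least one page; thus the loop terminates and leaves a feasible cache.

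Bounding the alteration cost by $O(\log k\Delta)\,c(\phi)$ is the core of the proof. Since $\phi$ restricted to flushes at times $\le\tau$ is a feasible point of Wolsey's LP \eqref{eq:wolsey_lp} for $f_\tau$ (flushes after $\tau$ have zero $f_\tau$‑marginal, and $f_\tau$ evaluated at the set of all flushes is $n-k$), \cref{lem:gl_rounding} gives that after rounding the expected deficit $(n-k)-f_\tau(R)$ is at most $e^{-\gamma}(n-k)=\nicefrac{(n-k)}{4k^2\bsize\Delta}$, so at time $\tau$ the number of alteration evictions is at most this deficit. The crucial device is then to charge the cost $c_B$ of each alteration eviction of $B$ at time $\tau$ to one specific fractional flush $(B,u)$ of $B$ --- with $u$ the latest time in the current inter‑request interval of the triggering page $p$ for which $\phi_B^u>0$, so $x_p^\tau\ge\phi_B^u\ge\nicefrac1{4k^2}$. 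Because the alteration evicts \emph{all} pages of $B$ with positive $x$‑value, every page for which $(B,u)$ is currently ``responsible'' leaves the cache and can only re‑enter upon its next request, which terminates the interval that made $(B,u)$ responsible for it; a short case analysis then shows each flush is charged at most once over the whole run --- so the charging is a sum over flushes, with no dependence on the sequence length. It remains to bound the probability that a given flush $(B,u)$ is ever charged: it is never charged once randomized rounding has realized it (then the triggering page would already be missing), and --- for a triggering page $p$ with $x_p^\tau=1$ --- the $B$‑flushes in $p$'s interval have total mass at least $1$, so all being missed has probability at most $e^{-\gamma}$; combining this with $\phi_B^u\ge\nicefrac1{4k^2}$ and $c_B\le\Delta\cmin$ bounds the expected charge to $(B,u)$ from such events by $O(c_B\phi_B^u)$, which sums to $O(c(\phi))$.

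The step I expect to be the main obstacle is handling alterations triggered by \emph{partially} evicted pages ($x_p^\tau\le\nicefrac12$): for these the ``all responsible flushes missed'' probability need not be small, so they cannot be controlled flush‑by‑flush as above, and the naive fallback of summing the per‑step expected deficit $\nicefrac{(n-k)}{4k^2\bsize\Delta}$ over all $\tau$ incurs a spurious factor of $T$. Resolving this is precisely where the $[0,\nicefrac12]\cup\{1\}$ dichotomy and the $\nicefrac1{4k^2}$ lower bound of \cref{lem:structured_solution} must be used together: one argues that, up to constants, the residual deficit after rounding is already accounted for by fully evicted pages, so the partial case folds into the first bound --- this is the sense in which the alteration cost is ultimately charged against the fractional fetching cost while still only costing $O(\log k\Delta)\,c(\phi)$. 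Everything else (feasibility, termination, and the randomized‑rounding cost) is routine.
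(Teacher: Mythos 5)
Your proposal diverges from the paper's proof in the crucial step (bounding the alteration cost), and the route you take is, by your own admission, incomplete.

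The paper's argument is substantially simpler than your flush-charging scheme. It restricts attention to the set $\Lambda$ of time steps $\tau$ with $x_{p_\tau}^\tau>0$ (at $\tau\notin\Lambda$ the algorithm maintains $p_\tau$ in cache already and no alteration is triggered), and then bounds the \emph{per-step expected alteration cost} by the \emph{per-step fractional fetching cost}. The two facts that make this work are exactly the two halves of \cref{lem:structured_solution}, used jointly: (i) because nonzero coordinates of $\phi$ are at least $\nicefrac1{4k^2}$, a page that is not fully in the fractional cache satisfies $x_{p_\tau}^\tau\ge\nicefrac1{4k^2}$, so for every $\tau\in\Lambda$ the fractional fetching cost incurred at $\tau$ is at least $\cmin/4k^2$; (ii) by \cref{lem:gl_rounding} the expected excess in the rounded cache is small, so the expected alteration cost at $\tau$ is at most $\cmax\cdot e^{-\gamma}=\cmin/(4k^2\bsize)$ with the paper's choice of $\gamma$. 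The ratio of (ii) to (i) is $1/\bsize$, so summing over $\tau\in\Lambda$ and invoking \cref{claim:loading} (which says $c_{\textsc{Fetch}}(\phi)\le\bsize(c(\phi)+\sum_B c_B)$) yields the $O(c(\phi))$ bound on the total alteration cost. There is no need to identify a particular ``responsible'' flush $(B,u)$, no need to argue that any flush is charged at most once, and crucially no separate treatment of partially versus fully evicted pages: the structure lemma already guarantees that whenever an alteration could occur, the \emph{fractional fetching} cost at that step is bounded away from zero, and the comparison is done step-by-step.

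Your proposal's flush-charging mechanism, and the accompanying probabilistic argument that a flush is ``never charged once randomized rounding has realized it,'' are both different from the paper and left unfinished. You explicitly flag the $x_p^\tau\le\nicefrac12$ case as the main obstacle and then gesture that ``one argues that ... the partial case folds into the first bound'' --- but you do not supply this argument, and it is not at all clear that the per-flush charging can be made to work for partial evictions, since (as you note) the probability that all responsible flushes are missed need not be small. In the paper's per-step accounting this issue simply does not arise: the structure lemma's lower bound $x_{p_\tau}^\tau\ge\nicefrac1{4k^2}$ applies uniformly to every $\tau\in\Lambda$ regardless of whether $x_{p_\tau}^\tau$ is $1$ or merely $\ge\nicefrac1{4k^2}$, so the dichotomy $[0,\nicefrac12]\cup\{1\}$ is not the lever used here (it is used elsewhere, in the proof of \cref{lem:structured_solution} itself). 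You correctly sense, in your closing sentence, that the alteration cost should be compared against the fractional fetching cost --- that is indeed the paper's key move --- but you did not execute it, and the charging scheme you pursued instead has a gap that you have not closed.

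One smaller remark on your application of \cref{lem:gl_rounding}: you carefully write the deficit as $e^{-\gamma}(n-k)$, which is the literal statement of the lemma, and you treat the resulting $(n-k)$ factor as part of the difficulty. The paper's displayed bound at this point omits the $(n-k)$ factor; if you believe the lemma as stated, you should check whether the paper's per-step deficit bound, and hence the final constant, needs that extra factor, and whether the choice of $\gamma$ should be $\log(4k^2\bsize\Delta\cdot(n-k))$ to absorb it. This does not change the structure of the paper's argument (everything is still charged per step to the fetching cost), but it is a point worth flagging rather than routing around with a new charging scheme.
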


To prove \cref{lem:rounding2}, we charge the cost of the algorithm to the fetching cost of the fractional solution. To relate this fractional fetching cost to the fractional eviction cost, we need a claim which we prove in \cref{sec:extra_proofs}. 
\begin{restatable}{claim}{loading}
	\label{claim:loading}
	Let $c_{\textsc{Fetch}}(z)$ be the fetching cost of a fractional solution $z$. Then
	\[c_{\textsc{Fetch}}(z) \leq \bsize \left( c(z) + \sum_{B \in \mathcal{B}} c_B\right).\]
\end{restatable}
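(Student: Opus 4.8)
The plan is to express the fractional fetching cost in terms of the $x_p^t$ variables and then charge it against the flush variables $\phi_B^t$. Since in the fractional solution a page $p$ is only fetched — its missing value $x_p$ reset to $0$ — at a time step when $p$ is requested, the fetching cost takes the form $c_{\textsc{Fetch}}(z) = \sum_{t=1}^{T} c_{B(p_t)}\cdot x_{p_t}^{t-1}$, where $x_{p_t}^{t-1}$ is the amount of $p_t$ still missing just before the request at time $t$ is served. By \eqref{eq:xpt_def}, if $p_t$ has not been requested before time $t$ then $x_{p_t}^{t-1}=1$; otherwise, writing $s:=r(p_t,t-1)\ge 1$ for the time of the previous request to $p_t$, we have $x_{p_t}^{t-1}=\min\{1,\sum_{u=s+1}^{t-1}\phi_{B(p_t)}^u\}\le \sum_{u=s+1}^{t-1}\phi_{B(p_t)}^u$. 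Splitting the sum over $t$ according to whether $t$ is the first request to $p_t$ or a repeat request yields $c_{\textsc{Fetch}}(z)\le \sum_{p\text{ ever requested}}c_{B(p)}+\sum_{t\text{ a repeat request}}c_{B(p_t)}\sum_{u=r(p_t,t-1)+1}^{t-1}\phi_{B(p_t)}^u$.

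For the first sum, each block contains at most $\bsize$ pages, so $\sum_{p}c_{B(p)}\le \bsize\sum_{B\in\mathcal B}c_B$; morally this is the cost of the initial flushes $(B,0)$ (which have $\phi_B^0=1$ but are not counted in $c(z)=\sum_{B,t\ge1}c_B\phi_B^t$), and it is exactly what the additive $\bsize\sum_B c_B$ term of the statement absorbs. For the second sum, I would swap the order of summation to write it as $\sum_{B}\sum_{u\ge1}c_B\phi_B^u\cdot N_{B,u}$, where $N_{B,u}$ counts the repeat requests to pages of $B$ whose ``gap'' $(r(p_t,t-1),t)$ contains $u$ (all such $u$ are automatically $\ge 1$ since $r(p_t,t-1)\ge1$ for a repeat request). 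The key observation is that for a fixed page $p$ the gaps $(r(p,t-1),t)$ over consecutive pairs of requests to $p$ are pairwise disjoint, so any fixed $u$ lies in at most one of them; hence $\phi_B^u$ is charged by at most one repeat request per page of $B$, i.e. $N_{B,u}\le\bsize$. This bounds the second sum by $\bsize\sum_{B,u\ge1}c_B\phi_B^u=\bsize\,c(z)$.

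Adding the two bounds gives $c_{\textsc{Fetch}}(z)\le \bsize\,(c(z)+\sum_{B}c_B)$, as claimed. The argument is essentially bookkeeping; the only points needing care are writing $c_{\textsc{Fetch}}(z)$ correctly via the $x$-variables (and safely dropping the truncation at $1$), and keeping the first-request contribution separate from the $\phi$-sum so that it does not spuriously inflate the counts $N_{B,u}$ and instead lands cleanly on the additive $\bsize\sum_B c_B$ term. The disjointness of a page's inter-request gaps — which is what caps each flush variable's charge at $\bsize$ — is the crux of the bound, and there is no real obstacle beyond organizing this charging cleanly.
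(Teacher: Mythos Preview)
Your proof is correct. The paper reaches the same bound by a shorter, more abstract route: it introduces the \emph{unbatched} (classic paging) fetching and eviction costs $\overline c_{\textsc{Fetch}}(z)$ and $\overline c(z)$ and chains
\[
c_{\textsc{Fetch}}(z)\le \overline c_{\textsc{Fetch}}(z)\le \overline c(z)+\bsize\sum_{B}c_B\le \bsize\Bigl(c(z)+\sum_B c_B\Bigr),
\]
where the middle inequality is the per-page fact that total fetching and total eviction differ by at most the page's weight, and the last is that one block-flush evicts at most $\bsize$ pages. Your argument is this chain unrolled: your formula $\sum_t c_{B(p_t)}x_{p_t}^{t-1}$ is exactly $\overline c_{\textsc{Fetch}}(z)$; your first-request term is the additive $\bsize\sum_B c_B$; and your disjoint-gaps count $N_{B,u}\le\bsize$ is precisely the inequality $\overline c(z)\le\bsize\,c(z)$. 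So the two proofs coincide in substance---the paper's is more compact, yours more explicit about where the factor $\bsize$ enters.
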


\begin{proof}[Proof of \cref{lem:rounding2}]
	Let $x,\phi$ be a fractional solution given by \cref{lem:structured_solution}, and let $S$ be the set of flushes performed by our algorithm.
	
	The algorithm produces a feasible cache policy by construction, as we always fetch $p_t$ and we always run the eviction loop in \cref{line:evict_loop,line:evict_loop2} until the cache is feasible. Note that there is always a block to evict with a page $p$ that has $x_p > 0$, otherwise $x$ is integral, and is the characteristic vector of the pages the algorithm has in cache, in which case the algorithm's cache is already feasible since the fractional solution is feasible. Furthermore, the expected cost of the evictions due to the randomized rounding step at \cref{line:r-round-step} is at most $\gamma \cdot c(\phi) = O(\log k\Delta)\cdot c(\phi)$. 
	
	It remains to show that the total cost due to alterations in the eviction loop in \cref{line:evict_loop,line:evict_loop2} is bounded. We now show that it is at most $O(c(\phi))$.
	
	Let $\Lambda$ be the set of times $\tau$ such that  $p_\tau$ is not already fully in the fractional cache. Our algorithm maintains the invariant that if $x_p^t = 0$, then it is also fully in cache of the integral solution produced by our algorithm at time $t$. This means that at times $\tau \not\in \Lambda$, neither the fractional solution nor the rounding algorithm incur a cost increase. Hence we focus on the case where $\tau \in \Lambda$.
	
	For every $\tau$, the solution $\phi$ is feasible for the LP \eqref{eq:wolsey_lp} with the function $f^\tau$, so by \cref{lem:gl_rounding}
	\[\expect{f_\tau(S)} \geq n - k -\frac{1}{4k^2 \bsize\Delta}.\]
	In particular, this holds for all $\tau \in \Lambda$. In words, the expected number of pages in cache is bounded by $k + \frac{1}{4k^2 \bsize\Delta}$. Since every eviction due to \cref{line:evict_loop2} costs at most $\cmax$ and evicts at least one page, the expected cost of the alteration while loop at time $\tau$ is bounded by $\cmax / (4k^2 \bsize\Delta) = \cmin / 4k^2 \bsize$.
	
	On the other hand, since $\tau \in \Lambda$, the page $p_\tau$ is not fully in cache, and since by \cref{lem:structured_solution} the fractional solution evicts pages in increments of at least $1/(4k^2)$, it holds that $x_{p_\tau}^\tau \geq 1/(4k^2)$. This means that the \emph{fetching} cost of the fractional solution at time $\tau$ is at least $\cmin / 4k^2$. 
	
	Hence the expected cost of the alteration step in time $\tau$ is at most the fractional fetching cost at time $\tau$, divided by $\bsize$. Summing this inequality over time, the total cost paid by the algorithm over all all time due to \cref{line:evict_loop2} is at most $c_{\textsc{Fetch}}(\phi) / \bsize$. By \cref{claim:loading}, the fetching cost $c_{\textsc{Fetch}}(\phi) \leq \beta(c(\phi) + \sum_{B \in \mathcal{B}} c_B)$, and hence the total cost of alterations is at most $c(\phi) + \sum_{B \in \mathcal{B}} c_B$. This completes the proof.
\end{proof}

		\section{Fetching Cost}

\label{sec:loading}

We present our $\Omega(\beta)$ lower bound against randomized algorithms for online block-aware caching with fetching costs. We first present a bicriteria rounding algorithm for the na\"{\i}ve LP of \cref{sec:basic_LP_formulation}. We then argue that this procedure can be used to derandomize any randomized algorithm for block-aware caching with fetching costs. Together with the lower bound against deterministic algorithms given by \cite{beckmann2021brief}, this implies a lower bound against randomized algorithms. 

\subsection{Bicriteria Online Rounding Algorithm}

\label{sec:hk_rounding}

	Consider the following deterministic online rounding scheme. For every page $p$, evict $p$ from the cache at time $t$ if $x_p^t > \nicefrac{1}{2}$. If a page $p_t$ is not in cache upon request at time $t$, then at time $t$ fetch all pages from $B(p_t)$ such that $x_p^t \leq 1/2$.

\begin{theorem}
	\label{thm:hk_rounding}
	Given a feasible fractional solution $x$ to the block-aware caching problem, the procedure above produces an integral solution that uses at most $2k$ cache space at any point in time, and whose fetching cost is at most twice the fetching cost of $x$.
\end{theorem}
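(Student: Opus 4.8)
I would prove the two guarantees separately: a structural invariant together with a counting bound for the $2k$ space guarantee, and a charging argument against the per‑block fractional fetch mass for the cost guarantee. For the space bound, first I would establish the invariant that at the end of every step $t$, every page held in the cache satisfies $x_p^t\le\nicefrac12$. Indeed a page can enter the cache only as the requested page $p_t$ (in which case $x_{p_t}^t=0$ by feasibility of $x$) or inside a batch fetch of $B(p_t)$, which by the rounding rule only pulls in pages whose current value is $\le\nicefrac12$; and since the procedure evicts a page in the very step its fractional value first exceeds $\nicefrac12$, any cached page has had value $\le\nicefrac12$ continuously since it entered (assuming for simplicity the cache is initially empty, there is nothing to check at $t=0$). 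Hence at time $t$ the cache is contained in $\{p : x_p^t\le\nicefrac12\}$. Feasibility of $x$ gives $\sum_p x_p^t\ge n-k$; writing $m:=|\{p:x_p^t\le\nicefrac12\}|$ this yields $n-k\le\sum_p x_p^t\le (n-m)+\tfrac m2 = n-\tfrac m2$, so $m\le 2k$ and the algorithm uses at most $2k$ slots at all times.

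\textbf{Fetch cost — setup.} The only fetch performed at step $t$ is the batch fetch of $B(p_t)$, incurred exactly when $p_t$ is not in cache; call such a step a \emph{miss of block $B=B(p_t)$}, costing $c_B$. Recalling that in the na\"ive fetching LP of \cref{sec:basic_LP_formulation} fetching a $\delta$-fraction of $B$ costs $\delta c_B$ — equivalently, writing $\phi_B^t$ for the amount of $B$ fetched at time $t$ by the fractional solution, we have $\phi_{B(p)}^t\ge (x_p^{t-1}-x_p^t)^+$ for every page $p$ — it suffices to prove, for each block $B$, that the number of misses of $B$ is at most $2\sum_t\phi_B^t$. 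Summing $c_B$ times this over all blocks then bounds the integral fetch cost by $2\sum_{B,t}c_B\phi_B^t$, i.e.\ twice the fractional fetch cost.

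\textbf{Fetch cost — the interval argument.} Fix $B$ and let $t_1<t_2<\cdots$ be its misses. I would charge the $i$-th miss to a time interval $I_i$, with the $I_i$ pairwise disjoint, over which the fractional solution fetches strictly more than $\nicefrac12$ of $B$, i.e.\ $\sum_{u\in I_i}\phi_B^u>\nicefrac12$; this gives the desired count. Set $I_i:=(t_{i-1},t_i]$ for $i\ge2$, and $I_1:=(s,t_1]$ where $s<t_1$ is the previous request to page $p_{t_1}$ (or $I_1:=[1,t_1]$ if there is none); these are disjoint since $s<t_1<t_2<\cdots$. Now track $p_{t_i}$ over the steps of $I_i$: either $x_{p_{t_i}}$ is already $>\nicefrac12$ at the left endpoint of $I_i$, or it is $\le\nicefrac12$ there (because $p_{t_i}$ was just requested, or just batch-fetched into cache — and in the no-previous-request case there is nothing to argue, since $p_{t_1}$ starts fully missing), but then the algorithm must evict $p_{t_i}$ at some step of $I_i$, since $p_{t_i}$ sits in cache near the left endpoint yet is out of cache at $t_i$, which by the eviction rule forces $x_{p_{t_i}}>\nicefrac12$ somewhere in $I_i$. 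In either case, combined with $x_{p_{t_i}}^{t_i}=0$ (feasibility at $t_i$), the total down-movement $\sum_{u\in I_i}(x_{p_{t_i}}^{u-1}-x_{p_{t_i}}^u)^+$ is at least (the maximum value attained in $I_i$) $-$ (the final value) $>\nicefrac12$, hence $\sum_{u\in I_i}\phi_B^u>\nicefrac12$.

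\textbf{Main obstacle.} The delicate step is the last paragraph: showing the fractional solution must spend $\Omega(1)$ fetch mass of $B$ inside each $I_i$. This requires (i) handling the fact that $x_{p_{t_i}}$ may oscillate around $\nicefrac12$ within $I_i$, so one must charge against the \emph{total} down-movement of $x_{p_{t_i}}$ rather than a single difference of two snapshots; (ii) choosing the intervals to be genuinely disjoint, which is why $I_1$ is anchored at the previous request to $p_{t_1}$ (and the previous miss step $t_{i-1}$ is a safe left endpoint for $i\ge2$); and (iii) the exact semantics of the na\"ive fetching LP — that decreasing $x_p$ for $p\in B$ can only be done by fetching $B$, at cost proportional to the amount fetched. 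By comparison, the $2k$ space bound is routine once the ``cached $\Rightarrow x_p^t\le\nicefrac12$'' invariant is phrased correctly.
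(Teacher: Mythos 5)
Your proof is correct and follows essentially the same route as the paper: the $2k$ space bound is the standard factor-$2$ counting (cached $\Rightarrow x_p^t\le\nicefrac12$, plus $\sum_p x_p^t\ge n-k$), and the cost bound charges each integral load of $B$ to the fractional fetch mass of $B$ over the disjoint interval since the previous load, showing each such interval carries $>\nicefrac12$ of fractional $B$-fetches.

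If anything, your treatment of the cost charge is a bit more careful than the paper's one-liner. The paper asserts ``$x_{p_t}^{t'}>\nicefrac12$, otherwise we would have loaded it earlier,'' where $t'$ is the previous load time of $B(p_t)$; taken literally this skips the case $x_{p_t}^{t'}\le\nicefrac12$, in which $p_t$ \emph{was} loaded at $t'$ but subsequently evicted. Your case split on the value of $x_{p_{t_i}}$ at the left endpoint of $I_i$ handles exactly this: in the second case you extract an intermediate time $u\in I_i$ with $x_{p_{t_i}}^u>\nicefrac12$ forced by the eviction rule, and then the drop to $x_{p_{t_i}}^{t_i}=0$ still yields $\sum_{v\in I_i}\phi_B^v>\nicefrac12$. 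The anchoring of $I_1$ at the previous request to $p_{t_1}$, and the observation that $\sum_v\phi_B^v$ dominates the total down-movement of any single $x_q$ with $q\in B$, are likewise correct and make the argument airtight.
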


\begin{proof}
	The procedure produces a feasible solution by construction, since $x_{p_t}^t = 0 \leq \nicefrac{1}{2}$. It also violates the cache size constraint by at most a factor of $2$, since no page is present in the integral cache unless $x_p^t \leq \nicefrac{1}{2}$, meaning the fractional cache usage is at least half the integral cache usage.
	
	Finally, to justify that the integral solution has cost at most twice the fractional cost, charge the cost of integrally loading $B(p_t)$ to the fractional decrease of $x_B^t$ since the last time $t'$ at which $B(p_t)$ was loaded. Since $p_t$ had $x_p^{t'} > \nicefrac{1}{2}$ (otherwise we would have loaded it earlier), the fractional cost incurred since time $t'$ was at least $\nicefrac{1}{2} \cdot c_{B(p)}$.
\end{proof}

\begin{corollary}
	When $k=2h$, there is a 2-competitive offline algorithm for block-aware caching with fetching cost.
\end{corollary}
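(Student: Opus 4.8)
The plan is to reduce to \cref{thm:hk_rounding}, after first producing a cheap fractional solution offline. Recall from \cref{sec:basic_LP_formulation} that block-aware caching with fetching cost has a natural polynomial-size LP relaxation with variables $x_p^t$ (one per page $p$ and time $t$); for a cache of size $h$, every integral cache policy yields a feasible point of this LP of equal fetching cost, so the LP optimum is a lower bound on $\copt$, the fetching cost of the optimal offline policy that uses a cache of size $h$. This matters because computing \opt itself is NP-hard (block-aware caching generalizes generalized caching), so one cannot simply feed \opt's own policy into the rounding procedure; the LP relaxation is precisely what keeps the algorithm polynomial time.

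Concretely, I would (i) solve the LP relaxation with cache parameter $h$ to optimality offline in polynomial time, obtaining a fractional solution $x^{\mathrm{LP}}$ that is feasible for a cache of size $h$ and has $c_{\textsc{Fetch}}(x^{\mathrm{LP}}) \le \copt$; and (ii) apply the deterministic rounding procedure of \cref{thm:hk_rounding}, but with the cache-size parameter taken to be $h$ in place of $k$. That theorem then guarantees an integral cache policy using at most $2h$ cache space at every time step, with fetching cost at most $2\,c_{\textsc{Fetch}}(x^{\mathrm{LP}}) \le 2\copt$. Since $k = 2h$, the bound of $2h$ on cache usage is exactly $k$, so the policy is valid for the online cache, and its cost is within a factor $2$ of \opt, which is the claim.

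I do not expect a genuine obstacle: the only things to verify are that the naive LP is a true relaxation of the cache-size-$h$ problem (every feasible integral policy maps to an LP point of the same fetching cost), that it is small enough to optimize offline in polynomial time — both established in \cref{sec:basic_LP_formulation} — and that \cref{thm:hk_rounding} applies verbatim with $h$ playing the role of $k$, which it does since the theorem's only use of that parameter is the clean factor-$2$ blowup in cache usage. The single point worth stating carefully in the write-up is that this factor-$2$ blowup is exactly what the $k = 2h$ regime absorbs.
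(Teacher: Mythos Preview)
Your proposal is correct and is essentially the approach the paper has in mind: the corollary is stated without proof immediately after \cref{thm:hk_rounding}, and your argument spells out the natural way to instantiate it, including the careful point about solving the na\"{\i}ve LP offline (rather than using \opt directly) so that the algorithm runs in polynomial time.
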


We mention briefly that a similar rounding procedure produces a cache policy that is $2$-competitive with the eviction cost of the fractional solution, and also uses at most a factor $2$ more space. If $p_t$ is not in cache at time $t$, fetch it. On the other hand if any page in cache at time $t$ has fractional value $x_p^t < \nicefrac{1}{2}$, evict all of $B(p)$.

\subsection{Lower Bounds for Randomized Algorithms}

Finally we turn to showing our lower bound. Our starting point is the lower bound of \cite{beckmann2021brief} against deterministic algorithms.

\begin{theorem}[Theorem 4.1 of \cite{beckmann2021brief}]
	\label{thm:charlie_lb}
	The competitive ratio of any deterministic online policy for block-aware caching with fetching costs is at least 
	\[\frac{k+(B-1)(h-1)}{k-h+1}\]
	for $h \leq k-B + 1$. 
\end{theorem}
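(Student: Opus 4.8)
The plan is to prove this lower bound by the standard recipe for $(h,k)$-paging lower bounds — an adaptive "cruel" adversary against an arbitrary deterministic online algorithm $\alg$ with cache size $k$, paired with a carefully chosen offline strategy of cache size $h$ and a phase decomposition — instrumented so as to exploit the block structure together with the fact that a batched fetch costs only one unit.

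\textbf{Universe and adversary.} I would work with a universe consisting of $h-1$ blocks $A_1,\dots,A_{h-1}$, each of full size $\bsize$, together with $k-h+2$ singleton pages; the hypothesis $h\le k-\bsize+1$ is precisely the statement that there are at least $\bsize+1$ singletons, and the total page count $(h-1)\bsize+(k-h+2)$ strictly exceeds $k$, so $\alg$ can never simultaneously hold one representative of each block and all the singletons. The adversary proceeds in rounds. In each round it (i) walks the singletons cruelly — always requesting a singleton currently absent from $\alg$'s cache — forcing one fetch per request, and (ii) for each $i$, requests the $\bsize$ pages of $A_i$ one at a time, at separate time steps, in an order chosen so that $\alg$ never holds all of $A_i$ at once. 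Step (ii) is enforceable because $\alg$'s remaining slots are held hostage by the singleton gadget, so $\alg$ must keep evicting pages of $A_i$ to make room; hence $\alg$ pays $\bsize$ separate fetches for every block and about $k+(\bsize-1)(h-1)$ per round in total. Maintaining the invariant "$\alg$'s cache is always too full to contain a complete block" against an adaptive $\alg$ is the delicate point of this step.

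\textbf{Offline cost and phases.} Since $h$ slots cannot hold even two full blocks when $\bsize$ is large, the offline strategy does not keep blocks resident; instead, knowing each round in advance, it brings in each needed block with a single batched fetch of cost $1$ and uses the slack between $h$ and $k$ to coast on the singletons. I would then cut the sequence into phases so that each phase exposes exactly $k-h+1$ pages that are fresh to the offline cache; a Belady/longest-forward-distance offline algorithm then pays at most $k-h+1$ per phase — one batched fetch per fresh block, one fetch per fresh singleton — which is exactly the classical $(h,k)$ bound, except that block re-fetches cost only $1$. Aligning the adversary's rounds with these phases and letting the number of phases grow, the additive start-up terms wash out of the ratio and $\alg$'s cost over $\opt$'s cost tends to $\tfrac{k+(\bsize-1)(h-1)}{k-h+1}$.

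\textbf{Main obstacle.} The crux is to fix the round/phase boundaries so that two competing demands hold at once: within each phase the adversary genuinely forces $\alg$ into $\bsize$ separate fetches on every block — which rests on the "cache always too full for a whole block" invariant and on having enough singletons, exactly where $h\le k-\bsize+1$ enters — while the same boundaries let the offline algorithm batch each block into a single fetch and fault on singletons at amortized rate $1/(k-h+1)$. Reconciling these, and checking the edge cases implied by the constraint on $h$, is where the real work lies; once the boundaries are chosen, the remainder is the routine averaging/potential bookkeeping familiar from classical $(h,k)$-paging lower bounds.
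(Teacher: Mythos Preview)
The paper does not prove this statement. Theorem~\ref{thm:charlie_lb} is quoted verbatim from \cite{beckmann2021brief} and invoked as a black box in Section~\ref{sec:loading} to derive the randomized lower bound; the paper supplies no argument of its own for it. There is therefore no ``paper's proof'' to compare your proposal against.

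As for the proposal itself: the overall architecture --- a cruel adaptive adversary on a universe of $h-1$ full blocks plus $k-h+2$ singletons, a phase decomposition, and a Belady-style offline that batches each block for one unit --- is the right shape for this kind of bound, and the arithmetic $(h-1)\bsize + (k-h+1) = k + (\bsize-1)(h-1)$ for the online cost per phase checks out. But you yourself flag the real difficulty and do not resolve it: you must exhibit a concrete interleaving of singleton and block requests that \emph{simultaneously} (a) keeps $\alg$'s cache too full to ever absorb a whole block in one fetch, forcing $\bsize$ separate fetches per block, and (b) lets the size-$h$ offline batch each block and amortize singleton faults at rate $1/(k-h+1)$. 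Your sketch asserts both invariants hold but does not construct the sequence or verify them against an arbitrary deterministic $\alg$. Until that construction is written down and checked, this is a plan rather than a proof.
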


We now show how to use the online deterministic rounding procedure of \cref{sec:hk_rounding} to derandomize any online algorithm for Block-Aware caching with fetching costs. This proves the main claim of this section:

\begin{theorem}
	The competitive ratio of any randomized policy for block-aware caching with fetching costs is at least 
	\[\frac{2k+(B-1)(h-1)}{4k-2h+2}\]
	for $h \leq k-B + 1$. 
\end{theorem}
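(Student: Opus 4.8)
The plan is to derive the randomized lower bound from the deterministic lower bound of \cref{thm:charlie_lb} by a standard Yao-style argument combined with the online derandomization enabled by \cref{thm:hk_rounding}. Concretely, suppose toward a contradiction that there is a randomized online algorithm $\mathcal{A}$ for block-aware caching with fetching cost and cache size $k$ that is $\rho$-competitive against an oblivious adversary, with $\rho < \frac{2k+(B-1)(h-1)}{4k-2h+2}$. The key observation is that any randomized online algorithm with cache size $k$ induces, in each time step, a fractional solution $x$ to the naive LP of \cref{sec:basic_LP_formulation}: set $x_p^t$ to be the probability that $p$ is missing from $\mathcal{A}$'s cache at time $t$. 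This fractional solution is feasible (since every realization respects the cache bound, the expected cache usage is at most $k$), it is produced online, and its fractional fetching cost equals the expected fetching cost of $\mathcal{A}$, which is at most $\rho \cdot \copt$ where \opt here is the offline optimum with the \emph{same} cache size $k$.

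Next I would feed this online fractional solution into the deterministic online rounding procedure of \cref{thm:hk_rounding}. This yields a \emph{deterministic} online algorithm $\mathcal{B}$ that uses at most $2k$ cache space at every time step, and whose fetching cost is at most twice the fractional fetching cost, hence at most $2\rho \cdot \opt_k$, where $\opt_k$ is the optimal cost achievable offline with cache size $k$. In particular, $\opt_k \le \opt_h$, the optimal cost with the smaller cache $h$ (a larger cache can only help), so $\mathcal{B}$'s cost is at most $2\rho \cdot \opt_h$. Thus $\mathcal{B}$ is a deterministic online algorithm with cache size $2k$ that is $2\rho$-competitive against an offline optimum with cache size $h$. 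Applying \cref{thm:charlie_lb} with the online cache size set to $2k$ (playing the role of ``$k$'' there) and offline cache size $h$, valid as long as $h \le 2k - B + 1$ (which is implied by the hypothesis $h \le k - B + 1$), we get that the competitive ratio of any such deterministic algorithm is at least $\frac{2k+(B-1)(h-1)}{2k-h+1}$. Therefore $2\rho \ge \frac{2k+(B-1)(h-1)}{2k-h+1}$, i.e. $\rho \ge \frac{2k+(B-1)(h-1)}{4k-2h+2}$, contradicting our assumption. This establishes the claimed bound.

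A couple of points need care in the writeup. First, one must confirm that the rounding of \cref{thm:hk_rounding} genuinely operates online when its input is revealed online: the procedure only evicts $p$ when $x_p^t$ crosses $\nicefrac{1}{2}$ and only fetches on a miss, both decisions depending only on the current $x^t$, so this is immediate; I would state it explicitly as the derandomization reduction. Second, one must be careful that the adversary in \cref{thm:charlie_lb} is the same kind of adversary (it is, since deterministic online lower bounds hold against the strongest/offline-optimal adversary) and that the inequality $\opt_k \le \opt_h$ is legitimate — it is, since an offline strategy for cache size $h$ is also valid for cache size $k$. I expect the main (though still modest) obstacle to be precisely this bookkeeping around which cache size plays which role in \cref{thm:charlie_lb}: we substitute $2k$ for the theorem's online cache parameter while keeping $h$ as the offline parameter, and we must check the parameter regime $h \le 2k - B + 1$ holds and that the resulting expression simplifies to the stated ratio. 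Everything else is routine.
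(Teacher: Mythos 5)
Your overall approach matches the paper's exactly: convert the randomized algorithm into an online fractional solution to the na\"ive LP, feed it through the deterministic rounding of \cref{thm:hk_rounding} to get a deterministic online algorithm with cache $2k$ and at most twice the cost, and then invoke \cref{thm:charlie_lb} with online cache parameter $2k$ and offline parameter $h$. Your parameter substitution, the check that $h \le 2k - B + 1$, and the algebra dividing by $2$ are all correct and agree with the paper.

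There is, however, a bookkeeping slip — and it is exactly of the type you yourself flagged at the end as the ``main obstacle.'' You fix $\rho$ to be the competitive ratio of $\mathcal{A}$ against the offline optimum with the \emph{same} cache size $k$, bound $\mathcal{A}$'s cost by $\rho\cdot\opt_k$, and then insert the inequality $\opt_k \le \opt_h$ on your way to applying the deterministic bound. The resulting conclusion is a lower bound on the \emph{cache-$k$} competitive ratio $\text{cost}/\opt_k$. But the theorem (as the paper intends, and as the deterministic \cref{thm:charlie_lb} it is adapted from is phrased) asserts a lower bound on the $(h,k)$-bicriteria competitive ratio $\text{cost}/\opt_h$. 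Since $\opt_h \ge \opt_k$, the cache-$k$ ratio dominates the $(h,k)$ ratio, so a lower bound on the former does \emph{not} imply a lower bound on the latter; your argument lands on the weaker statement. The detour through $\opt_k$ is also entirely unnecessary: as the paper does, just let $c_{\mathcal{R}}$ be the algorithm's expected cost, round to get a deterministic algorithm of cost $\le 2 c_{\mathcal{R}}$ with cache $2k$, and apply \cref{thm:charlie_lb} directly to obtain $2 c_{\mathcal{R}} \ge \frac{2k+(B-1)(h-1)}{2k-h+1}\cdot\opt_h$, giving $c_{\mathcal{R}}/\opt_h \ge \frac{2k+(B-1)(h-1)}{4k-2h+2}$. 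Everything else in your writeup — the online computability of the fractional solution from the request prefix, the fact that the rounding procedure is itself online, and the parameter-regime check — is fine and correctly stated.
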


\begin{proof}
	Suppose there is a randomized online algorithm $\mathcal{R}$ for $(h,k)$-block-aware caching with fetching costs with expected cost $c_\mathcal{R}$. Then we can convert this randomized cache policy online to a fractional solution $x$. To do so, set $x^t_p$ be the expected value of the indicator of whether page $p$ is loaded at time $t$. Note that these expectations can be computed using only the sequence of requests up to and including time $t$. This solution $x$ is feasible to the simple fetching cost LP \eqref{eq:naive_lp}, and furthermore has LP cost $c_\mathcal{R}$. 
	
	Applying \cref{thm:hk_rounding} to the fractional solution $x$ produces an integral cache policy cost at most $2 \cdot c_\mathcal{R}$ and space $2k$.  The claim follows by using the lower bound on the cost of any such policy given by \cref{thm:charlie_lb}, and solving for $c_{\mathcal{R}}$.
\end{proof}

Combining this with the well known $\Omega(\log k)$ lower bound for randomized algorithms for classical paging, we obtain the following consequence.
\begin{corollary}
	When $k= O(h)$, no randomized algorithm has competitive ratio better than $\Omega(B + \log k)$.
\end{corollary}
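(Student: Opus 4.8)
The plan is to derive $\Omega(B+\log k)$ from two independent lower bounds — one forcing $\Omega(B)$ and one forcing $\Omega(\log k)$, both lying in the regime $k=O(h)$ — and then to observe that a single randomized online algorithm must be competitive against the oblivious adversary on all of them, so its competitive ratio is $\Omega(\max\{B,\log k\})=\Omega(B+\log k)$.

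For the $\Omega(B)$ term I would read off the preceding theorem, which gives a randomized lower bound of $\frac{2k+(B-1)(h-1)}{4k-2h+2}$ whenever $h\le k-B+1$. Taking the offline cache to have size $h=k-B+1$, the denominator equals $2k+2B$ and the numerator is at least $(B-1)(h-1)$; the condition that this $h$ satisfy $k=O(h)$ is exactly $B\le(1-\Omega(1))k$, and under it $h=\Theta(k)$, so the numerator is $\Omega(Bk)$, the denominator is $O(k)$, and the ratio is $\Omega(B)$. This elementary estimate is the only real computation.

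For the $\Omega(\log k)$ term I would use that classical unweighted paging with cache size $k$ is a special case of block-aware caching with fetching cost: make every block a singleton (permissible since a block contains \emph{at most} $B$ pages, for any $B\ge 1$), so that fetching and eviction costs agree up to an additive constant, and invoke the classical $\Omega(\log k)$ randomized lower bound against an oblivious adversary (Fiat et al.~\cite{F+}). This instance has $h=k$, hence trivially $k=O(h)$. The two hard instances use different values of $h$, which is harmless: for a fixed constant hidden in $O(\cdot)$ — equivalently, for $B$ at most a fixed fraction of $k$ — both instances lie in the class of instances with $k\le ch$, and the competitive ratio is the supremum over that class. If a common $h$ is preferred, one may instead use the refined $(h,k)$-paging randomized lower bound $\Omega(\log\frac{k}{k-h+1})$ with $h=k-B+1$, which is $\Omega(\log k)$ when $B\le\sqrt k$ and is dominated by the $\Omega(B)$ term otherwise; the two bounds then still sum to $\Omega(B+\log k)$.

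The only point requiring care is the parameter bookkeeping: checking that the $h$ used for the $\Omega(B)$ bound simultaneously meets the hypothesis $h\le k-B+1$ of the preceding theorem and the hypothesis $k=O(h)$ of the corollary (together these force $B=O(k)$, which is automatic), and that the $\Omega(\log k)$ instance is realizable with block size at most $B$. Beyond this there is no obstacle — all the substantive work sits in the preceding theorem and in the classical paging lower bound.
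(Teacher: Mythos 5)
Your proof is correct and follows the same route the paper intends: instantiate the preceding $(h,k)$-randomized lower bound with $h=k-B+1$ to extract the $\Omega(B)$ term, invoke the classical $\Omega(\log k)$ randomized paging lower bound (realized via singleton blocks) for the $\Omega(\log k)$ term, and observe that the maximum of the two lower bounds over instances with $k=O(h)$ gives $\Omega(B+\log k)$. The paper leaves this entirely to the reader with the single sentence ``Combining this with the well-known $\Omega(\log k)$ lower bound\ldots,'' so you are simply supplying the parameter bookkeeping it omits; your remarks on reconciling the two choices of $h$ (or alternatively using the $\Omega(\log\frac{k}{k-h+1})$ form of the classical bound at a common $h$) are the right things to check and are handled correctly.
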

	
	{\footnotesize
		\bibliography{refs}

\newcommand{\etalchar}[1]{$^{#1}$}
\begin{thebibliography}{BGHM20b}

\bibitem[AAA{\etalchar{+}}03]{AAABN03}
N.~Alon, B.~Awerbuch, Y.~Azar, N.~Buchbinder, and J.~Naor.
\newblock The online set cover problem.
\newblock In {\em Symposium on the Theory of Computation.}, pages 100--105,
  2003.

\bibitem[AAK99]{AAK99}
Susanne Albers, Sanjeev Arora, and Sanjeev Khanna.
\newblock Page replacement for general caching problems.
\newblock In {\em Symposium on Discrete Algorithms}, pages 31--40, 1999.

\bibitem[ACER19]{ACER19}
Anna Adamaszek, Artur Czumaj, Matthias Englert, and Harald R{\"{a}}cke.
\newblock An \emph{O}(log \emph{k})-competitive algorithm for generalized
  caching.
\newblock {\em {ACM} Trans. Algorithms}, 15(1):6:1--6:18, 2019.

\bibitem[ACN00]{ACN}
Dimitris Achlioptas, Marek Chrobak, and John Noga.
\newblock Competitive analysis of randomized paging algorithms.
\newblock {\em Theoretical Computer Science}, 234(1-2):203--218, 2000.

\bibitem[BBK99]{BBK}
Avrim Blum, Carl Burch, and Adam Kalai.
\newblock Finely-competitive paging.
\newblock In {\em Symposium on Foundations of Computer Science {(FOCS)}}, page
  450, 1999.

\bibitem[BBN12a]{BBN07}
Nikhil Bansal, Niv Buchbinder, and Joseph Naor.
\newblock A primal-dual randomized algorithm for weighted paging.
\newblock {\em J. {ACM}}, 59(4):19, 2012.

\bibitem[BBN12b]{BBN08}
Nikhil Bansal, Niv Buchbinder, and Joseph Naor.
\newblock Randomized competitive algorithms for generalized caching.
\newblock {\em {SIAM} J. Comput.}, 41(2):391--414, 2012.

\bibitem[BCL{\etalchar{+}}18]{BCLLM18}
S{\'e}bastien Bubeck, Michael~B Cohen, Yin~Tat Lee, James~R Lee, and Aleksander
  M\k{a}dry.
\newblock K-server via multiscale entropic regularization.
\newblock In {\em symposium on theory of computing}, pages 3--16, 2018.

\bibitem[BCN14]{BCN14}
Niv Buchbinder, Shahar Chen, and Joseph Naor.
\newblock Competitive analysis via regularization.
\newblock In {\em Symposium on Discrete algorithms}, pages 436--444, 2014.

\bibitem[BFT96]{BFT96}
A.~Blum, M.~Furst, and A.~Tomkins.
\newblock What to do with your free time: algorithms for infrequent requests
  and randomized weighted caching, 1996.

\bibitem[BGHM20a]{beckmann2020writeback}
Nathan Beckmann, Phillip~B Gibbons, Bernhard Haeupler, and Charles McGuffey.
\newblock Writeback-aware caching.
\newblock In {\em Symposium on Algorithmic Principles of Computer Systems},
  pages 1--15. SIAM, 2020.

\bibitem[BGHM20b]{BGHM20}
Nathan Beckmann, Phillip~B. Gibbons, Bernhard Haeupler, and Charles McGuffey.
\newblock Writeback-aware caching.
\newblock In {\em Algorithmic Principles of Computer Systems, APOCS@SODA,
  2020}, pages 1--15, 2020.

\bibitem[BGM21]{beckmann2021brief}
Nathan Beckmann, Phillip~B Gibbons, and Charles McGuffey.
\newblock Brief announcement: Block-granularity-aware caching.
\newblock 2021.

\bibitem[BN{\etalchar{+}}09]{BNsurvey}
Niv Buchbinder, Joseph~Seffi Naor, et~al.
\newblock The design of competitive online algorithms via a primal--dual
  approach.
\newblock {\em Foundations and Trends{\textregistered} in Theoretical Computer
  Science}, 3(2--3):93--263, 2009.

\bibitem[BNT21]{bansal2021efficient}
Nikhil Bansal, Joseph Naor, and Ohad Talmon.
\newblock Efficient online weighted multi-level paging.
\newblock In {\em Proceedings of the 33rd ACM Symposium on Parallelism in
  Algorithms and Architectures}, pages 94--104, 2021.

\bibitem[CL91]{CL91}
Marek Chrobak and Lawrence~L. Larmore.
\newblock An optimal on-line algorithm for k-servers on trees.
\newblock {\em {SIAM} J. Comput.}, 20(1):144--148, 1991.

\bibitem[EMR18]{EvenMR18}
Guy Even, Moti Medina, and Dror Rawitz.
\newblock Online generalized caching with varying weights and costs.
\newblock In {\em Symposium on Parallelism in Algorithms and Architectures,
  {SPAA}}, pages 205--212, 2018.

\bibitem[FKL{\etalchar{+}}91]{F+}
Amos Fiat, Richard~M. Karp, Michael Luby, Lyle~A. McGeoch, Daniel~Dominic
  Sleator, and Neal~E. Young.
\newblock Competitive paging algorithms.
\newblock {\em Journal of Algorithms}, 12(4):685--699, 1991.

\bibitem[GKKP19]{GuptaK0P19}
Anupam Gupta, Ravishankar Krishnaswamy, Amit Kumar, and Debmalya Panigrahi.
\newblock Elastic caching.
\newblock In {\em Symposium on Discrete Algorithms, {SODA}}, pages 143--156,
  2019.

\bibitem[GKP20]{GKP20}
Anupam Gupta, Amit Kumar, and Debmalya Panigrahi.
\newblock Caching with time windows.
\newblock In {\em Symposium on Theory of Computing}, 2020.

\bibitem[GL20a]{dynamicsubmod}
Anupam Gupta and Roie Levin.
\newblock Fully-dynamic submodular cover with bounded recourse.
\newblock In {\em 2020 IEEE 59th Annual Symposium on Foundations of Computer
  Science (FOCS)}. IEEE, 2020.

\bibitem[GL20b]{gupta2020online}
Anupam Gupta and Roie Levin.
\newblock The online submodular cover problem.
\newblock In {\em Proceedings of the Fourteenth Annual ACM-SIAM Symposium on
  Discrete Algorithms}, pages 1525--1537. SIAM, 2020.

\bibitem[HGDS14]{HasanGDS14}
Syed Hasan, Sergey Gorinsky, Constantine Dovrolis, and Ramesh~K. Sitaraman.
\newblock Trade-offs in optimizing the cache deployments of cdns.
\newblock In {\em 2014 {IEEE} Conference on Computer Communications, {INFOCOM}
  2014, Toronto, Canada, April 27 - May 2, 2014}, pages 460--468. {IEEE}, 2014.

\bibitem[Ira96]{I98}
Sandy Irani.
\newblock Competitive analysis of paging: A survey.
\newblock In {\em Proc. of the Dagstuhl Seminar on Online Algorithms}, 1996.

\bibitem[Ira97]{I97}
Sandy Irani.
\newblock Page replacement with multi-size pages and applications to web
  caching.
\newblock In {\em Symposium on Theory of computing}, pages 701--710, 1997.

\bibitem[Ira02]{I02}
S.~Irani.
\newblock Randomized weighted caching with two page weights.
\newblock {\em Algorithmica}, 32(4):624--640, 2002.

\bibitem[LV18]{LykourisV18}
Thodoris Lykouris and Sergei Vassilvitskii.
\newblock Competitive caching with machine learned advice.
\newblock In {\em International Conference on Machine Learning, {ICML}, 2018},
  volume~80, pages 3302--3311, 2018.

\bibitem[MS91]{GS}
Lyle~A. McGeoch and Daniel~D. Sleator.
\newblock A strongly competitive randomized paging algorithm.
\newblock {\em Algorithmica}, 6(6):816--825, 1991.

\bibitem[SBLL20]{SongBLL20}
Zhenyu Song, Daniel~S. Berger, Kai Li, and Wyatt Lloyd.
\newblock Learning relaxed belady for content distribution network caching.
\newblock In Ranjita Bhagwan and George Porter, editors, {\em 17th {USENIX}
  Symposium on Networked Systems Design and Implementation, {NSDI} 2020, Santa
  Clara, CA, USA, February 25-27, 2020}, pages 529--544. {USENIX} Association,
  2020.

\bibitem[ST85]{ST85}
Daniel~D. Sleator and Robert~E. Tarjan.
\newblock Amortized efficiency of list update and paging rules.
\newblock {\em Communications of the ACM}, 28(2):202--208, 1985.

\bibitem[Von07]{vondrak2007submodularity}
Jan Vondr{\'a}k.
\newblock Submodularity in combinatorial optimization.
\newblock 2007.

\bibitem[Wol82]{Wolsey1982}
L.~A. Wolsey.
\newblock An analysis of the greedy algorithm for the submodular set covering
  problem.
\newblock {\em Combinatorica}, 2(4):385--393, Dec 1982.

\bibitem[You91]{Y91}
Neal Young.
\newblock On-line caching as cache size varies.
\newblock In {\em Symposium on Discrete algorithms}, pages 241--250, 1991.

\bibitem[You94]{Y94}
Neal~E. Young.
\newblock The k-server dual and loose competitiveness for paging.
\newblock {\em Algorithmica}, 11(6):525--541, 1994.

\bibitem[You98]{Y98}
Neal~E. Young.
\newblock On-line file caching.
\newblock In {\em Symposium on Discrete algorithms}, pages 82--86, 1998.

\end{thebibliography}
		\bibliographystyle{alpha}
	}

	\appendix

\section{Appendix}

\label{sec:appendix}

\subsection{Deferred Proofs}

\label{sec:extra_proofs}

\betaoff*
\begin{proof}
	
	Consider the following instance. For any $\bsize$, let $n = 2\bsize^2$ pages be organized into $2\bsize$ blocks of size $\bsize$. Let $P$ be the first $\bsize$ blocks and $Q$ be the second $\bsize$ blocks. We set $k=\bsize^2$, and fill the cache initially with all the pages of the $P$ blocks. The request sequence consists of rounds. For $i=1,\dots,\bsize$, in round $i$ request the first $\bsize-i$ pages of each $P$ block, and the first $i$ of the $Q$ blocks in their entirety, and repeat this sequence $L$ times within the round. For sufficiently large constant $L$, the optimal solution must have precisely the requested pages of a round in its cache. Thus, in round $i$ it evicts the $(\bsize-i+1)^{th}$ page of each $P$ block and fetches the $i^{th}$ $Q$ block in its entirety. The fetching cost of this solution is $\bsize$, while the eviction cost is $\bsize^2$.
	
	To see the other direction, observe that if we instead start the cache with the pages of the $Q$ blocks, and for $i=1,\dots,\bsize$ we make round $i$ request precisely the pages \emph{not} requested in round $i$ above (and once again repeat this sequence $L$ times), the optimal solution will always evict one $Q$ block in its entirety and fetch a single page from each $P$ block in each round. Here the fetching cost is $\bsize^2$ and the eviction cost is $\bsize$.
\end{proof}

\fsubmod*

\begin{proof}
	Consider the function $g^\tau$, where 
	\begin{align*}
		g^\tau(S) &:= \left|\{p : p \text{ is \textit{missing} at time } \tau \text{ according to } S\}\right| \\
		&= \left| \bigcup_{\phi \in S} \{p : p \text{ is \textit{missing} at time } \tau \text{ according to } \phi\}\right|.\end{align*}
	$g^\tau$ is a coverage function, and hence it is submodular.
	The function $f^\tau$ is the minimum of $g^\tau$ and the constant function $n-k$, and so $f^\tau$ is also submodular.
\end{proof}

\smaxint*
\begin{proof}
	It suffices to show that if $\phi$ violates a constraint $(S,\tau)$, and $(B_0,t_0)$ is such that $\phi_{B_0}^{t_0} = 1$, then $x$ also violates $(S \cup \{(B_0,t_0)\}, \tau)$.  Since $x$ is violated,
	\begin{align*}
		n - k - f_\tau(S) &> \sum_{B,t} f_\tau((B,t) \mid S) \cdot \phi_{B}^t \\
		&= \sum_{(B,t) \neq (B_0,t_0)} f_\tau((B,t) \mid S) \cdot \phi_{B}^t + f_\tau((B_0, t_0) \mid S)\\
		&\geq \sum_{(B,t) \neq (B_0,t_0)} f_\tau((B,t) \mid S \cup \{(B_0, t_0)\}) \cdot \phi_{B}^t \\
		& \quad +  f_\tau((B_0, t_0) \mid S)
	\end{align*}
	where the second inequality above used submodularity. Rearranging gives that
	\[\sum_{(B,t) \neq (B_0,t_0)} f_\tau((B,t) \mid S \cup \{(B_0, t_0)\}) \cdot \phi_{B}^t < n - k - f_\tau(S \cup \{(B_0, t_0)\}). \qedhere\]
\end{proof}

\structsol*

\begin{proof}
	To guarantee the first property, every time a page from a block $B$ is evicted to extent $\nicefrac{1}{2}$, evict the entire block $B$ for a cost of $c_B$. Charge this eviction to the evictions that caused this page to go from $0$ to $1/2$, which cost at least $c_B/2$.
	
	To ensure the second property, consider the following online algorithm. 
	
	\begin{algorithm}
		\caption{Structure Solution}
		\label{alg:structsol}
		\begin{algorithmic}[1]	
			\State Define solution $\widetilde x$ such that $\widetilde \phi_B^t = \phi_B^t + \mathbbm{1} \{\phi_B^t \geq  \nicefrac{1}{2}\} \cdot (1 - \phi_B^t)$.
			\For{block $B$}
			\State Set $t_B \leftarrow 0$.
			\For{time $t \in [T]$}
			\State Let $\Delta =  \sum_{t' = t_B + 1}^t \widetilde \phi_{B}^{t'}$.
			\If{$\Delta  \geq \nicefrac{1}{4k^2}$}
			\State $\varphi_B^t \leftarrow \Delta$.
			\State $t_B \leftarrow t$. \label{line:tb_set}
			\EndIf
			\EndFor
			\EndFor
			\State Output $\widetilde \varphi = \min(2 \cdot \varphi, 1)$.
		\end{algorithmic}
	\end{algorithm}
	The structural guarantee that every nonzero coordinate has $\varphi_B^t \geq \nicefrac{1}{4k^2}$ holds by construction. The cost is also less than $2 \cdot c(\phi)$ by construction.
	
	It remains to show $\varphi$ is feasible. Consider any constraint of the form:
	\[\sum_{B,t} f_\tau((B,t) \mid S) \cdot \phi_B^t \geq n - k - f_\tau(S)\]
	For a block $B$, let $\tau_B$ be the last time before $\tau$ that $t_B$ was set to in \cref{line:tb_set}. By construction 
	\begin{align*}
		\sum_{t = \tau_B+1}^\tau \phi_B^t  \leq \frac{1}{4k^2}. 
	\end{align*}
	Since $f((B,t) | S) \leq k$, and by the property that every $p$ has $x_p^t \in [0,\nicefrac{1}{2}]\cup\{1\}$, there are at most $2k$ blocks with nonzero pages in cache. This also means
	\begin{align*}
		\sum_B \sum_{t = \tau_B+1}^\tau f((B,t) \mid S) \cdot \phi_B^t  \leq \frac{1}{2}
	\end{align*}
	and hence
	\begin{align*}
		\sum_{B,t} f_\tau ((B,t) \mid S) \cdot \varphi_B^t \geq n - k - f_\tau(S) - \frac{1}{2}.
	\end{align*}

	If $n - k - f_\tau(S) = 0$, then the constraint is also trivially satisfied by $\varphi$. Else $n - k - f_\tau(S) \geq 1$. To conclude, note that since $S$ is maximal-integral, and $\phi$ has no coordinates $\phi_B^t \in (\nicefrac{1}{2}, 1)$:
	\begin{align*}
		\sum_{t\leq \tau_0} \sum_B f((B,t) \mid S) \cdot \varphi_B^t &= 2\sum_{t\leq \tau_0} \sum_B f((B,t) \mid S) \cdot \phi_B^t \\
		&\geq 2\left(n - k - f_\tau(S) - \frac{1}{2} \right) \\
		&\geq n - k - f_\tau(S)
	\end{align*}
	If $\varphi$ satisfies all integral-maximal constraints, \cref{claim:S_max} implies it also satisfies all other constraints, and the lemma statement follows.
\end{proof}

\loading*
\begin{proof}
	Let $\overline c(z)$ and $\overline c_{\textsc{Fetch}}(z)$ be the classic paging eviction/fetching cost of $z$, i.e. the cost if page fetches/evictions cannot be batched in blocks. The difference between the total fetching cost and eviction cost paid for a single page is at most $c(B(p))$, and hence $\overline c(z) =  \overline c_{\textsc{Fetch}}(z) \pm \sum_{B \in \mathcal{B}} c_B \cdot \bsize$. On the other hand, $\overline c(z) \leq \bsize \cdot c(z)$. Combining these observations:
	\[c_{\textsc{Fetch}}(z) \leq \overline{c}_{\textsc{Fetch}}(z) \leq \overline{c}(z) + \bsize \cdot \sum_{B \in \mathcal{B}} c_B \leq \bsize \cdot \left(c(z) +   \sum_{B \in \mathcal{B}} c_B\right). \qedhere\]
\end{proof}

\subsection{The Natural LP has $\Omega(\beta)$ Integrality Gap} \label{sec:basic_LP_formulation}

Consider the following simple LP formulation, where $\sigma \in \{-1, 1\}$ is a fixed constant. We use $x_p^t$ for the fraction of page $p$ missing from the cache at time $t$.
\begin{align}
	\label{eq:naive_lp}
	\begin{array}{|rl|}
		\hline  & \\
		\displaystyle \min_{\phi,x} & \displaystyle \sum_{B,t} c_B \cdot \phi_B^t \\
		\text{subject to} & \\ & \\
		\forall t \in [T]: & x_{p(t)}^t = 0 \\ & \\
		\forall t \in [T], 
		\forall B \in \mathcal{B}, 
		\forall p \in B: & \phi_B^t \geq \sigma \left(x_p^t - x_p^{t-1}\right) \\  & \\
		\forall t \in [T]: & \sum_{p} x_p^t \geq n-k \\ & \\
		\forall t \in [T],
		\forall B \in \mathcal{B}: & \phi_B^t \in [0,1] \\
		\forall t \in [T],
		\forall p \in [n]: & x_p^t \in [0,1] \\
		& \\ \hline
	\end{array}
\end{align}

If $\sigma = 1$, this is the eviction cost model and $\phi_B^t$ denote the fractional extent to which $B$ is evicted at time $t$; if $\sigma = -1$, this is the fetching cost model and $\phi_B^t$ denote the fractional extent to which $B$ is fetched at time $t$.

Unfortunately, for both fetching and eviction cost models, this LP has an integrality gap of $\Omega(\bsize)$. Consider the following instance in which $n=2\bsize$ pages are divided into two blocks $B_1$ and $B_2$. The cache is of size $k=2\bsize-1$, and is initially empty. The request sequence repeats for several rounds. In each round, it requests first all pages from $B_1$ and then all pages from $B_2$.

The integral algorithm must pay at least $1$ per round, since $2\bsize$ pages are requested and the cache is of size $2\bsize - 1$. On the other hand, the fractional solution begins by loading both blocks to extent $(\bsize-1)/\bsize$. Subsequently, when $B_i$ is requested for $i \in \{1,2\}$, it loads $B_i$ to extent $1$ and $B_j$ (where $j\neq i$) to extent $(\bsize-1)/\bsize$. Hence both the fractional fetching and eviction costs per round are $2/\bsize$. We summarize this observation in the following theorem.

\begin{theorem}
	The simple LP relaxation for block-aware caching with both fetching/eviction cost models has an integrality gap of $\Omega(\bsize)$.
\end{theorem}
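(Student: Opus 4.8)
The plan is to instantiate the concrete instance described just above the theorem and to separately lower bound the integral optimum and exhibit a cheap fractional solution. Recall the instance: $n=2\bsize$ pages split into two blocks $B_1,B_2$ of size $\bsize$ with unit costs, cache size $k=2\bsize-1$, an initially empty cache, and a request sequence of $R$ identical rounds, each round requesting all pages of $B_1$ followed by all pages of $B_2$. I will show that every integral policy pays at least $R$ while there is a solution feasible for \eqref{eq:naive_lp} (for either sign $\sigma$) of cost $2R/\bsize+O(1)$; letting $R\to\infty$ then yields an integrality gap of $\bsize/4=\Omega(\bsize)$ in both cost models.

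For the integral lower bound I would argue one round at a time. Within a single round $2\bsize$ distinct pages are requested but the cache holds only $2\bsize-1$ of them; hence in the fetching model at least one of these pages is absent at the start of the round and is fetched during it, at cost at least $1$, and in the eviction model, were no eviction to occur during the round the cache could only grow and would end the round holding all $2\bsize$ requested pages, contradicting its size, so again cost at least $1$ is incurred. Because the time steps of different rounds are disjoint, these per-round charges do not double count, giving integral cost at least $R$.

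For the fractional upper bound I would write the solution down explicitly and verify feasibility step by step. Maintain every page at missing value $1/\bsize$ except, during the phase of a round in which its block is being requested, drop that block's pages to $0$; then $x_{p_t}^t=0$ holds automatically, and $\sum_p x_p^t\ge 1=n-k$ holds because at every step at least one full block sits at total missing value $\bsize\cdot(1/\bsize)=1$. The cost bookkeeping is the one place needing care: because \eqref{eq:naive_lp} uses a single variable $\phi_B^t$ shared by all pages of $B$, raising (or dropping) all $\bsize$ pages of a block by $1/\bsize$ within a single time step is paid by one $\phi_B^t=1/\bsize$, so each block contributes only $1/\bsize$ per round in the cost-incurring direction (which of the rise or the drop is charged depends on the sign $\sigma$). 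Thus the per-round cost is $2/\bsize$, and leaving the empty initial cache costs $O(1)$ once, for a total of $2R/\bsize+O(1)$.

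The real, and rather mild, obstacle is not a deep idea but the discipline of interleaving the fractional increases and decreases so that both $x_{p_t}^t=0$ and $\sum_p x_p^t\ge 1$ hold at every individual time step, and of ensuring each block's refill of $1/\bsize$ is absorbed by exactly one $\phi_B^t$ rather than spread over $\bsize$ variables. Once that is in place the ratio $R/(2R/\bsize+O(1))\ge\bsize/4$ for $R\ge\bsize$ follows immediately, proving the claimed $\Omega(\bsize)$ integrality gap.
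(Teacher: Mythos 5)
Your proposal is correct and follows exactly the paper's construction: the same two-block instance with $n=2\bsize$, $k=2\bsize-1$, repeated rounds, the same per-round integral lower bound of $1$, and the same fractional solution keeping the non-active block at missing value $1/\bsize$ per page. The paper's own argument is stated very tersely; you supply the missing bookkeeping (that all $\bsize$ page-changes in a block must be packed into a single time step so that they are absorbed by one $\phi_B^t=1/\bsize$, and that the discrete feasibility constraints $x_{p_t}^t=0$ and $\sum_p x_p^t\ge 1$ hold at every step including the transitions), which is exactly the care the paper implicitly relies on.
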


\end{document}